\let\originalleft\left
\let\originalright\right
\renewcommand{\left}{\mathopen{}\mathclose\bgroup\originalleft}
\renewcommand{\right}{\aftergroup\egroup\originalright}
\newlist{romanlist}{enumerate}{3}
\setlist[romanlist]{label=\roman*),ref=(\roman*)}
\newcommand{\myStep}[2]{{\bf Step #1} --- #2\\}
\begin{document}

\newcommand{\cF}{\mathcal{F}}
\newcommand{\cR}{\mathcal{R}}
\newcommand{\cS}{\mathcal{S}}
\newcommand{\cT}{\mathcal{T}}
\newcommand{\cW}{\mathcal{W}}
\newcommand{\ep}{\varepsilon}
\newcommand{\rD}{{\rm D}}
\newcommand{\re}{{\rm e}}

\newcommand{\removableFootnote}[1]{\footnote{#1}}

\newtheorem{theorem}{Theorem}[section]
\newtheorem{corollary}[theorem]{Corollary}
\newtheorem{lemma}[theorem]{Lemma}
\newtheorem{proposition}[theorem]{Proposition}

\theoremstyle{definition}
\newtheorem{definition}{Definition}[section]
\newtheorem{example}[definition]{Example}

\theoremstyle{remark}
\newtheorem{remark}{Remark}[section]


\title{
Robust chaos in orientation-reversing and non-invertible two-dimensional piecewise-linear maps.
}

\author[1]{I.~Ghosh}
\author[1]{R.~McLachlan}
\author[1]{D.~Simpson}

\affil[1]{School of Mathematical and Computational Sciences\\
Massey University\\
Colombo Road, Palmerston North, 4410\\
New Zealand}

\maketitle


\begin{abstract}
This paper concerns the two-dimensional border-collision normal form --- a four-parameter family of piecewise-linear maps generalising the Lozi family and relevant to diverse applications. The normal form was recently shown to exhibit a chaotic attractor throughout an open region of parameter space. This was achieved by constructing a trapping region in phase space and an invariant expanding cone in tangent space, but only allowed parameter combinations for which the normal form is invertible and orientation-preserving. This paper generalises the construction to include the non-invertible and orientation-reversing cases. This provides a more complete and unified picture of robust chaos by revealing its presence to be disassociated from the global topological properties of the map. We identify a region of parameter space in which the map exhibits robust chaos, and show that part of the boundary of this region consists of bifurcation points at which the chaotic attractor is destroyed. 
\end{abstract}

\section{Introduction}
\label{sect:intro}

Robust chaos refers to the persistence of a chaotic attractor throughout open regions of the parameter space of a family of dynamical systems \cite{Gl17}.  Such robustness is essential for chaos-based cryptography \cite{KoLi11} and can be helpful to devices that have advantageous functional characteristics when operated in a chaotic regime, examples include power converters \cite{DeHa96}, optical resonators \cite{LiDi13}, and energy harvesters \cite{KuAl16}.

Robust chaos occurs in diverse settings \cite{GoGo18, ZeSp12}. Its occurrence in piecewise-linear maps was popularised by Banerjee et al.~\cite{BaYo98}, and later Glendinning and Simpson \cite{GlSi21} proved the presence of robust chaos in their setting. They provided {\em explicit} parameter values yielding robust chaos, compared to earlier results giving {\em implicit} conditions in more abstract settings \cite{Pe92,Ry04,Yo85}, but only dealt with two-dimensional, orientation-preserving maps. Piecewise-linear maps that are not orientation-preserving are arguably more physically relevant than those that are orientation-preserving, as discussed below. The purpose of this paper is to extend the results of \cite{GlSi21}
to the non-orientation-preserving cases.  This is also a necessary first step towards generalising the results to higher dimensions.

Specifically, we study the family

\begin{equation}
f_\xi(x,y) = \begin{cases}
\big( \tau_L x + y + 1, -\delta_L x \big), & x \le 0, \\
\big( \tau_R x + y + 1, -\delta_R x \big), & x \ge 0,
\end{cases}
\label{eq:BCNF2}
\end{equation}
known as the two-dimensional {\em border-collision normal form}.
For any fixed parameter point $\xi = (\tau_L,\delta_L,\tau_R,\delta_R) \in \mathbb{R}^4$,
we are interested in the behaviour of iterations $(x,y) \mapsto f_\xi(x,y)$ on $\mathbb{R}^2$.
The family \eqref{eq:BCNF2} is continuous but non-differentiable on the line $x = 0$, termed the {\em switching manifold},
and is a normal form in the sense that any two-dimensional, piecewise-linear, continuous map
with a single switching manifold and satisfying a genericity condition can be transformed to \eqref{eq:BCNF2} through a change of coordinates \cite[Appendix A]{Si22e}.
It was originally formulated by Nusse and Yorke \cite{NuYo92}
with an additional parameter $\mu$ that when varied through zero brings about a border-collision bifurcation \cite{BaGr99}.
The simplicity of \eqref{eq:BCNF2} belies the incredible complexity of its dynamics;
it can exhibit two-dimensional attractors \cite{Gl16e, GlWo11},
large numbers of coexisting attractors \cite{PuRo18, Si14, Si22c},
and Arnold tongues with a unique sausage-string geometry \cite{Si17c}.
In the special case $\tau_L = -\tau_R$ and $\delta_L = \delta_R$,
\eqref{eq:BCNF2} reduces to the Lozi family \cite{Lo78}.

At any point with $x \ne 0$ the derivative of \eqref{eq:BCNF2} has determinant
\begin{equation}
\det \big( ({\rm D} f_\xi)(x,y) \big) =
\begin{cases}
\delta_L, & x < 0, \\
\delta_R, & x > 0.
\end{cases}
\nonumber
\end{equation}
Consequently we have the following cases.
\begin{itemize}
\setlength{\itemsep}{0pt}
\item
If $\delta_L > 0$ and $\delta_R > 0$ then $f_\xi$ is orientation-preserving.
This is the most natural case to consider from an applied perspective
because Poincar\'e maps derived from time-reversible flows on $\mathbb{R}^n$ are necessarily orientation-preserving \cite{Wi88}.
\item
If $\delta_L < 0$ and $\delta_R < 0$ then $f_\xi$ is orientation-reversing.
This case was considered in the seminal work of Misiurewicz \cite{Mi80} for the Lozi map. Such maps can be embedded into higher dimensional orientation-preserving maps and in this way help us understand high-dimensional chaos in physical systems \cite{KaSh21}.
\item
If $\delta_L$ and $\delta_R$ have opposite signs then $f_\xi$ is non-invertible
with the left and right half-planes mapping onto either the top or the bottom half-plane.
Such maps, except piecewise-smooth instead of piecewise-linear,
apply to a wide range of power converters \cite{BaVe01,DiTs02,ZhMo03}.
For instance, the boost converter described by Banerjee et al.~\cite{BaYo98}
regulates output voltage via a switch that is activated whenever the current reaches a threshold value.
This causes the flow (in phase space) to fold back over itself resulting in a stroboscopic map that
is piecewise-smooth (due to the switch) and non-invertible (due to the folding) \cite{De92, FaSi22}.
\item
If one of $\delta_L$ and $\delta_R$ is zero then $f_\xi$ is non-invertible with one half-plane mapping onto a line \cite{Ko04}.
Such maps arise as leading-order approximations to grazing-sliding bifurcations in relay control systems \cite{DiKo02} and mechanical systems with stick-slip friction \cite{DiKo03,SzOs08}. This is because trajectories of the differential equations become constrained to a codimension-one discontinuity surface
causing the range of one piece of the Poincar\'e map to have one less dimension than the domain of the map \cite{DiKo02}. This also occurs for the Hickian trade cycle model of Puu et al.~\cite{PuSu06} and the influenza outbreak model of Roberts et al.~\cite{RoHi19b}.
\end{itemize}

Let
\begin{equation}
\Phi = \left\{ \xi \in \mathbb{R}^4 \,\middle|\, \tau_L > |\delta_L + 1|,\, \tau_R < |\delta_R + 1| \right\}
\label{eq:Phi}
\end{equation}
denote the set of all $\xi$ for which $f_\xi$ has two saddle fixed points (Lemma \ref{le:Phi}).
Banerjee {\em et al.}~\cite{BaYo98} considered $\xi \in \Phi$ in the orientation-preserving case
and showed that on one side of a homoclinic bifurcation
the stable manifold of one fixed point has transverse intersections with the unstable manifolds of both fixed points
and argued that this implies the existence of a chaotic attractor.
Their conclusion was verified rigorously by Glendinning and Simpson~\cite{GlSi21} using the methodology of Misiurewicz \cite{Mi80}.
First, a forward invariant region $\Omega \subset \mathbb{R}^2$ was identified by using one of the unstable manifolds,
and this was perturbed into a {\em trapping region} that necessarily contains a topological attractor.
Second, an {\em invariant expanding cone} $\Psi \subset T \mathbb{R}^2$ (see \S\ref{sec:cone})
was constructed by using eigenvectors of the two pieces of ${\rm D} f_\xi$.
The existence of this object implies that nearby forward orbits diverge exponentially
and that the attractor is chaotic in the sense of having a positive Lyapunov exponent. 
In general this approach works brilliantly for piecewise-linear maps because the invariance and expansion properties can be verified by simple, explicit computations.
For more complicated constructions the verification is best done on a computer \cite{GlSi22b}.
The construction is robust to nonlinear perturbations to the pieces of the map
and has been used to show that chaos persists for intervals of parameter values beyond border-collision bifurcations \cite{SiGl21}.
Recently this approach has also been applied to piecewise-smooth maps with a square-root singularity that arise for mechanical systems with hard impacts \cite{MiLi19}.
Further techniques can reveal more properties of the attractor,
such as sensitive dependence on initial conditions \cite{GhSi22b},
continuity with respect to $\xi$ \cite{AlPu17,Ku22},
and the presence of an SRB measure \cite{Gl17}.

In this paper, we extend the construction of \cite{GlSi21} to the orientation-reversing and non-invertible cases.
We obtain a subset $\Phi_{\rm trap} \subset \Phi$ in which $f_\xi$ has a trapping region
and another subset $\Phi_{\rm cone} \subset \Phi$ in which $f_\xi$ has an invariant expanding cone.
It follows that $f_\xi$ has a chaotic attractor for all $\xi \in \Phi_{\rm trap} \cap \Phi_{\rm cone}$.
This is an open subset of parameter space, hence the chaos is robust with respect to the family $f_\xi$. We expect it is also robust to nonlinear perturbations to the pieces of the map, as in \cite{SiGl21}.
Boundaries of $\Phi_{\rm trap} \cap \Phi_{\rm cone}$ are where some aspect of the construction fails.
As shown below, three of these boundaries correspond to bifurcations where the chaotic attractor is destroyed.
Beyond the other boundaries the chaotic attractor appears to persist and we believe robust chaos could be verified on a larger subset of parameter space by using a more complicated construction, e.g.~\cite{Si22e},
but our aim here is not to optimise the subset, only to obtain a reasonably large subset
for both negative and positive values of $\delta_L$ and $\delta_R$ by using a construction that is both natural and simple.

The remainder of this paper is organised as follows.
We start in \S\ref{sec:main} by calculating the 
stable and unstable manifolds of the fixed points.
Constraints on the geometry of the manifolds as they are extended outwards from the fixed points give rise to our definition of $\Phi_{\rm trap}$.
Here we also define $\Phi_{\rm cone}$
and state our main result (Theorem \ref{thm:attractor}) for the existence of a chaotic attractor.
Being four-dimensional the sets $\Phi_{\rm trap}$ and $\Phi_{\rm cone}$ are difficult to visualise;
we show a range of two-dimensional cross-sections to give some impressions of their size and shape.

In \S\ref{sec:for_inv_trap} we use the stable manifold of one of the fixed points
to form a region $\Omega \subset \mathbb{R}^2$ and show it is forward invariant under $f_\xi$ for any $\xi \in \Phi_{\rm trap}$.
We then show how $\Omega$ can be perturbed into a trapping region for any $\xi \in \Phi_{\rm trap}$.
In \S\ref{sec:cone} we identify a cone that is both invariant and expanding for any $\xi \in \Phi_{\rm cone}$
and prove Theorem \ref{thm:attractor}.
In \S\ref{sec:topologies} we further study the extent to which $\Phi_{\rm trap}$ and $\Phi_{\rm cone}$ cover parameter space.  We show a typical cross-section of $\Phi_{\rm trap} \cap \Phi_{\rm cone}$ and overlay numerical simulations whereby the presence robust chaos is estimated from forward orbits.  This helps clarify which boundaries of $\Phi_{\rm trap}$ and $\Phi_{\rm cone}$ correspond to bifurcations and how much of the true robust chaos region is detected by the straight-forward constructions.  Final remarks are provided in \S\ref{sec:dis}.

\section{Sufficient conditions for a chaotic attractor}
\label{sec:main}

For any $\xi \in \Phi$ the map $f_\xi$ has fixed points
\begin{align}
X &= \left(\frac{-1}{\tau_R-\delta_R-1}, \frac{\delta_R}{\tau_R-\delta_R-1}\right),
\label{eq:X} \\
Y &= \left(\frac{-1}{\tau_L-\delta_L-1}, \frac{\delta_L}{\tau_L-\delta_L-1}\right),
\label{eq:Y}
\end{align}
where $X$ is in the right half-plane and $Y$ is in the left half-plane.
The stability multipliers associated with $X$ and $Y$
are the eigenvalues of the Jacobian matrices
\begin{align}
A_L &= \begin{bmatrix} \tau_L & 1 \\ -\delta_L & 0 \end{bmatrix}, &
A_R &= \begin{bmatrix} \tau_R & 1 \\ -\delta_R & 0 \end{bmatrix}.
\label{eq:ALAR}
\end{align}
Since $\xi \in \Phi$, $A_L$ has eigenvalues $\lambda_L^s \in (-1,1)$ and $\lambda_L^u>1$,
while $A_R$ has eigenvalues $\lambda_R^s \in (-1,1)$ and $\lambda_R^u<-1$.
This implies $X$ and $Y$ are saddles.
In fact, $\Phi$ is the set of all parameter combinations for which
$f_\xi$ has two saddle fixed points:

\begin{lemma}
The map $f_\xi$ has two saddle fixed points if and only if $\xi \in \Phi$.
\label{le:Phi}
\end{lemma}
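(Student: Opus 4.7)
The plan is to first locate all candidate fixed points of $f_\xi$ and then characterise when each is a saddle. Because $f_\xi$ is continuous and agrees with the left (resp.\ right) affine piece on $\{x\le 0\}$ (resp.\ $\{x\ge 0\}$), any fixed point either lies in one of the closed half-planes, where it must solve the corresponding linear fixed-point equation, or lies on the switching manifold $\{x=0\}$. Substituting $x=0$ into $f_\xi$ yields $(y+1,0)$, which can never equal $(0,y)$, so no fixed point sits on the switching manifold. Solving the two linear systems produces exactly the candidates $X$ and $Y$ of \eqref{eq:X} and \eqref{eq:Y}, and each is a bona fide fixed point of $f_\xi$ precisely when it lies in the correct half-plane.

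I would then decompose the equivalence into a half-plane condition and a saddle condition for each fixed point. Reading off \eqref{eq:Y}, $Y$ lies in the closed left half-plane iff $\tau_L - \delta_L - 1 > 0$, while \eqref{eq:X} shows $X$ lies in the closed right half-plane iff $\tau_R - \delta_R - 1 < 0$. For the saddle conditions I would work with the characteristic polynomial $p(\lambda) = \lambda^2 - \tau \lambda + \delta$. Since $p$ opens upward, the associated $2\times 2$ matrix has real eigenvalues with one in $(-1,1)$ and the other outside $[-1,1]$ if and only if $p(1)p(-1)<0$. A direct computation gives $p(1)p(-1) = (1+\delta)^2 - \tau^2$, so this reduces to $|\tau| > |\delta+1|$; I would also briefly rule out complex eigenvalues on the unit circle (they require $|\delta|=1$ with $|\tau|<2$, inconsistent with $|\tau|>|\delta+1|$). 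Applying this criterion to $A_L$ and $A_R$ yields the two saddle inequalities.

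The final step is to combine, for each fixed point, the half-plane inequality with the saddle inequality and check that the joint system collapses to the pair of inequalities defining $\Phi$ in \eqref{eq:Phi}. For $Y$ this amounts to verifying that $\tau_L > \delta_L + 1$ together with $|\tau_L| > |\delta_L+1|$ is equivalent to the single inequality in \eqref{eq:Phi}, which is a short sign analysis keyed on the sign of $\delta_L+1$ (any $\tau_L \le 0$ satisfying $|\tau_L|>|\delta_L+1|$ would force $\tau_L < -|\delta_L+1| \le \delta_L+1$, contradicting the half-plane condition, so $\tau_L$ must be positive and the two inequalities merge). The analogous bookkeeping for $X$ handles the right-hand half of \eqref{eq:Phi}. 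The reverse implication is immediate: by the first step any two saddle fixed points of $f_\xi$ must coincide with $X$ and $Y$, so all four conditions hold and $\xi \in \Phi$. I expect the only real obstacle to be this final case analysis; the remainder is a direct calculation with the characteristic polynomial.
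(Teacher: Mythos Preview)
Your proposal is correct and follows essentially the same route as the paper: identify the only possible fixed points as $X$ and $Y$, impose the half-plane conditions $\tau_L>\delta_L+1$ and $\tau_R<\delta_R+1$, impose the saddle conditions, and combine. The paper's proof is terser---it simply asserts the saddle inequalities $\tau_L>-(\delta_L+1)$ and $\tau_R<-(\delta_R+1)$ by appeal to the eigenvalue discussion preceding the lemma---whereas you derive them cleanly via $p(1)p(-1)<0$ and then carry out the sign analysis that merges each pair of inequalities into the single absolute-value form; this extra detail is a genuine improvement in self-containedness. One small omission: you should note, as the paper does, that when $\tau_L-\delta_L-1=0$ (resp.\ $\tau_R-\delta_R-1=0$) the left (resp.\ right) piece has no fixed point at all, so these boundary cases are automatically excluded from both sides of the equivalence.
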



\begin{proof}
Let $f^L$ and $f^R$ denote the left and right pieces of $f_\xi$, respectively.
If $\delta_L = \tau_L - 1$ then $f^L$ has no fixed points,
while if $\delta_L \ne \tau_L - 1$ then $f^L$ has the unique fixed point $Y$, given by \eqref{eq:Y}.
Similarly if $\delta_R = \tau_R - 1$ then $f^R$ has no fixed points,
while if $\delta_R \ne \tau_R - 1$ then $f^R$ has the unique fixed point $X$, given by \eqref{eq:X}.

If $\xi \in \Phi$ then $X$ and $Y$ are saddle fixed points of $f_\xi$, as noted above.
Conversely, suppose $f_\xi$ has two saddle fixed points.
By the above remarks, these must be $X$ and $Y$.
Since they are fixed points of $f_\xi$,
$Y$ is in the left half-plane, so $\tau_L > \delta_L + 1$ by \eqref{eq:Y},
and $X$ is in the right half-plane, so $\tau_R < \delta_R + 1$ by \eqref{eq:X}.
Since they are saddles, also $\tau_L > -(\delta_L + 1)$
and $\tau_R < -(\delta_R + 1)$, hence $\xi \in \Phi$.
\end{proof}

\subsection{The stable and unstable manifolds of the fixed points}
\label{sec:stable_unstable_manifolds}

Since $X$ and $Y$ are saddles they have one-dimensional stable and unstable manifolds.
Their stable manifolds $W^s(X)$ and $W^s(Y)$
have kinks at points where they meet the switching manifold $x = 0$
and on preimages of these points,
while their unstable manifolds $W^u(X)$ and $W^u(Y)$
have kinks at points where they meet $y = 0$
(the image of the switching manifold) and on images of these points.
As each manifold (stable or unstable) emanates from the fixed point,
it coincides with the line through the fixed point with direction
given by the corresponding eigenvector of $A_L$ or $A_R$.
We will use a subscript $0$ to denote
the part of the manifold that coincides with this line.
These are indicated in Fig.~\ref{fig:typ_pp} for four different combinations of the parameter values.

\begin{figure}[h]
\begin{center}
\begin{tabular}{cc}
  \includegraphics[scale=0.48]{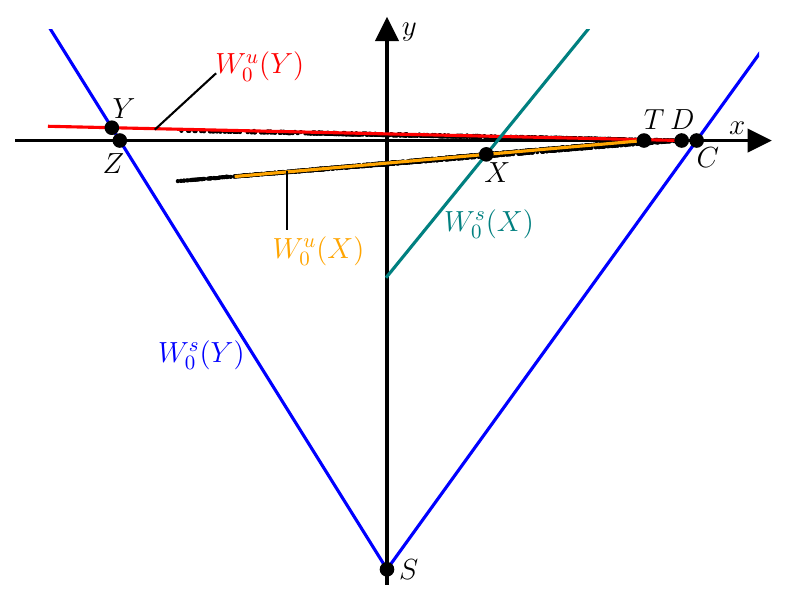} &  \includegraphics[scale=0.48]{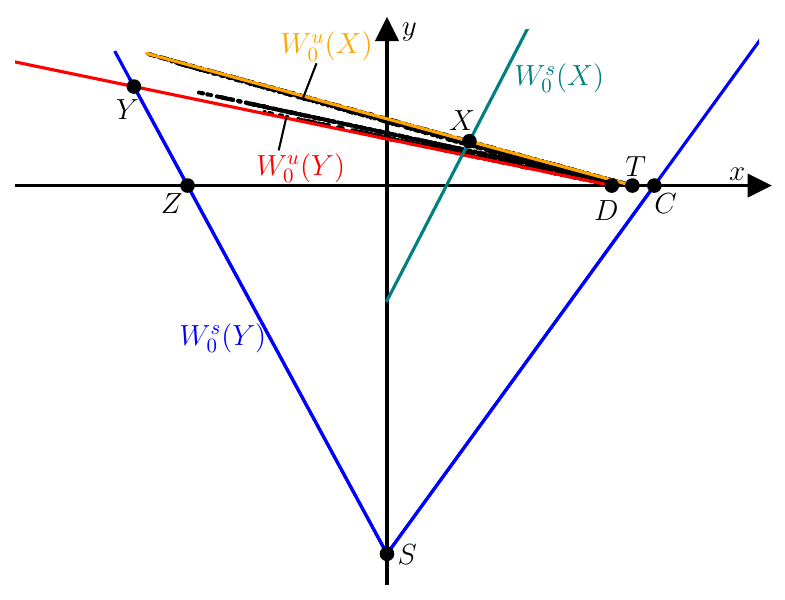} \\
(a) $\delta_L>0, \delta_R>0$ & (b) $\delta_L>0, \delta_R<0$ \\[3pt]
 \includegraphics[scale=0.48]{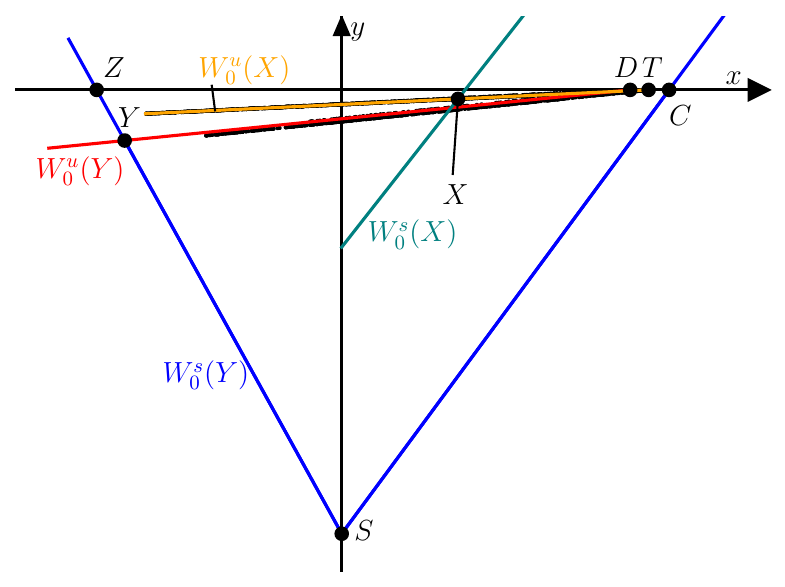} &   \includegraphics[scale=0.48]{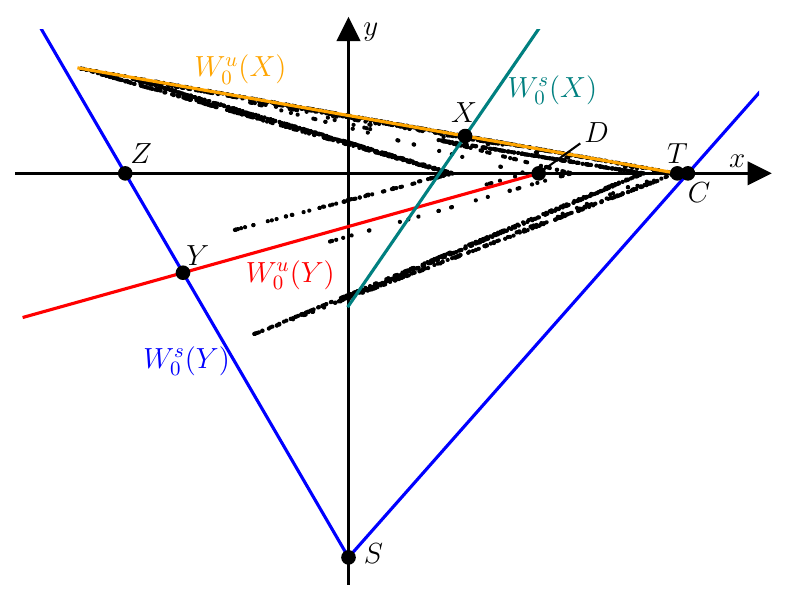} \\
(c) $\delta_L<0, \delta_R>0$ & (d) $\delta_L<0, \delta_R<0$ \\[3pt]
\end{tabular}
\end{center}
\caption{
Phase portraits of $f_\xi$ \eqref{eq:BCNF2} for four different parameter
combinations $\xi = (\tau_L,\delta_L,\tau_R,\delta_R) \in \Phi_{\rm trap} \cap \Phi_{\rm cone}$; (a) $\xi = (2.1, 0.06, -1.7, 0.18)$; (b) $\xi = (2.1, 0.4, -1.7, -0.55)$; (c) $\xi = (2.2, -0.3, -1.7, 0.1)$; (d) $\xi = (1.8, -0.75, -1.6, -0.4)$.
Each plot shows the initial line segments of the stable and unstable manifolds emanating from
the fixed points $X$ and $Y$, as well as an additional segment of the stable manifold
of $Y$. We also show some key points of intersection ($S$, $C$, $D$, $T$, and $Z$) of these segments 
with the coordinate axes, see \eqref{eq:S}, \eqref{eq:C}, \eqref{eq:D}, \eqref{eq:T}, and \eqref{eq:Z} respectively.
To illustrate the chaotic attractor the black dots show iterates of a
typical forward orbit after transient dynamics has decayed.
}
\label{fig:typ_pp}
\end{figure}


Next, we describe some points where the stable and unstable manifolds
intersect $x=0$ and $y=0$ as these are central to our construction in \S\ref{sec:for_inv_trap}.
For all $\xi \in \Phi$, $W^s_0(Y)$ has an endpoint on $x=0$, call it $S$.
Since $W^s_0(Y)$ has slope $-\lambda_L^u$ (due to the companion matrix form \eqref{eq:ALAR}),
from the above formula for $Y$ we obtain
\begin{equation}
\label{eq:S}
S = \left(0, \frac{-\lambda_L^u}{\lambda_L^u - 1} \right).
\end{equation}
Notice $S_2 < -1$
(here and throughout the paper for any $P \in \mathbb{R}^2$
we write $P_1$ and $P_2$, respectively,
for its $x$ and $y$ components).
The manifold $W^s(Y)$ continues into the right half-plane
but now with slope $\frac{\phi_1(\xi)}{\lambda_L^u}$, where
\begin{equation}
\label{eq:phi1}
\phi_1(\xi) = \delta_R - \tau_R\lambda_L^u \,.
\end{equation}
Our trapping region construction requires that this
linear segment, call it $W_1^s(Y)$, intersects $y=0$.
Certainly this is only possible if $\phi_1(\xi) > 0$.
This inequality turns out to be sufficient to ensure
$W_1^s(Y)$ intersects $y=0$ except in the case $\delta_L, \delta_R < 0$ (Fig.~\ref{fig:typ_pp}-d).
In this case $W_1^s(Y)$ is the line segment
connecting $S$ and $f_\xi^{-2}(S)$.
So in this case for $W_1^s(Y)$ to intersect $y = 0$ we need $f_\xi^{-2}(S)$ to lie above $y=0$.
A straightforward calculation reveals that the $y$-component of $f_\xi^{-2}(S)$
is $\frac{\phi_2(\xi)}{\lambda_L^s (\lambda_L^u - 1) \delta_R}$, where
\begin{equation}
\label{eq:phi2}
\phi_2(\xi) = \delta_R(\lambda_L^s+1) - \lambda_L^u(\tau_R + (\delta_R + \tau_R)\lambda_L^s).
\end{equation}
Thus we require $\phi_2(\xi) > 0$ (because with $\delta_L, \delta_R < 0$ we have $\lambda_L^s (\lambda_L^u - 1) \delta_R > 0$).
In any case, if $W_1^s(Y)$ intersects $y = 0$,
it does so at the point
\begin{equation}
\label{eq:C}
C = \left( 1 + \frac{\phi_3(\xi)}{(\lambda_L^u - 1) \phi_1(\xi)}, 0 \right),
\end{equation}
where
\begin{equation}
\label{eq:phi3}
\phi_3(\xi) = \delta_R - (\delta_R +\tau_R - (\tau_R+1)\lambda_L^u)\lambda_L^u.
\end{equation}
Our construction also requires $C_1 > 1$, that is $\phi_3(\xi) > 0$.

Now we consider the two unstable manifolds.
For all $\xi \in \Phi$, $W^u_0(Y)$ has an endpoint on $y=0$ at
\begin{equation}
\label{eq:D}
D = \left(\frac{1}{1-\lambda_L^s}, 0 \right),
\end{equation}
and $W^u_0(X)$ has an endpoint on $y=0$ at
\begin{equation}
\label{eq:T}
T = \left(\frac{1}{1-\lambda_R^s} , 0\right).
\end{equation}
These points are indicated in Fig.~\ref{fig:typ_pp}.

\subsection{Homoclinic and heteroclinic bifurcations}

In the orientation-preserving case ($\delta_L, \delta_R > 0$),
Banerjee et al.~\cite{BaYo98} noticed that as parameters are varied
a chaotic attractor can be destroyed when the points $C$ and $D$ collide.
This type of bifurcation is a {\em homoclinic corner} \cite{Si16b}
where the kinks (corners) of the unstable manifold of $Y$ lie on the stable manifold of $Y$ (and vice-versa).
This is analogous to a first homoclinic tangency \cite{PaTa93} for smooth maps
which are well understood as a mechanism for the destruction of an attractor \cite{GrOt83}.
From the above formulas for $C$ and $D$ we obtain
\begin{align}
\label{eq:phi4origin}
C_1 - D_1 = \frac{\phi_4(\xi)}{(\lambda_L^u - 1)(1 - \lambda_L^s)\phi_1(\xi)},
\end{align}
where
\begin{align}
\label{eq:phi4}
\phi_4(\xi) = \delta_R - (\tau_R+\delta_L+\delta_R - (1+\tau_R)\lambda_L^u)\lambda_L^u.
\end{align}
So the condition $\phi_4(\xi) > 0$ (equivalent to equation (5) of \cite{BaYo98}) ensures that $C$ lies to the right of $D$ as in Fig.~\ref{fig:typ_pp}-a.

In the orientation-reversing case ($\delta_L, \delta_R < 0$)
a chaotic attractor can be destroyed when the points $C$ and $T$ collide.
Here kinks of the unstable manifold of $X$
lie on the stable manifold of $Y$ (and vice-versa).
We have
\begin{align}
\label{eq:T1C1}
C_1 - T_1 = \frac{\phi_5(\xi)}{(\lambda_L^u-1)(1-\lambda_R^s)\phi_1(\xi)},
\end{align}
where
\begin{align}
\label{eq:phi5}
\phi_5(\xi) = \delta_R - (\delta_R + \tau_R - (1 +\lambda_R^u)\lambda_L^u)\lambda_L^u.
\end{align}
The condition $\phi_5(\xi) > 0$ ensures that $C$ lies to the right of $T$
as in Fig.~\ref{fig:typ_pp}-d.
In the special case of the Lozi map, $\phi_5(\xi) > 0$ simplifies
(significantly) to equation (3) of Misiurewicz \cite{Mi80}.

In view of the above discussion we define
\begin{align}
\label{eq:Phi_trap}
\Phi_{\rm trap} = \left\{ \xi \in \Phi \,\middle|\,  \phi_i(\xi) > 0, i=1, \ldots, 5 \right\}.
\end{align}
This set is difficult to visualise because parameter space is four-dimensional.
Fig.~\ref{fig:phis} shows four different cross-sections of $\Phi_{\rm trap}$ obtained by fixing $\tau_L$ and $\tau_R$.
Broadly speaking the size of the cross-section decreases as the values of $\tau_L$ and $|\tau_R|$ increase.
Notice the topology of the cross-sections is different for different values of $\tau_L$ and $\tau_R$.
For instance in Fig.~\ref{fig:phis}-a the boundary of the cross-section is formed by $\phi_1(\xi) = 0$, $\phi_2(\xi) = 0$,
and the boundary of $\Phi$,
whereas in Fig.~\ref{fig:phis}-d the boundary is formed by $\phi_1(\xi) = 0$, $\phi_3(\xi) = 0$, $\phi_4(\xi) = 0$, $\phi_5(\xi) = 0$,
and the boundary of $\Phi$.
The figure also includes curves on which $f_\xi(C) = Y$ and $f_\xi(C) = Z$,
where $Z = f_\xi(S)$ is the intersection of $W_0^s(Y)$ with $y=0$.
Together with the $\delta_L$ and $\delta_R$ axes, these curves divide the cross-sections into six parts
corresponding to six cases for the vertices of the region $\Omega$
that we construct in \S\ref{sec:for_inv_trap}.

\begin{figure}[h]
\begin{center}
\begin{tabular}{cc}
 \includegraphics[scale=0.41]{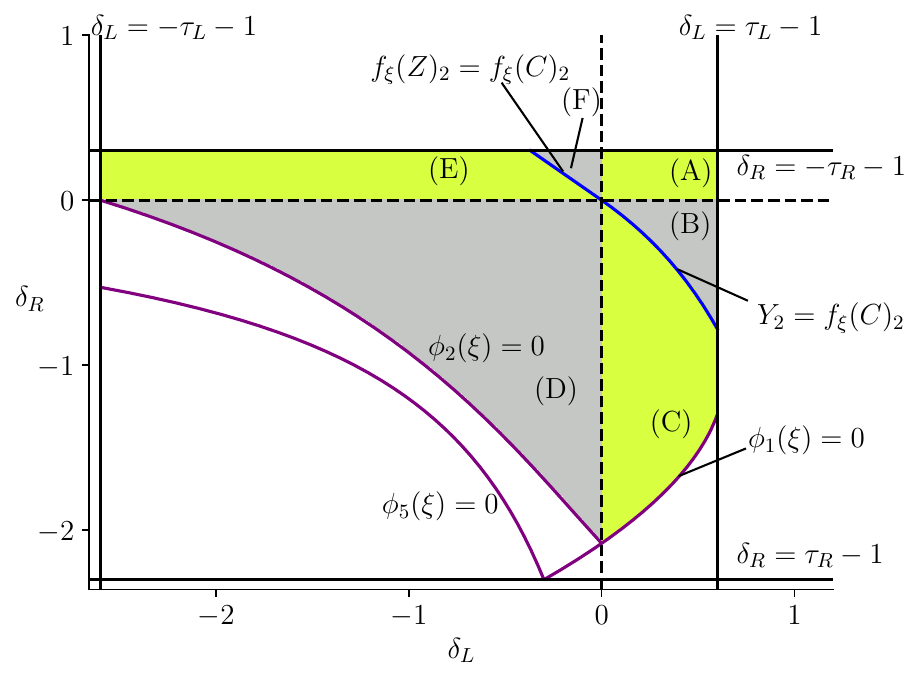} &   \includegraphics[scale=0.41]{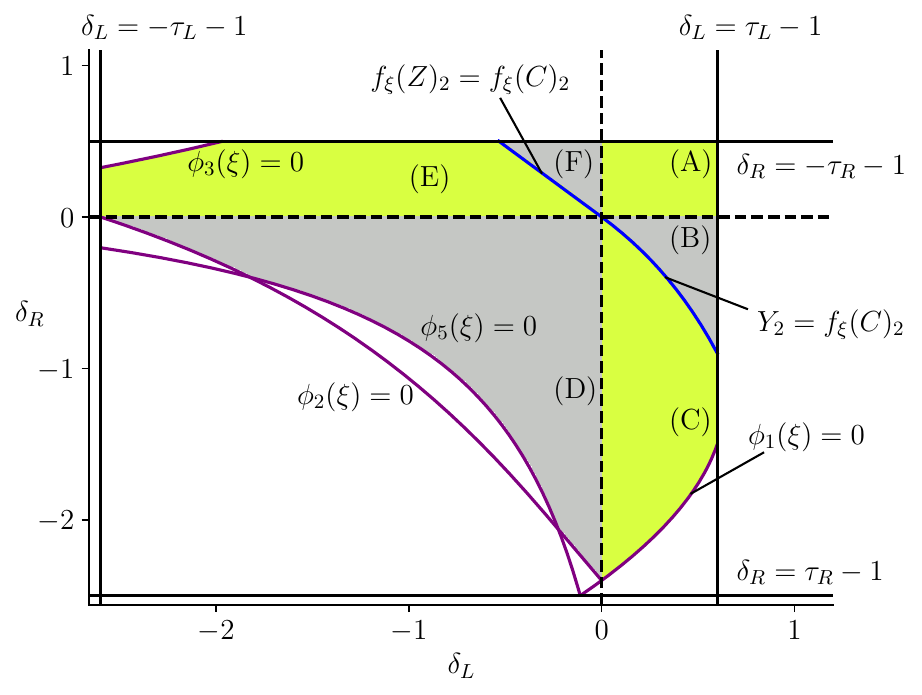} \\
(a) $\tau_L=1.6, \tau_R=-1.3$ & (b) $\tau_L=1.6, \tau_R=-1.5$ \\[6pt]
\includegraphics[scale=0.41]{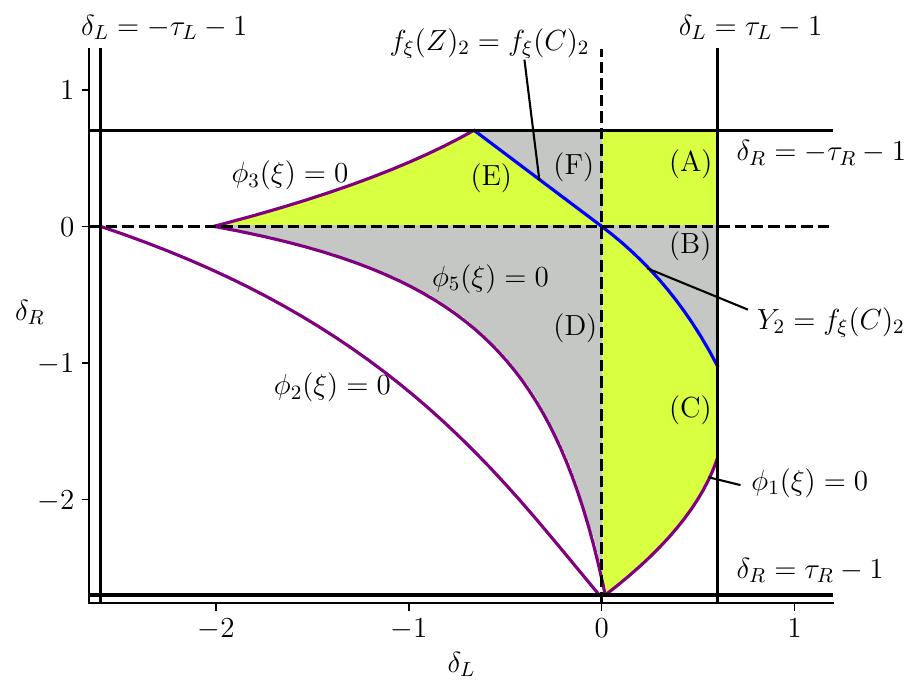} &   \includegraphics[scale=0.41]{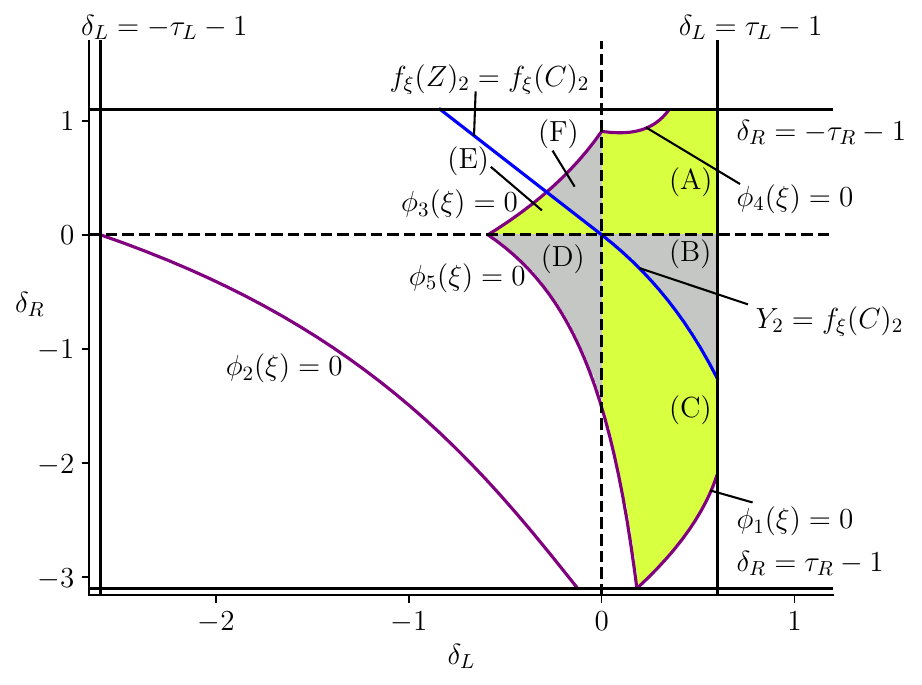} \\
(c) $\tau_L=1.6, \tau_R=-1.7$ & (d) $\tau_L=1.6, \tau_R=-2.1$ \\[6pt]
\end{tabular}
\end{center}
\caption{
Four different two-dimensional cross sections of $\Phi_{\rm trap}$,
defined by \eqref{eq:Phi_trap}.
In each plot the cross-section is divided into six parts
(A)--(F) where $\Omega$ has certain vertices, see \S\ref{sec:for_inv_trap}.
}
\label{fig:phis}
\end{figure}

\subsection{Sufficient conditions for robust chaos}

Our construction of an invariant expanding cone $\Psi_K$
requires similar constraints on the parameter values to those established above for the trapping region. To this end we define
\begin{align}
\theta_1(\xi) &= \left( \delta_L + \delta_R - \tau_L \tau_R \right)^2
- 4 \delta_L \delta_R \,, \label{eq:theta1} \\
\theta_2(\xi) &= \tau_L^2 + \delta_L^2 - 1 + 2 \tau_L \,{\rm min}\left( 0, -\tfrac{\delta_R}{\tau_R}, q_L, \tilde{a} \right), \label{eq:theta2} \\
\theta_3(\xi) &= \tau_R^2 + \delta_R^2 - 1 + 2 \tau_R \,{\rm max}\left( 0,
-\tfrac{\delta_L}{\tau_L}, q_R, \tilde{b} \right), \label{eq:theta3}
\end{align}
where
\begin{align}
q_L &= -\tfrac{\tau_L}{2}\left(1 - \sqrt{1 - \tfrac{4\delta_L}{\tau_L^2}} \right), &
q_R &= -\tfrac{\tau_R}{2}\left(1 - \sqrt{1 - \tfrac{4\delta_R}{\tau_R^2}} \right), \nonumber
\end{align}
and
\begin{align}
\tilde{a} &= \frac{\delta_L -\delta_R-\tau_L\tau_R - \sqrt{\theta_1(\xi)}}{2\tau_R}, &
\tilde{b} &= \frac{\delta_R -\delta_L-\tau_L\tau_R - \sqrt{\theta_1(\xi)}}{2\tau_L}, \nonumber
\end{align}
assuming $\theta_1(\xi) > 0$.
We then define
\begin{align}
\label{eq:Phi_cone}
\Phi_{\rm cone} = \left\{ \xi \in \Phi \,\middle|\, \theta_i(\xi) > 0, i=1, \ldots, 3 \right\}.
\end{align}
The condition $\theta_1(\xi) > 0$ ensures $\theta_2(\xi)$ and $\theta_3(\xi)$ are well-defined,
and, as explained in \S\ref{sec:cone}, the conditions $\theta_2(\xi) > 0$ and $\theta_3(\xi) > 0$ ensure that our cone $\Psi_K$
is invariant and expanding.

Fig.~\ref{fig:cone_regions} shows cross-sections of $\Phi_{\rm cone}$.
Broadly speaking the size of the cross-section
{\em increases} as the values of $\tau_L$ and $|\tau_R|$ increase.
Again the topology of the cross-sections is different for different values of $\tau_L$ and $\tau_R$.
Similar to Fig.~\ref{fig:phis} we have divided the cross-sections into six parts corresponding to six cases for the boundary of $\Psi_K$ (see \S\ref{sec:cone}).
These correspond to different cases for the four quantities in each of
\eqref{eq:theta2} and \eqref{eq:theta3}
that attain the minimum (respectively, maximum) value.

\begin{figure}[h]
\begin{center}
\begin{tabular}{cc}
  \includegraphics[scale=0.41]{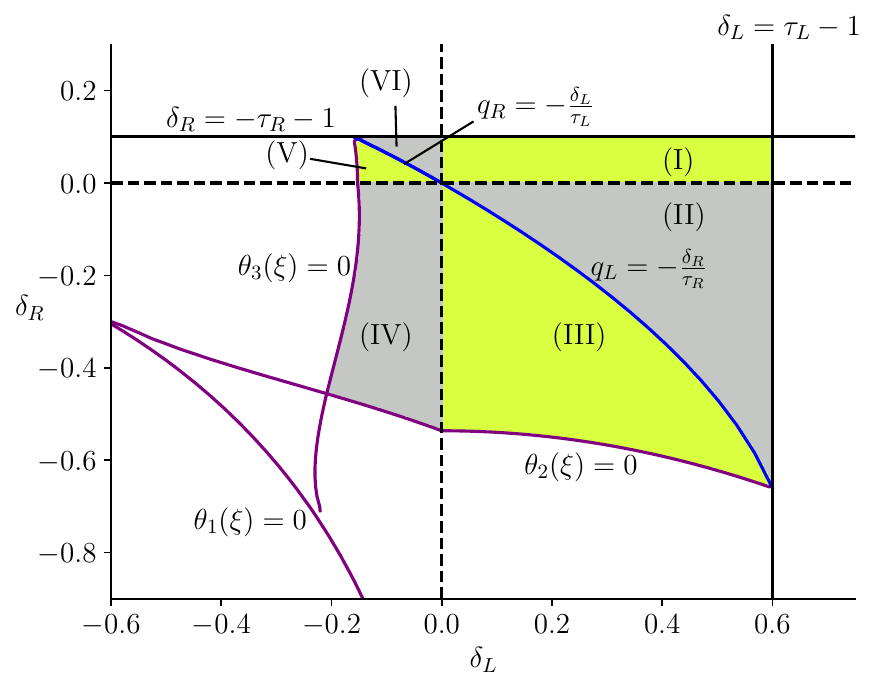} &   \includegraphics[scale=0.41]{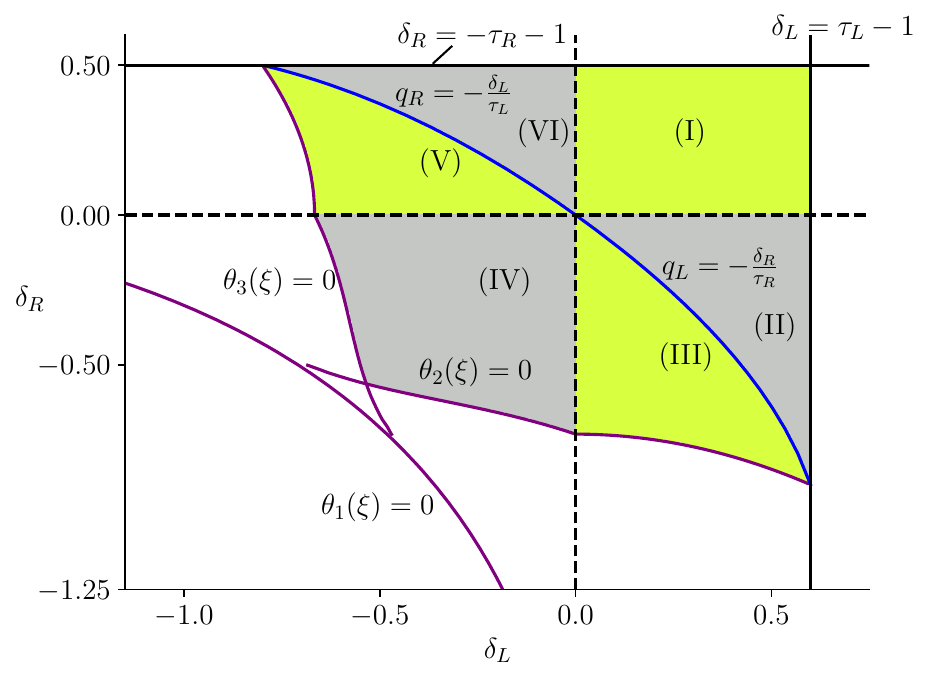} \\
(a) $\tau_L=1.6, \tau_R=-1.1$ & (b) $\tau_L=1.6, \tau_R=-1.5$ \\[3pt]
 \includegraphics[scale=0.41]{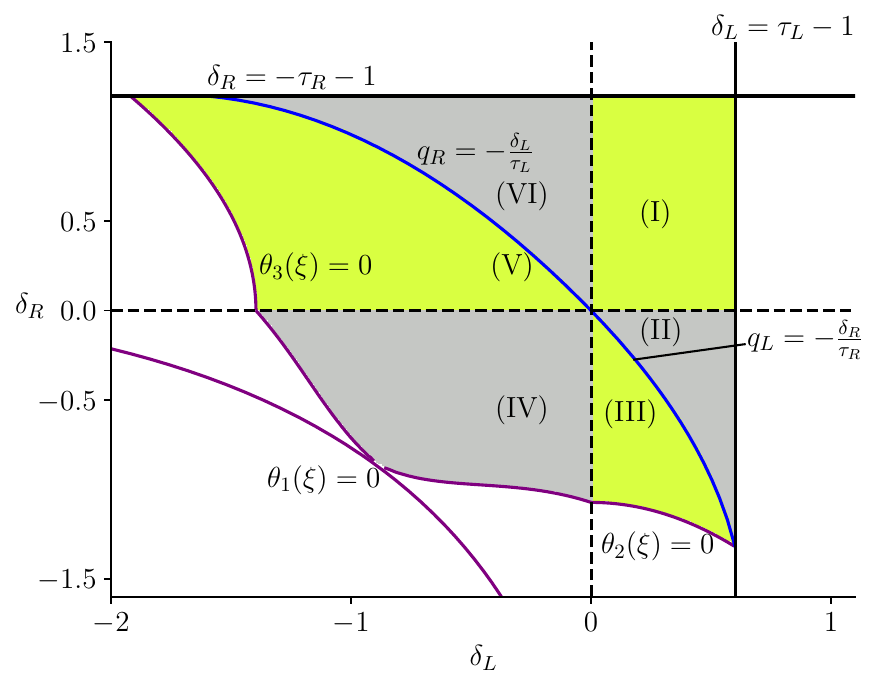} &   \includegraphics[scale=0.41]{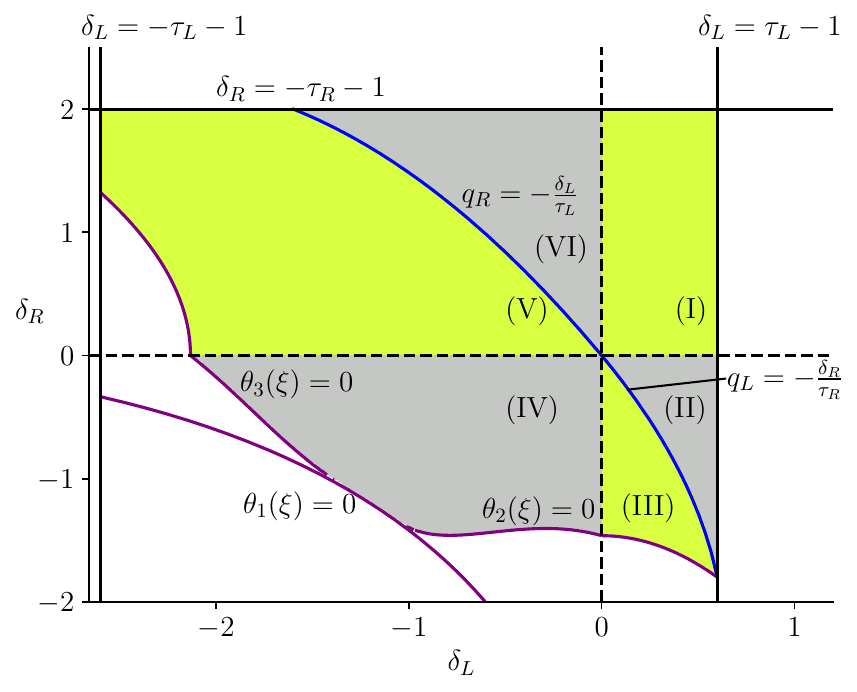} \\
(c) $\tau_L=1.6, \tau_R=-2.2$ & (d) $\tau_L=1.6, \tau_R=-3$ \\[3pt]
\end{tabular}
\end{center}
\caption{Two-dimensional cross-sections of $\Phi_{\rm cone}$, defined by~\eqref{eq:Phi_cone}. In each plot the cross-section has been divided into six parts (I)--(VI) where the cone $\Psi_K$ has different boundaries, see Fig.~\ref{fig:slope_maps}.}\label{fig:cone_regions}
\end{figure}

Finally, we can state our main result.

\begin{theorem}
\label{thm:attractor}
For any $\xi \in \Phi_{\rm trap} \cap \Phi_{\rm cone}$
the normal form $f_\xi$ \eqref{eq:BCNF2}
has a topological attractor with a positive Lyapunov exponent.
\end{theorem}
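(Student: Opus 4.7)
My plan follows the Misiurewicz template, mirroring how \cite{GlSi21} proceeded in the orientation-preserving setting: first produce a trapping region so that a topological attractor exists, then build an invariant expanding cone so that the attractor carries a positive Lyapunov exponent. Concretely, for $\xi \in \Phi_{\rm trap}$ I would construct a closed region $\Omega \subset \mathbb{R}^2$ whose boundary is a polygon made from the initial linear segments of $W^s(Y)$, $W^u(X)$, $W^u(Y)$ through the points $S$, $C$, $D$, $T$, $Z$ shown in Fig.~\ref{fig:typ_pp}, together with a few images or preimages of these endpoints. The six cases (A)--(F) in Fig.~\ref{fig:phis} reflect the fact that the vertex set of $\Omega$ depends on signs of $\delta_L, \delta_R$ and on which curves among $f_\xi(C) = Y$ and $f_\xi(C) = Z$ one is on; I would handle each case by an explicit bookkeeping of vertices. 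The inequalities $\phi_i(\xi) > 0$ for $i = 1, \ldots, 5$ are precisely what is needed: $\phi_1 > 0$ makes $W_1^s(Y)$ meet $y = 0$, $\phi_2 > 0$ makes $f_\xi^{-2}(S)$ lie in the right place in the fully orientation-reversing case, $\phi_3 > 0$ gives $C_1 > 1$, and $\phi_4, \phi_5 > 0$ place $C$ to the right of $D$ and $T$ respectively. Forward invariance $f_\xi(\Omega) \subseteq \Omega$ then reduces to checking that each boundary segment maps to another boundary segment or into the interior, which is a linear computation in each of the two half-planes. A Misiurewicz-style outward thickening then perturbs $\Omega$ into an open trapping region $U$ satisfying $f_\xi(\overline U) \subset \operatorname{int} U$, whose maximal forward-invariant set $\Lambda = \bigcap_{n \ge 0} f_\xi^n(\overline U)$ is a topological attractor.

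For $\xi \in \Phi_{\rm cone}$, I would construct a cone $\Psi_K \subset T\mathbb{R}^2$ consisting of all tangent vectors whose slope lies in a closed interval $K$, and verify two properties: (i) invariance, $A_L \Psi_K \cup A_R \Psi_K \subseteq \Psi_K$; (ii) uniform expansion, there exists $\lambda > 1$ with $\|A_L v\|, \|A_R v\| \ge \lambda \|v\|$ for every $v \in \Psi_K$. The natural strategy is to analyse the induced slope maps $s \mapsto (-\delta_{L,R})/(\tau_{L,R} + s)$: the fixed points of these maps are exactly the eigendirection slopes appearing inside $q_L, q_R, \tilde a, \tilde b$, and the image of a slope interval under these fractional-linear maps is another slope interval whose endpoints are either the images of the endpoints of $K$ or these fixed slopes. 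The condition $\theta_1(\xi) > 0$ guarantees that $\tilde a, \tilde b$ are real, hence $\Psi_K$ is well-defined; $\theta_2(\xi) > 0$ and $\theta_3(\xi) > 0$ then encode the expansion estimates $\|A_{L,R} v\|^2 / \|v\|^2 > 1$ at the extremal (worst-case) slope from among $0$, $-\delta_{R,L}/\tau_{R,L}$, $q_{L,R}$, $\tilde a, \tilde b$, with the six cases (I)--(VI) of Fig.~\ref{fig:cone_regions} distinguishing which of these extremal slopes attains the minimum or maximum. Once expansion is known at the worst-case slope, a direct quadratic-form computation extends the bound uniformly to $\Psi_K$, yielding the contraction rate $\lambda > 1$.

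Combining the two pieces gives the theorem. For almost every $p \in \Lambda$ the forward orbit avoids the switching manifold, so $({\rm D} f_\xi^n)(p) = A_{i_n} \cdots A_{i_1}$ with $i_k \in \{L, R\}$ is a well-defined product. Fixing a vector $v \in \Psi_K$, cone invariance gives $({\rm D} f_\xi^n)(p) v \in \Psi_K$ and uniform expansion gives $\|({\rm D} f_\xi^n)(p) v\| \ge \lambda^n \|v\|$, so the Lyapunov exponent of $p$ in the direction $v$ is at least $\log \lambda > 0$. Since $\Lambda$ has nonempty interior relative to its supporting trapping region and the cone $\Psi_K$ is open in direction space, this lower bound is realised on a positive-measure set, proving $\Lambda$ is chaotic. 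The main obstacle throughout, and in my view the only real technical burden, is the case analysis: both constructions split naturally into six sub-cases depending on signs of $\delta_L, \delta_R$ and on which of several candidate quantities attains the relevant extremum, and each sub-case demands its own verification that $\Omega$ and $\Psi_K$ have the claimed boundary and that the associated inequalities deliver forward invariance and expansion. The orientation-reversing and non-invertible cases introduce genuinely new geometric configurations absent from \cite{GlSi21} --- most notably the role of $f_\xi^{-2}(S)$ and the new bifurcation boundary $\phi_5 = 0$ --- so a priori one cannot simply import the previous verification, and the bulk of the work is in making the case decompositions clean enough that the computations remain manageable.
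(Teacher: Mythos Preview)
Your overall strategy---trapping region from Proposition~\ref{pr:trap} gives a topological attractor, invariant expanding cone from Proposition~\ref{pr:cone} gives a positive Lyapunov exponent---is exactly the paper's, and your account of the cone construction (slope maps, fixed points $q_L,q_R$, period-two points $\tilde a,\tilde b$, images of $0$, and $\theta_2,\theta_3$ as the worst-case expansion inequalities) matches the paper closely.

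Two points of divergence are worth flagging. First, the paper's forward-invariant region $\Omega$ is \emph{not} bounded by unstable-manifold segments: it is simply the triangle with one edge along $W_0^s(Y)$ and opposite vertex $C$, taken as the convex hull of $C$ together with $\{Y,Z,f(Z),f(C),f^2(C)\}$. The six cases (A)--(F) record which two of these five points are extremal on $W_0^s(Y)$, not a change in which manifolds are used. This is a deliberate simplification over the Misiurewicz/Glendinning--Simpson constructions precisely so that one object works uniformly across the orientation-preserving, orientation-reversing, and non-invertible regimes; a construction stitching together $W^u(X)$ and $W^u(Y)$ segments would require more case-by-case geometry. Relatedly, the perturbation to a genuine trapping region is not an outward thickening: $C$ is pushed \emph{inward} by order $\ep$ while $Q,R$ are pushed outward by order $\ep^2$, and this two-scale choice is what makes $f(C_\ep)$ land in the interior. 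Second, your closing argument is slightly muddled: the paper shows $\lambda(z,v)\ge\ln c>0$ for \emph{every} $z\in\Lambda$ and every nonzero $v\in\Psi_K$, handling orbits that meet the switching manifold via one-sided directional derivatives rather than excluding them; no appeal to ``almost every'' or to $\Lambda$ having nonempty interior (which it generally does not) is needed.
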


This is proved at the end of \S\ref{sec:cone}.
We have found that there are many possibilities for the topology
of cross-sections of $\Phi_{\rm trap} \cap \Phi_{\rm cone}$ defining by fixing $\tau_L$ and $\tau_R$,
and we do not attempt to catagorise these in this paper.
We provide one example in \S\ref{sec:topologies}, where we also identify critical values of $\tau_L$ and $\tau_R$
at which the cross-sections of $\Phi_{\rm trap}$ and $\Phi_{\rm cone}$ vanish entirely.

\section{A forward invariant region and a trapping region}
\label{sec:for_inv_trap}

The following definition applies to any continuous map $f$ on the plane.
Note we write ${\rm int}(\cdot)$ for the {\em interior} of a set.

\begin{definition}
A set $\Omega \subset \mathbb{R}^2$ is {\em forward invariant} if $f(\Omega) \subset \Omega$.
A compact set $\Omega \subset \mathbb{R}^2$ is a {\em trapping region} if $f(\Omega) \subset {\rm int}(\Omega)$.
\label{df:trap}
\end{definition}

In this section we construct a triangle $\Omega$
and show that for any $\xi \in \Phi_{\rm trap}$
this region is forward invariant under $f$ (Proposition \ref{pr:for_inv}).
We then show there exists a perturbation of $\Omega$
that is a trapping region for $f$ (Proposition \ref{pr:trap}).
So that the non-invertible cases can be accommodated, our $\Omega$ differs from the triangle constructed by Glendinning and Simpson~\cite{GlSi21}
for the orientation-preserving case,
and by Misiurewicz \cite{Mi80} for the orientation-reversing case.
For clarity we now suppress the $\xi$-dependency and write $f$ instead of $f_{\xi}$.

The linear segment of the stable manifold of $Y$ that contains $Y$,
denoted $W^s_0(Y)$, was shown in Fig.~\ref{fig:typ_pp}
for four different combinations of the parameter values.
In any case, this segment lies on the line
$y = -\lambda_L^u x + S_2$, where $S_2$ is the
$y$-component of $S$, given by \eqref{eq:S}.
The point $S$ is the right-most point of $W^s_0(Y)$,
and is easy to see that in the other direction $W^s_0(Y)$ extends to infinity if $\delta_L \ge 0$
and to the preimage of $S$ under the left piece of $f$ otherwise:

\begin{lemma}
\label{lem:W^s_0(Y)}
Suppose $\delta_L \in \mathbb{R}$ and $\tau_L > |\delta_L + 1|$.
Then
\begin{equation}
W^s_0(Y) = \begin{cases}
\left\{ (x,y) \,\middle|\, -\frac{S_2}{\delta_L} \le x \le 0,
y = -\lambda_L^u x + S_2 \right\}, & \delta_L < 0, \\
\left\{ (x,y) \,\middle|\, -\infty < x \le 0,
y = -\lambda_L^u x + S_2 \right\}, & \delta_L \ge 0,
\end{cases}
\end{equation}
and $f(W^s_0(Y)) \subset W^s_0(Y)$.
\end{lemma}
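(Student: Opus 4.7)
The plan is to parametrise the stable eigendirection of $A_L$ at $Y$ and reduce everything to one-dimensional linear dynamics along that line. Writing $v = \begin{bmatrix} 1 \\ -\lambda_L^u \end{bmatrix}$, a direct computation gives $A_L v = \lambda_L^s v$ (using $\tau_L = \lambda_L^s + \lambda_L^u$ and $\delta_L = \lambda_L^s \lambda_L^u$), and combining this with the identity $\tau_L - \delta_L - 1 = -(\lambda_L^s-1)(\lambda_L^u-1)$ one checks that the line $\{Y + tv : t \in \mathbb{R}\}$ is precisely $y = -\lambda_L^u x + S_2$. Setting $p(t) = Y + tv$, the fact that $Y$ is a fixed point of the affine map $f^L$ together with $A_L v = \lambda_L^s v$ yields $f^L(p(t)) = p(\lambda_L^s t)$, so the dynamics on the line is the scalar map $t \mapsto \lambda_L^s t$.

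Next I would characterise $W^s_0(Y)$ as the set of $p(t)$ whose entire forward orbit is produced by the left branch, i.e.\ stays in $\{x \le 0\}$. Since $f^{L,n}(p(t)) = p((\lambda_L^s)^n t)$, this condition reads $(\lambda_L^s)^n t \le -Y_1$ for every $n \ge 0$, where $-Y_1 > 0$. Once it holds, $(\lambda_L^s)^n t \to 0$ forces $f^n(p(t)) \to Y$ so $p(t) \in W^s(Y)$; conversely, if the orbit ever crossed into $\{x > 0\}$ the active branch would change and we would leave the linear piece through $Y$. The main work is then the sign case-analysis. If $\delta_L \ge 0$ so $\lambda_L^s \in [0,1)$, then $\sup_{n \ge 0}(\lambda_L^s)^n t = \max(t,0)$, the constraint collapses to $t \le -Y_1$, i.e.\ $p(t)_1 \le 0$, and we obtain the half-line in the statement. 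If $\delta_L < 0$ so $\lambda_L^s \in (-1,0)$, the sequence alternates in sign with decreasing magnitude, so its supremum equals $\max(t, \lambda_L^s t)$; this yields $t \in [-Y_1/\lambda_L^s,\,-Y_1]$. Evaluating $p(-Y_1/\lambda_L^s)_1 = Y_1(\lambda_L^s - 1)/\lambda_L^s$ and simplifying via the identities above (together with $S_2 = -\lambda_L^u/(\lambda_L^u-1)$) gives the endpoint $-S_2/\delta_L$, as required.

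Finally, the invariance $f(W^s_0(Y)) \subset W^s_0(Y)$ is automatic from the construction: $W^s_0(Y) \subset \{x \le 0\}$ so $f = f^L$ there, $f^L(p(t)) = p(\lambda_L^s t)$, and in both sign cases the allowed interval of $t$-values is closed under $t \mapsto \lambda_L^s t$, reducing in the end to $(\lambda_L^s)^2 < 1$ (which keeps $\lambda_L^s t$ inside $[-Y_1/\lambda_L^s,-Y_1]$ whenever $t$ is). The only real obstacle is careful bookkeeping with signs in the $\delta_L < 0$ case; no additional dynamical ideas are needed beyond the observation that $f^L$ preserves the stable eigendirection at $Y$ and contracts along it by factor $\lambda_L^s$.
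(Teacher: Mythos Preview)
Your argument is correct. The paper does not supply a proof for this lemma; it simply prefaces the statement with ``it is easy to see that in the other direction $W^s_0(Y)$ extends to infinity if $\delta_L \ge 0$ and to the preimage of $S$ under the left piece of $f$ otherwise,'' so your parametrised-line computation is exactly the kind of elementary verification the authors are inviting the reader to perform, and your sign analysis and endpoint identification $p(-Y_1/\lambda_L^s)_1 = -S_2/\delta_L$ check out.
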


In particular $Z = f(S) \in W^s_0(Y)$.
This point is the intersection of $W^s_0(Y)$ with $y=0$ and given by
\begin{align}
\label{eq:Z}
Z &= \left(\frac{-1}{\lambda_L^u-1}, 0 \right).
\end{align}
Now recall if $\phi_1(\xi) > 0$ and $\phi_2(\xi) > 0$
then $C \in W_1^s(Y)$ is given by \eqref{eq:C}.

\begin{lemma}
\label{lem:points}
Let $\xi \in \Phi$ with $\phi_1(\xi) > 0$ and $\phi_2(\xi) > 0$.
Then
\begin{equation}
Y, Z, f(Z), f(C), f^2(C) \in W^s_0(Y) \setminus \{ S \}.
\end{equation}
\end{lemma}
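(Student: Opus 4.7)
The plan is to treat the five points in turn using two tools provided by Lemma~\ref{lem:W^s_0(Y)}: the invariance $f(W^s_0(Y)) \subset W^s_0(Y)$, and the fact that $f$ acts on $W^s_0(Y)$, parametrized by its $x$-coordinate, as the affine contraction $t \mapsto Y_1 + \lambda_L^s(t - Y_1)$ with factor $\lambda_L^s \in (-1,1)$ and fixed point $Y$. With these, the cases $Y$, $Z$, $f(Z)$ become essentially immediate: $Y \in W^s_0(Y)$ with $Y_1 < 0 = S_1$; the endpoint $S$ of $W^s_0(Y)$ lies in $W^s_0(Y)$, so invariance gives $Z = f(S) \in W^s_0(Y)$, and by \eqref{eq:Z} $Z_1 = -1/(\lambda_L^u - 1) \ne 0$; and the contraction formula yields $f(Z) = f^2(S)$ with $x$-coordinate $Y_1\bigl(1 - (\lambda_L^s)^2\bigr)$, which is nonzero since $Y_1 \ne 0$ and $|\lambda_L^s|<1$.

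The main work is for $f(C)$. Since $\phi_1(\xi) > 0$, a geometric check shows $C_1 > 0$ and one obtains the alternative form $C_1 = (\lambda_L^u)^2/((\lambda_L^u - 1)\phi_1)$, so $f(C) = f^R(C)$ can be computed explicitly. A short slope computation using $\phi_1 = \delta_R - \tau_R\lambda_L^u$ then verifies that $f^R$ maps the line through $S$ supporting $W_1^s(Y)$ onto the line through $S$ supporting $W^s_0(Y)$; thus $f(C)$ automatically lies on that second line, and what remains is to show $f(C)_1$ lies in the range described in Lemma~\ref{lem:W^s_0(Y)}. This means $f(C)_1 \le 0$ in all cases, plus $f(C)_1 \ge -S_2/\delta_L$ when $\delta_L < 0$. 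Direct substitution yields
\[
f(C)_1 \;=\; \frac{(\delta_R + \tau_R)\lambda_L^u - \delta_R}{(\lambda_L^u - 1)\,\phi_1(\xi)};
\]
factoring $(\delta_R + \tau_R)\lambda_L^u - \delta_R - \tau_R = (\delta_R + \tau_R)(\lambda_L^u - 1)$, and using that $\xi \in \Phi$ forces $\delta_R + \tau_R < -1$ (since $A_R$'s characteristic polynomial is negative at $-1$), gives $f(C)_1 < \tau_R/((\lambda_L^u - 1)\phi_1) < 0$. The lower bound when $\delta_L < 0$ is the crux: after expansion, using the Vieta relation $\delta_L = \lambda_L^s\lambda_L^u$ and the definition \eqref{eq:phi2} of $\phi_2$, one establishes the identity
\[
f(C)_1 + \frac{S_2}{\delta_L} \;=\; \frac{-\lambda_L^u\,\phi_2(\xi)}{(\lambda_L^u - 1)\,\phi_1(\xi)\,\delta_L}.
\]
With $\phi_2(\xi) > 0$ and $\delta_L < 0$, numerator and denominator are both negative, so the quotient is positive. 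Thus $f(C) \in W^s_0(Y)$, and the strict inequality $f(C)_1 < 0 = S_1$ forces $f(C) \ne S$.

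Finally, invariance applied to $f(C) \in W^s_0(Y)$ gives $f^2(C) \in W^s_0(Y)$. For $f^2(C) \ne S$, I would argue that any preimage of $S$ in the left half-plane must have first coordinate $-S_2/\delta_L$, which is valid only when $\delta_L < 0$; in that subcase the strict lower bound above gives $f(C)_1 > -S_2/\delta_L$, ruling out $f(C) = f_L^{-1}(S)$ and hence $f^2(C) \ne S$, while when $\delta_L \ge 0$ no such preimage exists in the left half-plane, so the left piece applied to $f(C)$ cannot produce $S$. The main obstacle I foresee is establishing the algebraic identity for $f(C)_1 + S_2/\delta_L$: one must combine the explicit form of $C_1$, the action of $f^R$, and the definition \eqref{eq:phi2} of $\phi_2$, and reconcile the resulting polynomial in $\lambda_L^s,\lambda_L^u,\tau_R,\delta_R$ with $-\lambda_L^u\,\phi_2(\xi)$. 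Once that identity is in hand, the remaining sign arguments are routine.
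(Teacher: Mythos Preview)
Your proposal is correct and follows essentially the same route as the paper: the easy cases $Y$, $Z$, $f(Z)$ are dispatched via construction and forward invariance of $W^s_0(Y)$, the key case $f(C)$ is handled by checking it lies on the line $y = -\lambda_L^u x + S_2$ with $f(C)_1 < 0$ and (when $\delta_L < 0$) $f(C)_1 > -S_2/\delta_L$, and $f^2(C)$ then follows by invariance plus the observation that only the left endpoint of $W^s_0(Y)$ can map to $S$. Your identity $f(C)_1 + S_2/\delta_L = \dfrac{-\lambda_L^u\,\phi_2(\xi)}{(\lambda_L^u-1)\,\phi_1(\xi)\,\delta_L}$ is exactly the paper's $\dfrac{\phi_2(\xi)}{-\lambda_L^s(\lambda_L^u-1)\,\phi_1(\xi)}$ after substituting $\delta_L = \lambda_L^s\lambda_L^u$, and your simplified form $C_1 = (\lambda_L^u)^2/\bigl((\lambda_L^u-1)\phi_1\bigr)$ is a useful intermediate step the paper leaves implicit.
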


\begin{proof}
Certainly $Y, Z \in W^s_0(Y) \setminus \{ S \}$ by construction.
Also $f(Z) \in W^s_0(Y) \setminus \{ S \}$
because $W^s_0(Y)$ is forward invariant (Lemma \ref{lem:W^s_0(Y)})
and $Z = f(S)$ cannot be a preimage of $S$.

Since $C_1 >0$ we have $f(C) = \left(\tau_R C_1+1, -\delta_RC_1\right)$,
and it is a simple exercise to use the formula \eqref{eq:D}
to show that $f(C)$ lies on the line $y = -\lambda_L^u x + S_2$.
Also $f(C)$ lies to the left of $S$ because
$f(C)_1 = \frac{(\tau_R + \delta_R)(\lambda_L^u - 1) + \tau_R}{(\lambda_L^u - 1)\phi_1(\xi)}$ is negative by inspection,
and in the case $\delta_L < 0$ the point $f(C)$ lies to the right
of the left-most point of $W_0^s(Y)$ because
$f(C)_1 + \frac{S_2}{\delta_L} = \frac{\phi_2(\xi)}{-\lambda_L^s (\lambda_L^u - 1) \phi_1(\xi)}$ is positive.
Thus $f(C)$ belongs to $W_0^s(Y)$ and is not an endpoint of $W_0^s(Y)$,
thus $f(C) \in W^s_0(Y) \setminus \{ S \}$
and $f^2(C) \in W^s_0(Y) \setminus \{ S \}$
using again Lemma \ref{lem:W^s_0(Y)}.
\end{proof}

Given $\xi \in \Phi$ with $\phi_1(\xi) > 0$ and $\phi_2(\xi) > 0$,
let $Q$ and $R$ be the upper-most and lower-most points
of $\{ Y, Z, f(Z), f(C), f^2(C) \}$, respectively.
Then let $\Omega$ be the compact filled triangle
with vertices $C$, $Q$, and $R$
(except $\Omega$ is a line segment in the special case $\delta_L = \delta_R = 0$).
In other words, $\Omega$ is the convex hull of 
$Y$, $Z$, $f(Z)$, $f(C)$, $f^2(C)$, and $C$.

There are six cases for the points
that form the vertices of $\Omega$.
These are shown in Fig.~\ref{fig:Omegas} and correspond to the six parts of $\Phi_{\rm trap}$ indicated
in Fig.~\ref{fig:phis}. Fig.~\ref{fig:Omegas} also shows the set $f(\Omega)$.  Notice in each case $f(\Omega)$ has vertices at the images of the points $P$ and $V$ where the boundary of $\Omega$ intersects $x=0$.

\begin{figure}
\begin{center}
\begin{tabular}{cc}
 \includegraphics[scale=0.45]{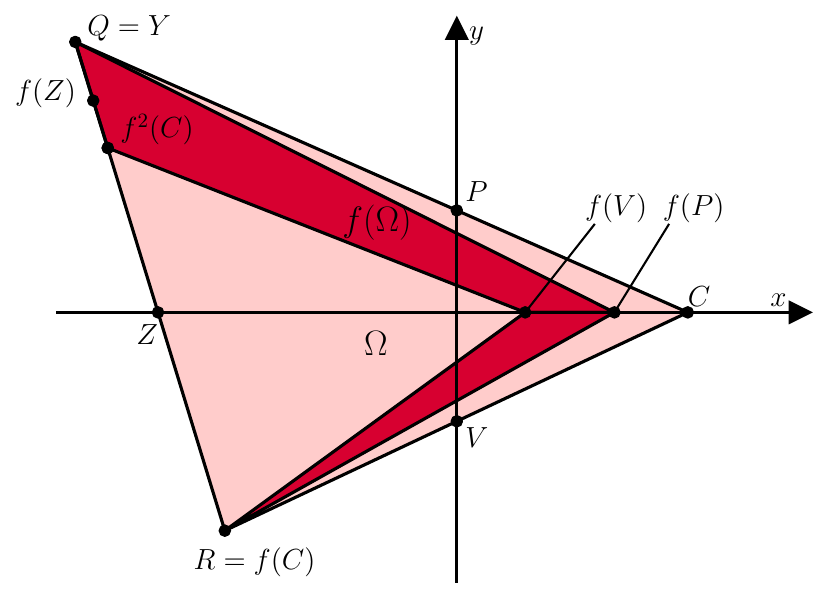} &   \includegraphics[scale=0.45]{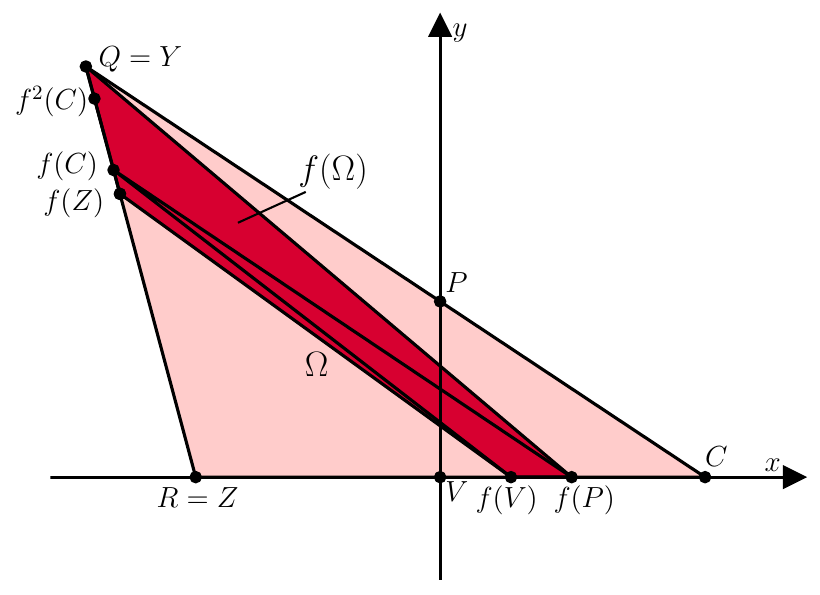} \\
(A) $\delta_L=0.3, \delta_R=0.4$ & (B) $\delta_L=0.4, \delta_R=-0.4$ \\[6pt]
\includegraphics[scale=0.45]{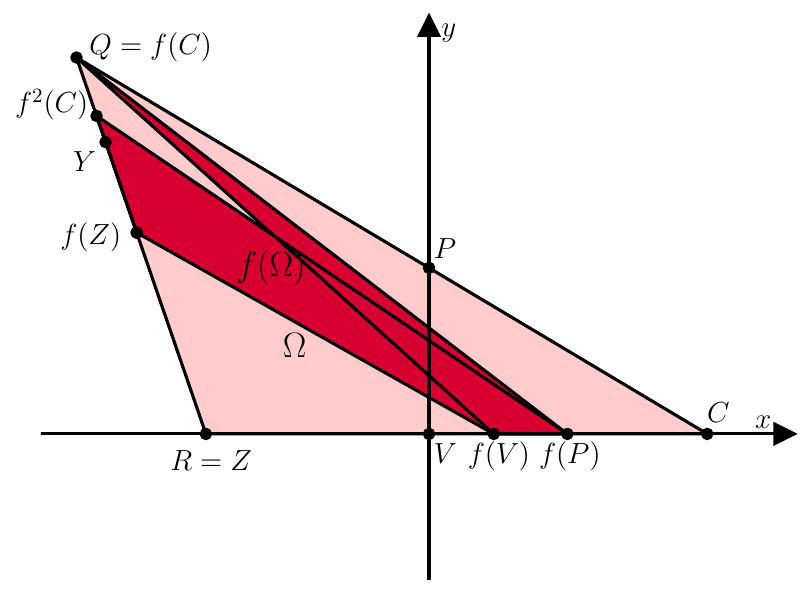} &   \includegraphics[scale=0.45]{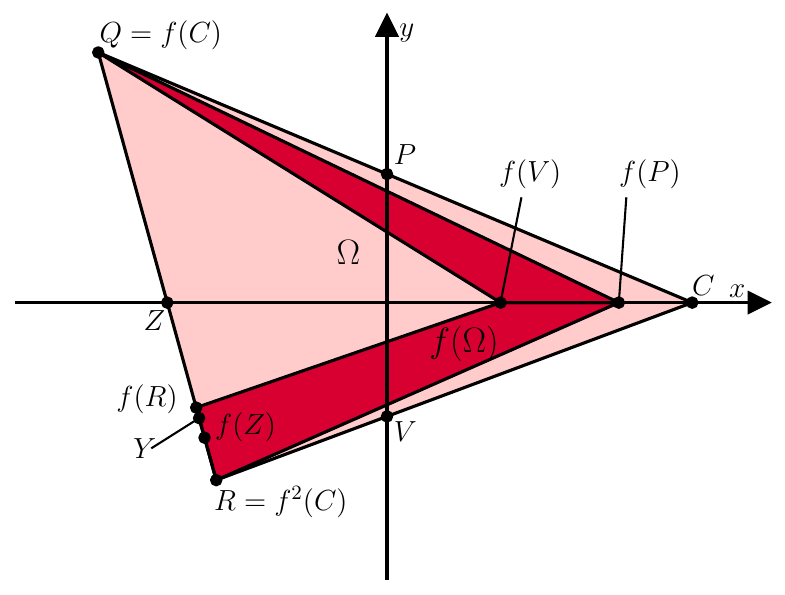} \\
(C) $\delta_L=0.4, \delta_R=-0.6$ & (D) $\delta_L=-0.3, \delta_R=-0.4$  \\[6pt]
\includegraphics[scale=0.45]{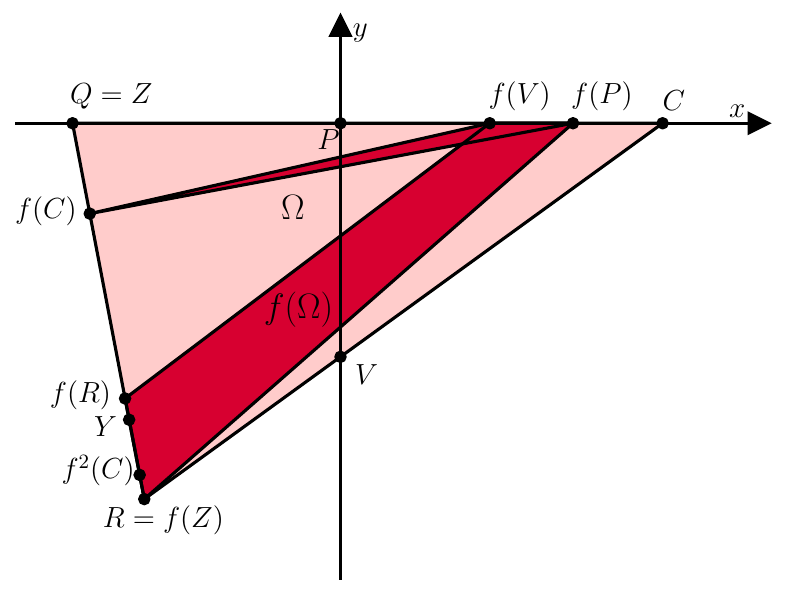} &   \includegraphics[scale=0.45]{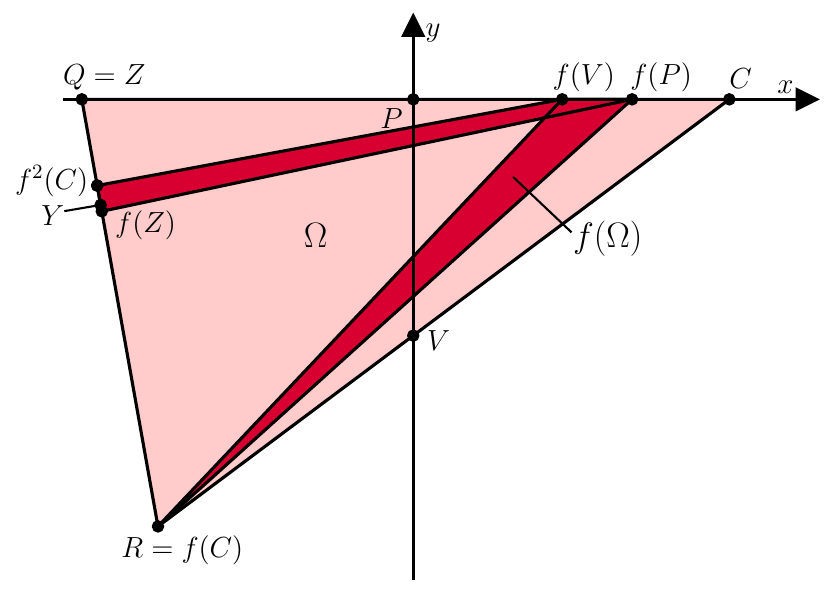} \\
(E) $\delta_L=-0.5, \delta_R=0.1$ & (F) $\delta_L=-0.1, \delta_R=0.4    $  \\[6pt]
\end{tabular}
\end{center}
\caption{
The triangle $\Omega$ and its image under $f$
with $\tau_L=1.6$, $\tau_R=-1.5$, and six different values
of $\delta_L$ and $\delta_R$.
These correspond to parts (A)--(F) of Fig.~\ref{fig:phis}-b for different cases for the two vertices of $\Omega$ in the left half-plane, denoted $Q$ and $R$ (the third vertex is always $C$).
In each case $f(\Omega) \subset \Omega$ by Proposition~\ref{pr:for_inv} because
the parameter combinations belong to $\Omega_{\rm trap}$.
\label{fig:Omegas}}
\end{figure}

\begin{proposition}
\label{pr:for_inv}
Let $\xi \in \Phi_{\rm trap}$.
Then $f(\Omega) \subset \Omega$.
\end{proposition}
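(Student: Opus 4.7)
The plan is to reduce $f(\Omega) \subset \Omega$ to a finite vertex check by exploiting the piecewise-affinity of $f$ and the convexity of $\Omega$. Write $H_L = \{x \le 0\}$ and $H_R = \{x \ge 0\}$. Since $\xi \in \Phi_{\rm trap}$ forces $C_1 > 1$, and Lemma~\ref{lem:points} places $Q, R$ in $W_0^s(Y) \setminus \{S\}$ (so $Q_1, R_1 < 0$), each of the two edges $\overline{CQ}$ and $\overline{CR}$ of the triangle $\Omega$ crosses the switching line $\{x=0\}$ at a single point; call these $P$ and $V$. The third edge $\overline{QR}$ lies entirely in $H_L$. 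Therefore $\Omega \cap H_R = \mathrm{conv}\{C,P,V\}$ and $\Omega \cap H_L = \mathrm{conv}\{Q,R,V,P\}$. Because $f$ is affine on each closed half-plane, it maps each polygon to the convex hull of the images of its vertices, and convexity of $\Omega$ then reduces the proposition to
\[
f(C),\ f(Q),\ f(R),\ f(P),\ f(V)\ \in\ \Omega.
\]

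For $C$, $Q$ and $R$ the arguments are short. Lemma~\ref{lem:points} gives $f(C) \in W_0^s(Y)$; since $f(C)$ itself belongs to the defining set $\{Y, Z, f(Z), f(C), f^2(C)\}$ whose $y$-extremes are $Q$ and $R$, the point $f(C)$ lies on the edge $\overline{QR}$ of $\Omega$. For $f(Q)$ and $f(R)$, Lemma~\ref{lem:W^s_0(Y)} places them on $W_0^s(Y)$, on which $f$ acts as the affine contraction $p \mapsto Y + \lambda_L^s(p - Y)$ about $Y$. When $\lambda_L^s \ge 0$ the image lies between $Y$ and the preimage, hence inside $\overline{QR}$; when $\lambda_L^s < 0$ the image flips to the other side of $Y$, but a brief sign-chase using the iterate structure of the defining set (if $Q \in \{Z, f(C)\}$ then $f(Q) \in \{f(Z), f^2(C)\}$ is back in the set, while if $Q \in \{f(Z), f^2(C)\}$ then its preimage along $W_0^s(Y)$ already lies in the defining set with strictly larger magnitude, forcing $R$ to dominate $|f(Q)|$) again yields $f(Q), f(R) \in \overline{QR}$.

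The bulk of the work lies in the switching-line vertices $P$ and $V$. By continuity of $f$ at $\{x=0\}$, $f(0,y) = (y+1,0)$, so $f(P)$ and $f(V)$ lie on $\{y=0\}$. Since $Z$ is in the defining set with $y$-coordinate $0$ and $R_2 \le 0 \le Q_2$, the point $Z$ sits on $\overline{QR}$, so $\Omega \cap \{y=0\} = \overline{ZC}$. The task thus collapses to the four scalar inequalities
\[
Z_1 \le P_2 + 1 \le C_1, \qquad Z_1 \le V_2 + 1 \le C_1.
\]
Explicit formulas for $P_2$ and $V_2$ follow from linear interpolation between $C$ and whichever of $\{Y, Z, f(Z), f(C), f^2(C)\}$ realises $Q$ and $R$, and the joint identity of $(Q, R)$ partitions $\Phi_{\rm trap}$ into precisely the six subregions (A)--(F) of Fig.~\ref{fig:phis}, giving the natural case split.

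The main obstacle is this final case analysis: six regions, four inequalities in each, across four parameters. After clearing the manifestly positive denominators built from $\lambda_L^u - 1$, $1 - \lambda_L^s$, $1 - \lambda_R^s$ and $\phi_1(\xi)$, I expect each inequality to simplify to a variant of $\phi_3(\xi) > 0$, $\phi_4(\xi) > 0$, or $\phi_5(\xi) > 0$, with $\phi_1(\xi), \phi_2(\xi) > 0$ serving throughout to keep the coordinates well-defined. I anticipate $\phi_4 > 0$ (absence of the homoclinic corner $C = D$) to be decisive on the orientation-preserving subregions and $\phi_5 > 0$ (absence of the heteroclinic corner $C = T$) on the orientation-reversing ones, giving a unified explanation for why the definition of $\Phi_{\rm trap}$ uses precisely these five inequalities.
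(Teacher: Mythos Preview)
Your approach matches the paper's: reduce to a vertex check via convexity and piecewise-affinity, handle $f(C), f(Q), f(R)$ through the iterate structure of the defining set on $W_0^s(Y)$, and handle $f(P), f(V)$ on the line $y=0$ via explicit formulas and the $\phi_i$. The paper streamlines your anticipated $6\times 4$ case-analysis considerably: since $V$ lies strictly above $S$ one gets $f(V)_1 > Z_1$ for free, and since $V_2 \le P_2$ only the single inequality $P_2 + 1 \le C_1$ requires work, which splits into just three cases on the identity of $Q \in \{Y, Z, f(C)\}$ (not six on $(Q,R)$), governed respectively by $\phi_4 > 0$, $\phi_3 > 0$, and $\phi_5 > 0$.
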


\begin{proof}
The proof is long so we break it into three steps.

\myStep{1}{Characterise $f(\Omega)$.}
The vertices $Q$ and $R$ lie in the left half-plane,
while $C$ lies in the right half-plane.
Let $P$ denote the intersection of $\overline{Q C}$
(the line segment from $Q$ to $C$) with $x=0$,
and $V$ denote the intersection of $\overline{R C}$ with $x=0$,
see Fig.~\ref{fig:Omegas}.
From the formula \eqref{eq:C} for $C$,
the $y$-components of these points are given in terms of $Q$ and $R$ by
\begin{align}
P_2
&= \frac{{\lambda_L^u}^3Q_2}{{\lambda_L^u}^3 + [\lambda_L^u - Q_2(1-\lambda_L^u)] \phi_1(\xi)},
\label{eq:P2} \\
V_2
&= \frac{{\lambda_L^u}^3R_2}{{\lambda_L^u}^3 + [\lambda_L^u - R_2(1-\lambda_L^u)] \phi_1(\xi)}.
\label{eq:V2}
\end{align}
So $\Omega$ is the union of
the quadrilateral $\Omega_L$ in the left half-plane
with vertices $P$, $Q$, $R$, and $V$,
and the triangle $\Omega_R$ in the right half-plane
with vertices $C$, $P$, and $V$.
Thus $f(\Omega) = f(\Omega_L) \cup f(\Omega_R)$,
where $f(\Omega_L)$ and $f(\Omega_R)$ are polygons
because each piece of $f$ is affine.
Thus since $\Omega$ is convex, to prove $f(\Omega) \subset \Omega$
it suffices to show that the vertices of
$f(\Omega_L)$ and $f(\Omega_R)$
belong to $\Omega$.
These vertices are the points
$f(C)$, $f(P)$, $f(Q)$, $f(R)$, and $f(V)$.

\myStep{2}{Show $C$, $Q$, and $R$ map to $\Omega$.}
Certainly $f(C) \in \Omega$ by the definition of $\Omega$.
We now show $f(Q), f(R) \in \overline{QR}$
(the left edge of $\Omega$).
If $\delta_L > 0$ then $\lambda_L^s > 0$,
so $f(Q) \in \overline{QY}$
and $f(R) \in \overline{RY}$
so certainly $f(Q), f(R) \in \overline{QR}$.
Also if $\delta_L = 0$, then $f(Q) = f(R) = Y = Z \in \overline{QR}$.
Finally if $\delta_L < 0$, then $f(Q)$ and $f(R)$
lie below $y=0$, so lie below $Z$, and hence below $Q$.
In this case $Q$ is either $f(C)$ or $Z$
(because $Y$, $f(Z)$, and $f^2(C)$ lie below $y = 0$),
thus $f(Q)$ lies on or above $R$, by the definition of $R$.
Also $\lambda_L^s < 0$, thus $f(R)$ lies above $f(Q)$,
and hence above $R$.
Thus in any case $f(Q), f(R) \in \overline{QR}$.

\myStep{3}{Show $P$ and $V$ map to $\Omega$.}
The points $f(P)$ and $f(V)$ lie on $y=0$, specifically
\begin{align}
f(P) &= \begin{bmatrix} P_2 + 1 \\ 0 \end{bmatrix}, &
f(V) &= \begin{bmatrix} V_2 + 1 \\ 0 \end{bmatrix},
\end{align}
where $P_2$ and $V_2$ are given by \eqref{eq:P2} and \eqref{eq:V2}.
Also $V$ lies above $S$, thus $f(V)$ lies to the right of $Z=f(S)$.
Hence it remains for us to show that $f(P)$ lies at or to the left of $C$,
that is $C_1 - (P_2 + 1) \ge 0$.
To do this we consider the various possibilities for $Q$ in turn.
There are three cases: $Q$ is either $Y$, $Z$, or $f(C)$.
This is because $f(Z)$ cannot lie above $Y$, while $f^2(C)$ cannot lie above $Y$ if $\delta_L > 0$
and cannot lie above $Z$ if $\delta_L \le 0$.

{\bf Case 1}: With $Q = Z$ we have $P_2 = 0$.
Thus $C_1 - (P_2 + 1) > 0 = C_1 - 1 > 0$ because $\phi_3(\xi) > 0$.

{\bf Case 2}: With $Q = Y$, by substituting $Y_2$, given by \eqref{eq:Y},
in place of $Q_2$ in \eqref{eq:P2} we obtain
\begin{equation}
C_1 - (P_2 + 1) = \left(1 + \frac{\lambda_L^s{\lambda_L^u}^2}{{\lambda_L^u}^2(1-\lambda_L^s)+\phi_1(\xi)}\right)\frac{\phi_4(\xi)}{\phi_1(\xi)(\lambda_L^u - 1)}.
\nonumber
\end{equation}
This case requires $\delta_L \ge 0$, thus $\lambda_L^s \ge 0$.
Also $C_1 - (P_2 + 1) > 0$
because $\phi_1(\xi)>0$ and $\phi_4(\xi) >0$.

{\bf Case 3}: With $Q = f(C)$ we similarly obtain
\begin{equation}
C_1 - (P_2 + 1) = \left(1 + \frac{\lambda_R^s{\lambda_L^u}^2}{\lambda_L^u(\lambda_L^u-\delta_R)\phi_1(\xi)} \right)\frac{\phi_5(\xi)}{\phi_1(\xi)(\lambda_L^u - 1)}.
\nonumber
\end{equation}
This case requires $\delta_R \le 0$ (so that $f(C)_2 \ge 0$), thus $\lambda_R^s \ge 0$.
Also $\phi_1(\xi)>0$ and $\phi_5(\xi) >0$,
thus $C_1 - (P_2 + 1) > 0$.
\end{proof}


\begin{proposition}
\label{pr:trap}
Let $\xi \in \Phi_{\rm trap}$.
Given $\ep > 0$ define
\begin{align}
C_{\ep} &= C - (\ep,0), \nonumber \\
Q_{\ep} &= Q + \ep^2(C - R), \nonumber \\
R_{\ep} &= R + \ep^2(C - Q), \nonumber
\end{align}
and let $\Omega_\ep$
be the compact filled triangle with vertices
$C_\ep$, $Q_\ep$, and $R_\ep$.
Then $f(\Omega_\ep) \subset {\rm int}(\Omega_\ep)$,
for all sufficiently small $\ep > 0$.
\end{proposition}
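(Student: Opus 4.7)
The plan is to follow the vertex-checking strategy of Proposition~\ref{pr:for_inv}: since each piece of $f$ is affine and $\Omega_\ep$ is convex, to show $f(\Omega_\ep) \subset {\rm int}(\Omega_\ep)$ it is enough to show every vertex of the polygon $f(\Omega_\ep)$ lies in ${\rm int}(\Omega_\ep)$. First I would set up the perturbed geometry. The identity $R_\ep - Q_\ep = (1+\ep^2)(R - Q)$ shows $\overline{Q_\ep R_\ep}$ is parallel to $\overline{QR}$ with midpoint shifted by $\ep^2\bigl(C - \tfrac{1}{2}(Q+R)\bigr)$, i.e.\ inward toward $C$ by $\Theta(\ep^2)$, while the vertex $C$ is pulled leftward by $\ep$. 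Letting $P_\ep$ and $V_\ep$ denote the intersections of the two slanted edges of $\Omega_\ep$ with $x=0$, the vertices of $f(\Omega_\ep)$ are $f(C_\ep), f(P_\ep), f(Q_\ep), f(R_\ep), f(V_\ep)$, mirroring the proof of Proposition~\ref{pr:for_inv}.

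Most of the inequalities established in Proposition~\ref{pr:for_inv} hold by an $O(1)$ margin (they reduce to $\phi_i(\xi)>0$), so they survive an $O(\ep)$ perturbation. The only tight contacts of $f(\Omega)$ with $\partial\Omega$ that require care are: (a) $f(Q), f(R)$ lie on the edge $\overline{QR}$ (Step 2 of Proposition~\ref{pr:for_inv}); (b) in Case 3 of Step 3 the point $f(C)$ coincides with the vertex $Q$; and (c) the inequalities $C_1 > P_2+1$ and $C_1 > V_2+1$ used to place $f(P), f(V)$ strictly to the left of $C$ on $y=0$. For (c), a short calculation gives $C_{\ep,1} - (P_{\ep,2}+1) = (C_1 - P_2 - 1) - \ep + O(\ep^2)$, which is positive for small $\ep$ because $C_1 - P_2 - 1 > 0$ strictly; the same argument works for $V_\ep$.

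For (a), $f(Q_\ep) - f(Q) = \ep^2 A_L(C-R)$, and combining this with the parallel inward translation $\ep^2\bigl(C - \tfrac{1}{2}(Q+R)\bigr)$ of $\overline{Q_\ep R_\ep}$ yields the leading-order signed distance of $f(Q_\ep)$ from $\overline{Q_\ep R_\ep}$ as $\ep^2\,\hat{n}\cdot\bigl[A_L(C-R) - \bigl(C - \tfrac{1}{2}(Q+R)\bigr)\bigr]$, where $\hat n$ is the inward unit normal to $\overline{QR}$; I would verify positivity of the bracketed expression in each of the six sub-cases of Fig.~\ref{fig:Omegas}, and treat $f(R_\ep)$ analogously. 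For (b), $f(C_\ep) - f(C) = (-\tau_R\ep,\delta_R\ep)$, and since $Q_\ep - Q = O(\ep^2)$ is dominated by this $O(\ep)$ shift, I only need the direction $(-\tau_R,\delta_R)$ to point into the open interior cone of $\Omega$ at $Q$, which is a direct computation using the formulas for $f(C)$ and the adjacent vertices of $\Omega$.

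The main obstacle is the case analysis in (a): for each of the six configurations of $Q$ and $R$ in Fig.~\ref{fig:Omegas} one must verify positivity of the leading-order inward displacement using the explicit forms of $\lambda_L^s, \lambda_L^u, Y, Z, f(C), f(Z), f^2(C)$ together with the strict inequalities $\phi_i(\xi) > 0$. The split of scales --- $\ep$ for $C_\ep$ and $\ep^2$ for $Q_\ep, R_\ep$ --- is precisely what decouples (a) and (b): the $\ep$ scale suffices to push $f(C_\ep)$ off the tight vertex in (b) before the slower $\ep^2$-motion of $Q_\ep$ catches up, while the separate $\ep^2$ scale matches the parallel inward shift of $\overline{Q_\ep R_\ep}$ to the transverse shift of $f(Q_\ep), f(R_\ep)$ at the same order, allowing the two balances to be checked independently.
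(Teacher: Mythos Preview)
Your overall plan matches the paper's: split $\Omega_\ep$ along $x=0$ and check that each vertex of $f(\Omega_\ep)$ lands in ${\rm int}(\Omega_\ep)$. But your inventory of tight contacts is incomplete, and this leads to a genuine gap.

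First, the contact in your item (b) is not special to Case~3. By Lemma~\ref{lem:points} the point $f(C)$ always lies on $W_0^s(Y)$, hence on the left edge $\overline{QR}$ of $\Omega$; so $f(C)\in\partial\Omega$ in every sub-case. The paper handles this uniformly (its Step~2): $f(C_\ep)$ is an order-$\ep$ distance from $f(C)\in\overline{QR}$, while $\overline{Q_\ep R_\ep}$ is only an order-$\ep^2$ distance from $\overline{QR}$, so $f(C_\ep)$ lies strictly to the right of $\overline{Q_\ep R_\ep}$. Your direction check at the single vertex $Q$ would need to be upgraded to this edge-level argument.

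Second, and more seriously, you miss the case $Q=Y$. Then $f(Q)=Y=Q$ is the \emph{vertex} $Q$, so it is tight not only against $\overline{QR}$ (which your (a) addresses) but also against the upper edge $\overline{CQ}$. Showing $f(Q_\ep)$ lies strictly below $\overline{C_\ep Q_\ep}$ here is an honest $O(\ep^2)$ balance that does not follow from anything in your list; the paper carries it out explicitly (its Step~4), obtaining a leading coefficient $\beta$ whose positivity ultimately rests on $\phi_4(\xi)>0$. An analogous check is needed for $f(R_\ep)$ against the lower edge. Without this, your argument does not close.

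Finally, your proposed six-case verification for (a) is unnecessary. Since $\overline{QR}$ lies along the stable eigendirection of $Y$ and $\overline{Q_\ep R_\ep}$ is parallel to it, the displacement of $f(Q_\ep)$ from $\overline{QR}$ in the unstable eigencoordinate is exactly $\lambda_L^u$ times that of $Q_\ep$; since $\lambda_L^u>1$, $f(Q_\ep)$ automatically lies to the right of $\overline{Q_\ep R_\ep}$, in every sub-case. This is the paper's argument and it replaces your case split entirely.
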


The triangle $\Omega_\ep$ is shown in Fig.~\ref{fig:trapping_region} for one combination of parameter values.
It has been defined so that its left edge lies to the right of $W_0^s(Y)$ and is parallel to $W^s_0(Y)$.
Due to the saddle nature of $Y$, this edge shifts further to the right when iterated under $f$.
Also the left edge is an order $\ep^2$ distance from $W_0^s(Y)$
to ensure $f(C_\ep)$ lies to the right of the image of this edge.

\begin{figure}[b!]
\vskip 6pt
\centering
\includegraphics[width=.7\linewidth]{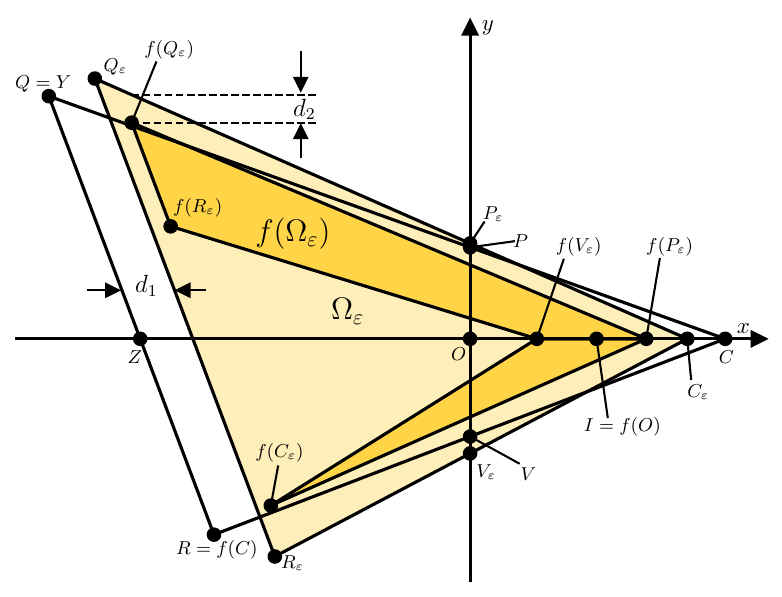}
\caption{
The trapping region $\Omega_\ep$
and its image under $f$.
This figure was produced using
$\xi = \{1.6, 0.3, -1.5, 0.4 \}$, as in Fig.~\ref{fig:Omegas}-A,
and $\ep = 0.3$ which is small enough that $f(\Omega_\ep) \subset {\rm int}(\Omega_\ep)$ in accordance with Proposition~\ref{pr:trap}.
}\label{fig:trapping_region}
\end{figure}

\begin{proof}
\myStep{1}{Characterise $f(\Omega_\ep)$.}
Assume $\ep > 0$ is sufficiently small
that $Q_\ep$ and $R_\ep$ lie to the left of $x=0$
and $C_\ep$ lies to the right of $x=0$.
Then the line segment $\overline{C_\ep Q_\ep}$
intersects $x=0$ at a unique point $P_\ep$,
as does $\overline{C_\ep R_\ep}$ at a point $V_\ep$.
Similar to the previous proof,
it remains for us to show
that $C_\ep$, $P_\ep$, $Q_\ep$, $R_\ep$,
and $V_\ep$ map to the interior of $\Omega_\ep$.

\myStep{2}{Show $C_\ep$ maps to the interior
of $\Omega_\ep$.}
Let $O = (0,0)$ denote the origin
and $I = (1,0)$ be its image under $f$.
Also let $\ell = \overline{C O}$.  This line segment maps under
the right piece of $f$ to the line segment from $f(C_\ep)$ to $I$.
Since $C_\ep \in \ell$ is an order $\ep$ distance from $C$,
its image $f(C_\ep) \in f(\ell)$
is an order $\ep$ distance from $f(C)$, which belongs to $\overline{Q R}$.
Since $\overline{Q_\ep R_\ep}$ is an order $\ep^2$
distance from $\overline{Q R}$,
$f(C_\ep)$ must lie to the right of $\overline{Q_\ep R_\ep}$
for sufficiently small $\ep > 0$.

Also, $f(\ell)$ lies inside the triangle $Q R I$.
Since $C_\ep$ lies to the right of $I$
(because $\phi_3(\xi) > 0$ and assuming $\ep$
is sufficiently small),
this triangle lies below $\overline{Q_\ep C_\ep}$
and above $\overline{R_\ep C_\ep}$.
Thus $f(C_\ep)$ lies below
$\overline{Q_\ep C_\ep}$
and above $\overline{R_\ep C_\ep}$.
Thus $f(C_\ep)$ lies inside all three edges of
$\Omega_\ep$, hence $f(C_\ep) \in {\rm int}(\Omega_\ep)$.

\myStep{3}{Show $P_\ep$ and $V_\ep$ map to the interior of $\Omega_\ep$.}
Let $A$ be any point on $\overline{C Z}$ with $A \ne C, Z$.
Then there exists sufficiently small $\ep > 0$
such that $A \in {\rm int}(\Omega_\ep)$.
We know $f(P)$ and $f(V)$ are located on $\overline{C Z} \setminus \{ C, Z \}$,
because by construction $Z_1 \le f(V)_1 < f(P)_1 < C_1$,
so belong to ${\rm int}(\Omega_\ep)$ for sufficiently small $\ep > 0$.
The same is true for $f(P_\ep)$ and $f(V_\ep)$
because $P_\ep \to P$ and $V_\ep \to V$ as $\ep \to 0$.

\myStep{4}{Show $Q_\ep$ and $R_\ep$ map to the interior of $\Omega_\ep$.}
For brevity we just show $f(Q_\ep) \in {\rm int}(\Omega_\ep)$ ($f(R_\ep) \in {\rm int}(\Omega_\ep)$
can be shown similarly). Let $d_1 > 0$ be the distance that $\overline{Q_\ep R_\ep}$ lies
to the right of $\overline{QR}$, see Fig.~\ref{fig:trapping_region}.
Then $f(Q_{\ep})$ lies a distance $\lambda_L^u d_1$ to the right of
$\overline{QR}$, as $Y$ is a saddle fixed point
with stable direction $\overline{QR}$ and unstable eigenvalue $\lambda_L^u$.
Since $\lambda_L^u>1$, the point $f(Q_{\ep})$ lies to the right of the line $\overline{Q_{\ep}R_{\ep}}$.

If $Q \ne Y$, then $f(Q)$ lies an order $1$ distance below $\overline{CQ}$,
thus $f(Q_\ep)$ lies below $\overline{C_\ep Q_\ep}$
for sufficiently small $\ep > 0$.
Now consider the case $Q=Y$.
As shown in Fig.~\ref{fig:trapping_region} let $d_2$ be the vertical displacement from $f(Q_\ep)$
upwards to the line $\overline{Q_\ep C_\ep}$
(we will show $d_2 > 0$).
By direct calculations $d_2 = \beta \ep^2 + O \left( \ep^3 \right)$ where
\begin{equation}
\beta = \delta_L (C_1 - R_1) (C_1 + (\tau_L - 2) Y_1)
- R_2 (C_1 + (\delta_L - 1) Y_1).
\nonumber
\end{equation}
Let $p = C_1 - D_1$ and notice $p > 0$
by \eqref{eq:phi4origin} because $\phi_4(\xi) > 0$.
Using also $D_1 = \left( 1 - \lambda_L^u \right) Y_1$
by \eqref{eq:Y} and \eqref{eq:D}, we obtain
\begin{equation}
\beta = \delta_L (C_1 - R_1) \left( p - Y_1 \left( 1 - \lambda_L^s \right) \right)
- R_2 \left( p - Y_1 \lambda_L^u \left( 1 - \lambda_L^s \right) \right),
\nonumber
\end{equation}
which is positive by inspection (e.g.~$Y_1 < 0$).
This shows that $d_2 > 0$ for sufficiently small values of $\ep$,
that is $f(Q_\ep)$ lies below the upper edge of $\Omega_\ep$.
By similar calculations one can show that $f(Q_\ep)$ lies above
the lower edge of $\Omega_\ep$,
and therefore $f(Q_\ep) \in {\rm int}(\Omega_\ep)$.
\end{proof}

\section{Invariant expanding cones}
\label{sec:cone}

In this section we first define cones and what it means for them to be invariant and expanding under an arbitrary matrix $A$.  We then focus on the Jacobian matrices $A_L$ and $A_R$ of the normal form (1), because in order for a cone to establish chaos in (1) it needs to be invariant and expanding for both $A_L$ and $A_R$.  We then explicitly construct such a cone for any $\xi \in \Phi_{\rm cone}$ (Proposition 3), and use this to prove Theorem 2.2.

\begin{definition}
A set $C \subset \mathbb{R}^2$ is a {\em cone}
if $\alpha v \in C$ for all $v \in C$ and $\alpha \in \mathbb{R}$.
\label{df:cone}
\end{definition}

\begin{definition}
Given $A \in \mathbb{R}^{2 \times 2}$, a cone $C \subset \mathbb{R}^2$ is
\begin{enumerate}
\item[i)]
{\em invariant} under $A$ if $A v \in C$ for all $v \in C$, and
\item[ii)]
{\em expanding} under $A$ if there exists $c > 1$ such that $\| A v \| \ge c \| v \|$ for all $v \in C$.
\end{enumerate}
\label{df:iec}
\end{definition}

In this paper we use the Euclidean norm $\| v \| = \sqrt{v_1^2 + v_2^2}$ and it suffices to consider cones of the form
\begin{equation}
\Psi_K = \left\{ \alpha \begin{bmatrix} 1 \\ m \end{bmatrix} \,\middle|\, \alpha \in \mathbb{R}, m \in K \right\},
\label{eq:PsiK}
\end{equation}
where $K$ is an interval.
Since $v \mapsto A v$ is a linear map,
to verify invariance and expansion of a cone $\Psi_K$,
it suffices to verify properties (i) and (ii) for vectors
of the form $v = \begin{bmatrix} 1 \\ m \end{bmatrix}$:

\begin{lemma}
If $A v \in \Psi_K$ for all $v = \begin{bmatrix} 1 \\ m \end{bmatrix}$ with $m \in K$,
then $\Psi_K$ is invariant under $A$.
If there exists $c > 1$ such that $\| A v \| \ge c \| v \|$ for all
$v = \begin{bmatrix} 1 \\ m \end{bmatrix}$ with $m \in K$,
then $\Psi_K$ is expanding under $A$.
\label{le:suff}
\end{lemma}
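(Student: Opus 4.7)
The plan is to reduce both claims to the linearity of $A$ together with the scale-invariance built into the definition of $\Psi_K$. By \eqref{eq:PsiK}, every vector $w \in \Psi_K$ has the form $w = \alpha v$ with $\alpha \in \mathbb{R}$ and $v = \begin{bmatrix} 1 \\ m \end{bmatrix}$ for some $m \in K$ (the zero vector corresponding to $\alpha = 0$). Linearity of matrix multiplication then gives $A w = \alpha A v$, and this single identity drives both parts.

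For invariance, I would apply the hypothesis to obtain $A v \in \Psi_K$, and since $\Psi_K$ is a cone in the sense of Definition~\ref{df:cone} (closed under multiplication by any real scalar), the vector $\alpha A v = A w$ also lies in $\Psi_K$. For expansion, I would use the homogeneity $\|\alpha u\| = |\alpha|\,\|u\|$ of the Euclidean norm to write
\begin{equation}
\| A w \| \;=\; |\alpha| \, \| A v \| \;\ge\; c |\alpha| \, \| v \| \;=\; c \| w \|,
\nonumber
\end{equation}
where the middle step is the assumed bound on vectors of unit first-component form. Because the constant $c > 1$ does not depend on $m \in K$ or on $\alpha$, the same $c$ works uniformly for every $w \in \Psi_K$.

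There is no genuine obstacle: the lemma is essentially a one-line reduction. Its real value appears in the later sections, where the invariance and expansion of $\Psi_K$ under the Jacobians $A_L$ and $A_R$ need only be verified by tracking the image of the slope $m \in K$ rather than quantifying over the entire cone, which is what makes the explicit construction of $\Phi_{\rm cone}$ tractable.
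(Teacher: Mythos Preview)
Your argument is correct and is exactly the reduction the paper has in mind; in fact the paper states this lemma without proof, treating it as an immediate consequence of linearity and Definition~\ref{df:cone}. There is nothing to add or change.
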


Now we focus on the Jacobian matrices
\begin{equation}
A_J = \begin{bmatrix} \tau_J & 1 \\ -\delta_J & 0 \end{bmatrix},
\nonumber
\end{equation}
where $J \in \{ L, R \}$, of the normal form \eqref{eq:BCNF2}.
The slope of $v = \begin{bmatrix} 1 \\ m \end{bmatrix}$ is $m$
and the slope of $A_J v$ is
\begin{align}
\label{eq:GJ}
G_J(m) = -\frac{\delta_J}{\tau_J+m},
\end{align}
assuming $m \ne -\tau_J$.
Notice
\begin{align}
\label{eq:d_GJ}
\frac{d G_J(m)}{dm} = \frac{\delta_J}{(\tau_J+m)^2},
\end{align}
thus $G_J(m)$ is increasing if $\delta_J > 0$,
decreasing if $\delta_J < 0$,
and flat if $\delta_J = 0$.
In any case, $G_J(m)$ is monotone and so in order to verify invariance
under $A_J$ it suffices to consider the endpoints of $K$:

\begin{lemma}
Let $\tau_J, \delta_J \in \mathbb{R}$ and $K = [a,b]$ be an interval
with $-\tau_J \notin K$.
If $a \le G_J(a) \le b$ and $a \le G_J(b) \le b$
then $\Psi_K$ is invariant under $A_J$.
\label{le:inv}
\end{lemma}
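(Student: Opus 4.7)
The plan is to invoke Lemma \ref{le:suff} and reduce the task to showing that $G_J(m) \in K$ for every $m \in K$. For any $v = \begin{bmatrix} 1 \\ m \end{bmatrix}$ with $m \in K$, a direct calculation gives $A_J v = \begin{bmatrix} \tau_J + m \\ -\delta_J \end{bmatrix}$. Since the hypothesis $-\tau_J \notin K$ forces $\tau_J + m \ne 0$, we can factor $A_J v = (\tau_J + m) \begin{bmatrix} 1 \\ G_J(m) \end{bmatrix}$, and this vector lies in $\Psi_K$ precisely when $G_J(m) \in K$. So the whole lemma collapses to verifying $G_J([a,b]) \subseteq [a,b]$.

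The heart of the argument is then the monotonicity of $G_J$ recorded in \eqref{eq:d_GJ}. The hypothesis $-\tau_J \notin K$ is used a second time here: it ensures that $G_J$ is continuous and strictly monotone throughout $K$ with no singularity in the interior. I would split into cases on the sign of $\delta_J$. If $\delta_J > 0$, then $G_J$ is strictly increasing on $K$, so for any $m \in [a,b]$ we have $G_J(a) \le G_J(m) \le G_J(b)$, and the hypotheses $a \le G_J(a)$ and $G_J(b) \le b$ close the inclusion. If $\delta_J < 0$, then $G_J$ is strictly decreasing and the inequalities $a \le G_J(b)$ and $G_J(a) \le b$ play the symmetric role. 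If $\delta_J = 0$, then $G_J \equiv 0$ and the hypothesis $a \le G_J(a) \le b$ is simply the statement that $0 \in [a,b]$, which trivially gives $G_J(m) \in K$.

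There is no real obstacle. The only subtle point to flag is the dual role of the assumption $-\tau_J \notin K$: it both prevents $A_J v$ from having vanishing first component (so that the slope $G_J(m)$ is well-defined for every $m \in K$) and guarantees that the monotonicity derived from \eqref{eq:d_GJ} holds uniformly across the interval rather than on two disjoint pieces separated by a vertical asymptote. With these two uses isolated, the proof is a short three-case monotone-function argument with no calculation beyond what is already in \eqref{eq:GJ} and \eqref{eq:d_GJ}.
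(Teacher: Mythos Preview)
Your proposal is correct and follows essentially the same approach as the paper: reduce via Lemma~\ref{le:suff} to showing $G_J(m)\in K$ for all $m\in K$, then use $-\tau_J\notin K$ to get continuity and monotonicity of $G_J$ on $K$ so that $G_J(m)$ lies between $G_J(a)$ and $G_J(b)$. The only cosmetic difference is that you split explicitly on the sign of $\delta_J$, whereas the paper handles all cases at once by noting that a monotone function on $[a,b]$ takes values between its endpoint values, both of which lie in $[a,b]$ by hypothesis.
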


\begin{proof}
Since $-\tau_J \notin K$,
by \eqref{eq:GJ} and \eqref{eq:d_GJ} $G_J(m)$ is continuous and monotone on $K$.
Thus for any $m \in K$, $G_J(m)$ is equal to or lies between the values $G_J(a)$ and $G_J(b)$.
Thus $a \le G_J(m) \le b$.  That is, the slope of $A_J \begin{bmatrix} 1 \\ m \end{bmatrix}$
belongs to $K$, thus $A_J \begin{bmatrix} 1 \\ m \end{bmatrix} \in \Psi_K$.  Hence $\Psi_K$ is invariant under $A_J$ by Lemma \ref{le:suff}.
\end{proof}

Next we introduce the function
\begin{equation}
H_J(m) = \left\| A_J \begin{bmatrix} 1 \\ m \end{bmatrix} \right\|^2
- \left\| \begin{bmatrix} 1 \\ m \end{bmatrix} \right\|^2
= \tau_J^2 + \delta_J^2 - 1 + 2 \tau_J m.
\label{eq:HJ}
\end{equation}
It is easy to show that if $H_J(m) > 0$ for all $m$ in a compact interval $K$,
then $\Psi_K$ is expanding under $A_J$.
Since $H_J(m)$ is a linear function of $m$ it again suffices to consider the endpoints of $K$:

\begin{lemma}
Let $\tau_J, \delta_J \in \mathbb{R}$ and $K = [a,b]$ be an interval.
If $H_J(a) > 0$ and $H_J(b) > 0$ then $\Psi_K$ is expanding under $A_J$.
\label{le:exp}
\end{lemma}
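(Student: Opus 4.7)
The plan is to first use the affinity of $H_J$ in $m$: since $H_J(m) = \tau_J^2 + \delta_J^2 - 1 + 2 \tau_J m$ is affine and both endpoints $H_J(a)$ and $H_J(b)$ are positive, I conclude $H_J(m) > 0$ for every $m$ in the compact interval $K = [a,b]$. The defining identity \eqref{eq:HJ} then rearranges to
\begin{equation}
\left\| A_J \begin{bmatrix} 1 \\ m \end{bmatrix} \right\|^2
= \left\| \begin{bmatrix} 1 \\ m \end{bmatrix} \right\|^2 + H_J(m),
\nonumber
\end{equation}
in which the additive term is strictly positive throughout $K$.

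Next, I would upgrade this pointwise additive positivity to a \emph{uniform multiplicative} lower bound. The function $m \mapsto H_J(m)/(1+m^2)$ is continuous on the compact set $K$, so it attains a minimum value $\eta$, and $\eta > 0$ because $H_J > 0$ on $K$. Setting $c = \sqrt{1+\eta}$, which is strictly greater than $1$, the previous display gives
\begin{equation}
\left\| A_J \begin{bmatrix} 1 \\ m \end{bmatrix} \right\|^2
\geq (1+\eta) \left\| \begin{bmatrix} 1 \\ m \end{bmatrix} \right\|^2
= c^2 \left\| \begin{bmatrix} 1 \\ m \end{bmatrix} \right\|^2
\nonumber
\end{equation}
for every $m \in K$.

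Finally, an arbitrary element of $\Psi_K$ has the form $v = \alpha \begin{bmatrix} 1 \\ m \end{bmatrix}$ for some $\alpha \in \mathbb{R}$ and $m \in K$, so linearity of $A_J$ together with homogeneity of the Euclidean norm extends the previous inequality to $\|A_J v\| \geq c \|v\|$, which is the expansion condition of Definition \ref{df:iec}(ii). The one real subtlety, and therefore the main obstacle worth flagging, is the step from additive positivity to a uniform multiplicative bound: it uses compactness of $K$ essentially, since $H_J$ is linear in $m$ while $1+m^2$ grows quadratically, so $H_J(m)/(1+m^2) \to 0$ as $|m| \to \infty$ and no uniform constant $c > 1$ could be extracted on an unbounded interval. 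This is the analogue for expansion of the reduction already carried out in Lemma \ref{le:suff} for invariance.
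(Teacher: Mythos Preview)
Your proof is correct and follows essentially the same approach as the paper's: use affinity of $H_J$ to get positivity throughout $K$, then use compactness of $K$ to upgrade the additive bound $\|A_J v\|^2 = \|v\|^2 + H_J(m)$ to a uniform multiplicative one. The only cosmetic difference is that the paper bounds numerator and denominator separately (taking $h = \min\{H_J(a),H_J(b)\}$ and $n = \max_{m\in K}(1+m^2)$ to get $c = \sqrt{1+h/n}$), whereas you minimise the ratio $H_J(m)/(1+m^2)$ directly; and the paper invokes Lemma~\ref{le:suff} for the extension to general $v \in \Psi_K$ rather than spelling out the homogeneity argument.
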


\begin{proof}
Let $h = \min[H_J(a),H_J(b)] > 0$.
By \eqref{eq:HJ}, $H_J(m) \ge h$ for all $m \in K$.
Then for any $m \in K$ the vector $v = \begin{bmatrix} 1 \\ m \end{bmatrix}$ satisfies
\begin{equation}
\| A_J v \|^2 = H_J(m) + \| v \|^2 \ge h + \| v \|^2
= \left( \tfrac{h}{\| v \|^2} + 1 \right) \| v \|^2
\ge \left( \frac{h}{n} + 1 \right) \| v \|^2,
\nonumber
\end{equation}
where $n = \max_{m \in K} (1+m^2)$.
Thus $\Psi_K$ is expanding under $A_J$
(with expansion factor $c = \sqrt{\frac{h}{n} + 1} > 1$)
by Lemma \ref{le:suff}.
\end{proof}

To prove chaos in~\eqref{eq:BCNF2} we need to choose $K = [a,b]$ so that $\Psi_K$ is invariant under $A_L$ and $A_R$.
This favours the interval $K$ being relatively large. However, we want $K$ to be as small as possible in order
to maximise the parameter region over which it is expanding under $A_L$ and $A_R$. This balancing act motivates the following calculations that form the basis of our definition of $K$ given below in Proposition~\ref{pr:cone}.

For each $J \in \{ L, R \}$, the fixed point equation $G_J(m) = m$ is quadratic in $m$.
If $\delta_J \ne 0$ and $\delta_J < \frac{\tau_J^2}{4}$,
then $G_J$ has exactly two fixed points
\begin{align}
q_J &= -\frac{\tau_J}{2}\left(1 - \sqrt{1 - \frac{4\delta_J}{\tau_J^2}} \right), &
r_J &= -\frac{\tau_J}{2}\left(1 + \sqrt{1 - \frac{4\delta_J}{\tau_J^2}} \right).
\label{eq:qJrJ}
\end{align}
\begin{figure}
\begin{center}
\begin{tabular}{cc}
  \includegraphics[scale=0.45]{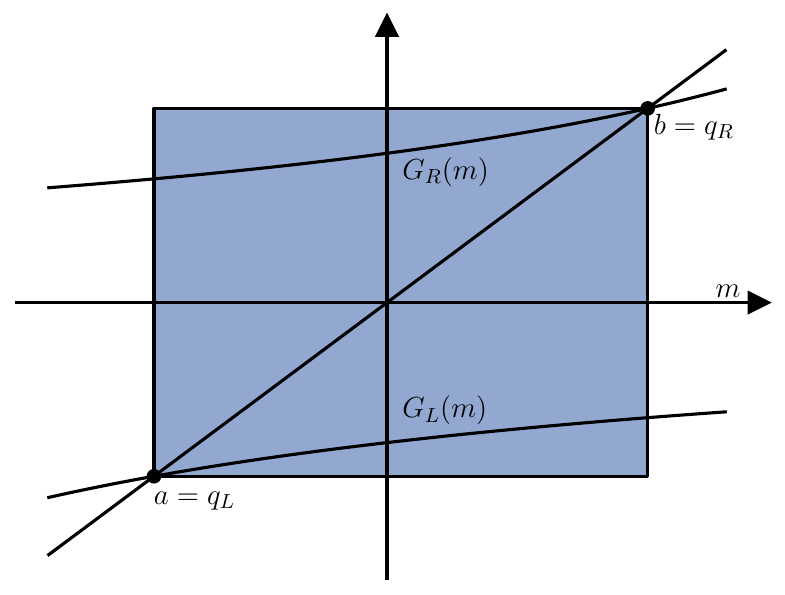} &   \includegraphics[scale=0.45]{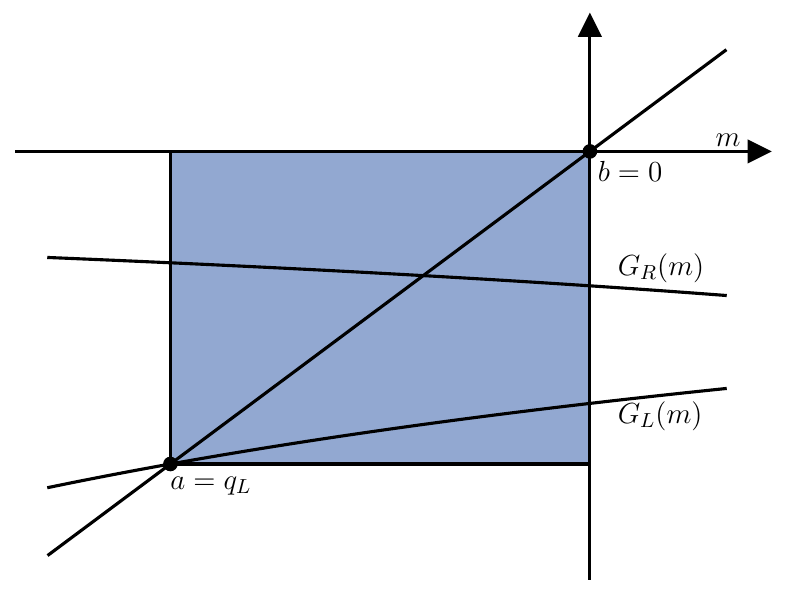} \\
(I) $\delta_L=0.4$, $\delta_R=0.4$. & (II) $\delta_L=0.4$, $\delta_R=-0.2$.\\[6pt]
 \includegraphics[scale=0.45]{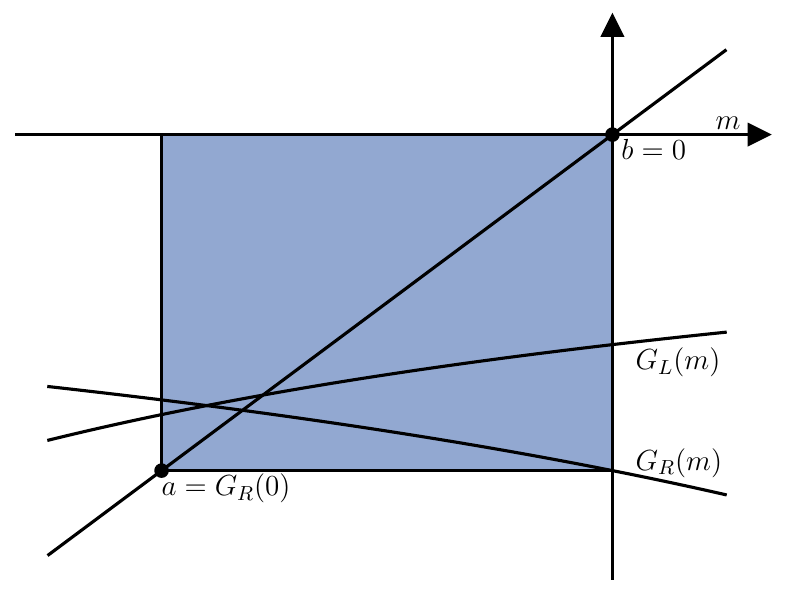} &   \includegraphics[scale=0.45]{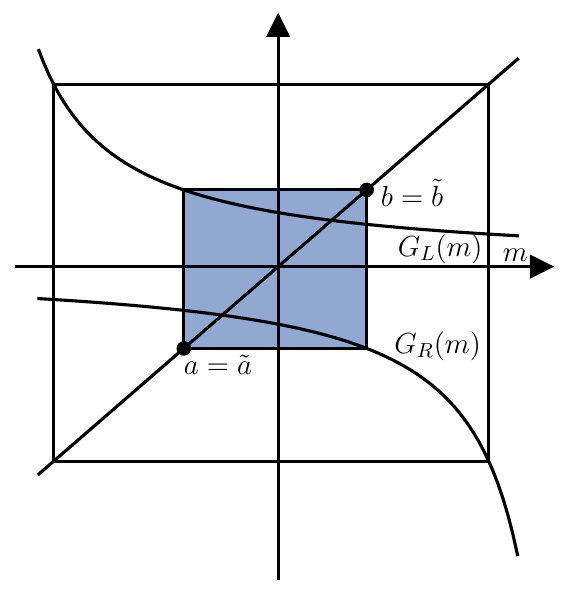} \\
(III) $\delta_L=0.4$, $\delta_R=-0.6$. & (IV) $\delta_L=-0.5$, $\delta_R=-0.5$.\\[6pt]
 \includegraphics[scale=0.45]{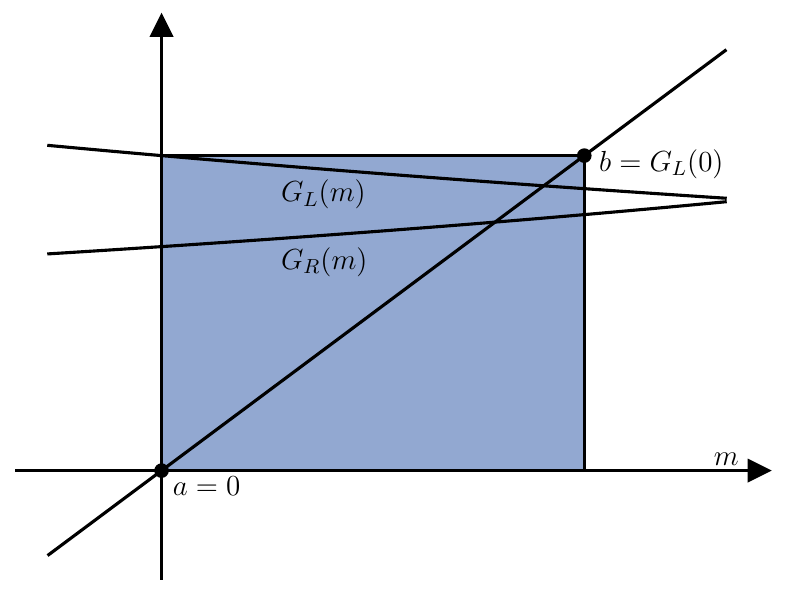} &   \includegraphics[scale=0.45]{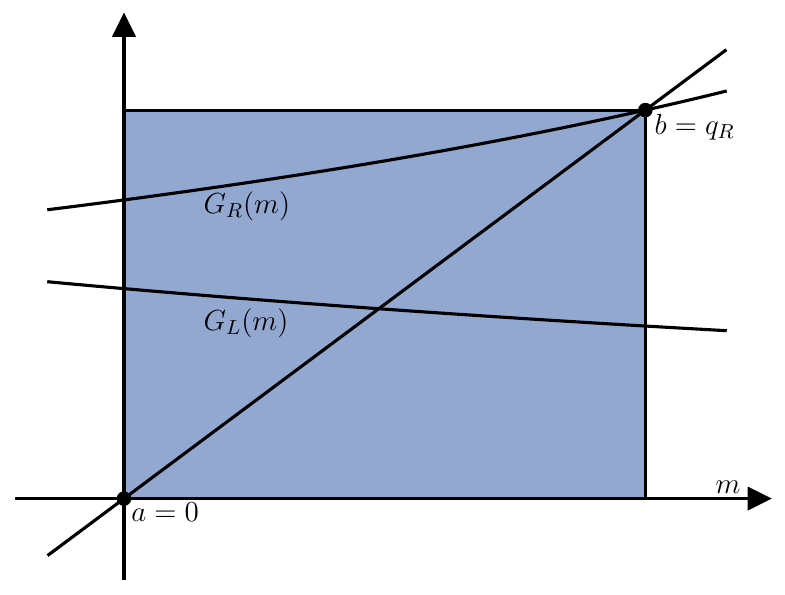} \\
(V) $\delta_L=-0.3$, $\delta_R=0.2$. & (VI) $\delta_L=-0.3$, $\delta_R=0.4$. \\[6pt]
\end{tabular}
\end{center}
\caption{
Cobweb diagrams of the slope maps $G_L(m)$ and $G_R(m)$ \eqref{eq:GJ}
with $\tau_L=1.6$, $\tau_R=-1.5$,
and six different values of $\delta_L$ and $\delta_R$.
These correspond to parts (I)--(VI) of Fig.~\ref{fig:cone_regions}-b.
In each case $K = [a,b]$, defined by~\eqref{eq:a} and \eqref{eq:b},
is forward invariant under $G_L$ and $G_R$ as shown.
This implies $\Psi_K$ is invariant under $A_L$ and $A_R$ (see Lemma~\ref{le:inv}).  In each case $K$ contains the fixed points $q_L$ and $q_R$; in case (IV) $K$ is defined using one of the two period-two solutions.
}
\label{fig:slope_maps}
\end{figure}In order for $\Psi_K$ to be invariant under $A_L$ and $A_R$, we define $K$ so that it contains $q_L$ and $q_R$, see Fig.~\ref{fig:slope_maps}. So the smallest interval we consider is $K = [q_L,q_R]$,.  In the orientation-preserving case ($\delta_L, \delta_R > 0$), this interval indeed gives invariance, as shown in \cite{GlSi21}. In the orientation-reversing case ($\delta_L, \delta_R < 0$),
invariance requires that $K$ contains a period-two solution.
The equation $(G_R \circ G_L)(m) = m$ is quadratic in $m$ with discriminant
\begin{equation}
\theta_1(\xi) = \left( \delta_L + \delta_R - \tau_L \tau_R \right)^2
- 4 \delta_L \delta_R \,,
\nonumber
\end{equation}
repeating \eqref{eq:theta1}.
So if $\theta_1(\xi) > 0$ there are two period-two solutions, Fig.~\ref{fig:slope_maps}-(IV).
Of these the inner-most solution is $\{ \tilde{a},\tilde{b} \}$, where
\begin{align}
\tilde{a} &= \frac{\delta_L -\delta_R-\tau_L\tau_R - \sqrt{\theta_1(\xi)}}{2\tau_R}, &
\tilde{b} &= \frac{\delta_R -\delta_L-\tau_L\tau_R - \sqrt{\theta_1(\xi)}}{2\tau_L},
\label{eq:tildeatildebagain}
\end{align}
satisfying $G_L(\tilde{a}) = \tilde{b}$ and $G_R(\tilde{b}) = \tilde{a}$.
So in this case the smallest interval we can take is $K = [\tilde{a},\tilde{b}]$, as
used by Misiurewicz \cite{Mi80}.

In the non-invertible cases
the slope maps $G_L$ and $G_R$ are either both non-negative or both non-positive, see again Fig.~\ref{fig:slope_maps}.
Thus a simple and effective choice for one endpoint of $K$ is $m=0$.
In this case the smallest interval leading to invariance uses also one of $q_L$, $q_R$, or the image of $m=0$ under $G_L$ or $G_R$:
\begin{align}
G_L(0) &= -\frac{\delta_L}{\tau_L}, &
G_R(0) &= -\frac{\delta_R}{\tau_R}.
\label{eq:GL0GR0}
\end{align}

Proposition~\ref{pr:cone} shows that all cases
can be accommodated by simply defining $a$ and $b$ as the minimum and maximum of all points suggested above.
Recall $\Phi_{\rm cone}$ was defined in \S\ref{sec:main} as the set of all $\xi \in \Phi$
for which $\theta_1(\xi)$, $\theta_2(\xi)$, and $\theta_3(\xi)$ are positive.
The condition $\theta_1(\xi) > 0$ ensures $\tilde{a}$ and $\tilde{b}$ are well-defined,
while, if $a$ and $b$ are given by \eqref{eq:a} and \eqref{eq:b},
\begin{align}
\theta_2(\xi) &= \tau_L^2 + \delta_L^2 - 1 + 2 \tau_L a = H_L(a), \label{eq:theta2simplified} \\
\theta_3(\xi) &= \tau_R^2 + \delta_R^2 - 1 + 2 \tau_R b = H_R(b), \label{eq:theta3simplified}
\end{align}
and $\theta_2(\xi) > 0$ and $\theta_3(\xi) > 0$ ensure $\Psi_K$ is invariant and expanding.

\begin{proposition}
Let $\xi \in \Phi_{\rm cone}$ and $K = [a,b]$ where
\begin{align}
a &= \min \left[ 0, -\tfrac{\delta_R}{\tau_R}, q_L, \tilde{a} \right], \label{eq:a} \\
b &= \max \left[ 0, -\tfrac{\delta_L}{\tau_L}, q_R, \tilde{b} \right]. \label{eq:b}
\end{align}
Then $\Psi_K$ is invariant and expanding under $A_L$ and $A_R$.
\label{pr:cone}
\end{proposition}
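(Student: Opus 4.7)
My plan is to verify invariance and expansion of $\Psi_K$ separately, applying Lemmas \ref{le:inv} and \ref{le:exp} to the interval $K = [a,b]$, and to organise the invariance verification around the cases shown in Figure \ref{fig:slope_maps}.

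Expansion comes essentially for free from the hypothesis $\xi \in \Phi_{\rm cone}$. Since $\xi \in \Phi$ forces $\tau_L > 0$ and $\tau_R < 0$, formula \eqref{eq:HJ} shows that $H_L$ is strictly increasing on $\mathbb{R}$ while $H_R$ is strictly decreasing. Identities \eqref{eq:theta2simplified} and \eqref{eq:theta3simplified} give $H_L(a) = \theta_2(\xi) > 0$ and $H_R(b) = \theta_3(\xi) > 0$, so monotonicity yields $H_L(b) \ge H_L(a) > 0$ and $H_R(a) \ge H_R(b) > 0$. Lemma \ref{le:exp} then provides expansion of $\Psi_K$ under both $A_L$ and $A_R$.

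For invariance I would first verify the side condition that $-\tau_L, -\tau_R \notin K$ so that Lemma \ref{le:inv} can be applied; using $\tau_L > 0$, $\tau_R < 0$, and the explicit formulas \eqref{eq:qJrJ} and \eqref{eq:tildeatildebagain}, one checks $-\tau_L < a$ and $b < -\tau_R$ term by term on the four candidates in the min/max. The bulk of the proof then amounts to checking $a \le G_J(a), G_J(b) \le b$ for $J \in \{L,R\}$. The algebraic facts driving this are the fixed-point identities $G_L(q_L) = q_L$ and $G_R(q_R) = q_R$, the period-two identities $G_L(\tilde a) = \tilde b$ and $G_R(\tilde b) = \tilde a$, and the images $G_L(0) = -\delta_L/\tau_L$, $G_R(0) = -\delta_R/\tau_R$ of the origin. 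Together with the monotonicity of $G_J$ on $K$ (increasing, decreasing, or constant according to the sign of $\delta_J$), knowing $G_J$ at one endpoint controls its value at the other.

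The remaining case analysis mirrors the six regions in Figure \ref{fig:slope_maps}. In the orientation-preserving case one has $a = q_L$ and $b = q_R$, so $G_L(a) = a$ and $G_R(b) = b$; the other two inequalities follow because $q_L$ and $q_R$ attract nearby slopes within $K$ under $G_L$ and $G_R$ respectively, as in \cite{GlSi21}. In the orientation-reversing case $a = \tilde a$, $b = \tilde b$, and the period-two identities immediately handle two of the four inequalities, the others following from monotonicity and the definition of $\tilde a, \tilde b$ as the innermost period-two orbit. I expect the main obstacle to be the mixed non-invertible cases where $\delta_L$ and $\delta_R$ have opposite signs: there the endpoint $0$ enters via \eqref{eq:a} or \eqref{eq:b}, and one must combine $G_J(0) = -\delta_J/\tau_J$ with whichever fixed point or image attains the extremum. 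In each such case the needed inequalities hold because $a$ and $b$ are defined as the outer envelope of all candidate slopes, so the image of either endpoint under $G_L$ or $G_R$ necessarily lies within $[a,b]$. Combining invariance and expansion completes the proof via Lemmas \ref{le:inv} and \ref{le:exp}.
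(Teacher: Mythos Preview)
Your overall approach matches the paper's: prove expansion via the identities $H_L(a)=\theta_2(\xi)$, $H_R(b)=\theta_3(\xi)$ and monotonicity of $H_J$, then prove invariance by applying Lemma~\ref{le:inv} after a sign-based case analysis exploiting the fixed points $q_J$, the period-two points $\tilde a,\tilde b$, and $G_J(0)=-\delta_J/\tau_J$. The paper organises the invariance check by the sign of $\delta_L$ (then $\delta_R$) rather than the six pictorial regions, but this is cosmetic.

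There is, however, a genuine gap in your plan to verify $-\tau_L<a$ and $b<-\tau_R$ ``term by term on the four candidates in the min/max'' using only $\xi\in\Phi$ and the explicit formulas. This fails: for instance with $\tau_L=1.1$, $\delta_L=0.05$, $\tau_R=-1.5$, $\delta_R=-2$ one has $\xi\in\Phi$ but $-\delta_R/\tau_R=-\tfrac{4}{3}<-\tau_L$, and this candidate is in fact the minimum, so $a<-\tau_L$. Of course this $\xi$ lies outside $\Phi_{\rm cone}$ (here $\theta_2(\xi)<0$), which is exactly the point: the bound $-\tau_L<a$ genuinely requires the expansion hypothesis. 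The paper handles this via Lemmas~\ref{le:rL}--\ref{le:C}: from $\theta_2(\xi)>0$ one gets $a>\tfrac{1-\delta_L^2-\tau_L^2}{2\tau_L}>r_L$, and then $r_L\ge-\tau_L$ when $\delta_L\ge0$; the cases $\delta_L<0$ are treated separately after first identifying which candidate realises $a$ (Lemma~\ref{le:B}). A related issue arises in your orientation-preserving case: the assertion $a=q_L$, $b=q_R$ is not immediate from the definitions --- it needs the content of Lemma~\ref{le:B} (that $\tilde a$ is the minimiser only when $\delta_L,\delta_R\le0$) to rule out $\tilde a<q_L$. Once you supply these missing ingredients, your argument goes through and coincides with the paper's.
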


To prove Proposition \ref{pr:cone} we first establish three lemmas. The first of these provides bounds on the other fixed points, $r_L$ and $r_R$, of $G_L$ and $G_R$.

\begin{lemma}
\label{le:rL}
Let $\xi \in \Phi$.
Then $r_L < \frac{1 - \delta_L^2 - \tau_L^2}{2 \tau_L}$
and $r_R > \frac{1 - \delta_R^2 - \tau_R^2}{2 \tau_R}$.
\end{lemma}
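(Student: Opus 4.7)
The plan is to interpret $r_L$ and $r_R$ as the slopes of the stable eigenvectors of $A_L$ and $A_R$ and then read the lemma off from the bound $|\lambda_J^s| < 1$ guaranteed by $\xi \in \Phi$. The bridge is the function $H_J$ from \eqref{eq:HJ}: since (as in the proof of Lemma \ref{le:Phi}) $\xi \in \Phi$ forces $\tau_L > 0$ and $\tau_R < 0$, dividing the inequality $H_L(r_L) < 0$ by $2\tau_L > 0$ gives $r_L < \frac{1 - \delta_L^2 - \tau_L^2}{2\tau_L}$, while dividing $H_R(r_R) < 0$ by $2\tau_R < 0$ flips the inequality to $r_R > \frac{1 - \delta_R^2 - \tau_R^2}{2\tau_R}$. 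So the task reduces to establishing $H_J(r_J) < 0$ for $J \in \{L,R\}$.

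To do this I would first observe that $(1,m)^\top$ is an eigenvector of $A_J$ with eigenvalue $\tau_J + m$ exactly when $m$ is a fixed point of $G_J$, so the two roots $q_J, r_J$ in \eqref{eq:qJrJ} label the two eigenvalues $\lambda_J^u, \lambda_J^s$. Substituting \eqref{eq:qJrJ} into $\tau_J + r_J$ and using $\sqrt{1 - 4\delta_J/\tau_J^2} = \sqrt{\tau_J^2 - 4\delta_J}/|\tau_J|$, the sign of $\tau_J$ dictates which eigenvalue is produced: with $\tau_L > 0$ one gets $\tau_L + r_L = \tfrac{\tau_L - \sqrt{\tau_L^2 - 4\delta_L}}{2} = \lambda_L^s$, while with $\tau_R < 0$ one gets $\tau_R + r_R = \tfrac{\tau_R + \sqrt{\tau_R^2 - 4\delta_R}}{2} = \lambda_R^s$. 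In both cases $r_J$ is the slope of an eigenvector whose eigenvalue satisfies $|\lambda_J^s| < 1$.

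Setting $v_J = (1, r_J)^\top$, this yields $A_J v_J = \lambda_J^s v_J$, and hence by \eqref{eq:HJ},
\[
H_J(r_J) = \| A_J v_J \|^2 - \| v_J \|^2 = \bigl( (\lambda_J^s)^2 - 1 \bigr) \bigl( 1 + r_J^2 \bigr) < 0,
\]
which closes the argument. The only place requiring real care is the sign bookkeeping: because $\tau_L$ and $\tau_R$ have opposite signs, the same algebraic expression for $r_J$ selects opposite branches of the quadratic formula for the eigenvalues on the left and the right, and after reaching $H_J(r_J) < 0$ the division by $2\tau_J$ reverses direction only for $J = R$. Everything else is direct substitution.
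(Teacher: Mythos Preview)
Your argument is correct and takes a genuinely different route from the paper's proof. The paper proceeds by pure inequality manipulation: from $\tau_L > |\delta_L+1|$ it squares to get $(\delta_L+1)^2 < \tau_L^2$, rewrites this as $(\delta_L-1)^2 < \tau_L^2 - 4\delta_L$, multiplies the two to obtain $(\delta_L^2-1)^2 < \tau_L^2(\tau_L^2-4\delta_L)$, and then takes a square root and rearranges into the stated bound via the formula \eqref{eq:qJrJ} for $r_L$. Your approach instead interprets $r_J$ dynamically as the slope of the stable eigendirection of $A_J$, so that $H_J(r_J) = ((\lambda_J^s)^2-1)(1+r_J^2) < 0$ is immediate from $|\lambda_J^s|<1$, and the lemma follows by dividing through by $2\tau_J$ with the appropriate sign. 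The paper's argument is more self-contained algebra; yours is shorter and explains \emph{why} the bound holds (it is exactly the statement that $A_J$ contracts along its stable eigendirection, expressed through $H_J$), and it makes transparent why the later use of the lemma in \eqref{eq:aLargerThanrL} works: $\theta_2(\xi)>0$ says $H_L(a)>0$, while $H_L(r_L)<0$, so $a>r_L$ since $H_L$ is increasing. The only delicate point---that the same formula \eqref{eq:qJrJ} for $r_J$ picks out $\lambda_L^s$ when $\tau_L>0$ but $\lambda_R^s$ when $\tau_R<0$---you handle correctly.
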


\begin{proof}
We have $\delta_L+1< \tau_L$, hence $(\delta_L+1)^2 < \tau_L^2$, and so $(\delta_L-1)^2 < \tau_L^2 - 4\delta_L$. By multiplying the last two inequalities together we obtain $(\delta_L^2 - 1)^2 < \tau_L^2(\tau_L^2 - 4\delta_L)$, so $\delta_L^2-1 < \tau_L\sqrt{\tau_L^2-4\delta_L}$ which rearranges to
$r_L < \frac{1 - \delta_L^2 - \tau_L^2}{2 \tau_L}$
using \eqref{eq:qJrJ}.
The result for $r_R$ follows similarly.
\end{proof}

\begin{lemma}
With the assumptions of Proposition \ref{pr:cone},
$a = \tilde{a}$ if and only if $\delta_L \le 0$ and $\delta_R \le 0$;
similarly $b = \tilde{b}$ if and only if $\delta_L \le 0$ and $\delta_R \le 0$.
\label{le:B}
\end{lemma}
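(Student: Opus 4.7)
The plan is a direct case analysis driven by the signs of $\delta_L$ and $\delta_R$, combined with explicit algebraic identities that express the sign of $\tilde{a}$ (and $\tilde{b}$) in closed form. The claim for $b = \tilde{b}$ is symmetric under the interchange $L \leftrightarrow R$ (which sends $\tilde{a} \mapsto \tilde{b}$, $\tau_L \mapsto \tau_R$, etc.); since both $\tau_L > 0$ and $\tau_R < 0$ flip under this swap, the sign structure of the four candidates matches perfectly, and both equivalences yield the same condition $\delta_L \le 0, \delta_R \le 0$. I will therefore focus on the statement $a = \tilde{a}$.

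Preliminary sign calculus: since $\xi \in \Phi$ forces $\tau_L > 0$ and $\tau_R < 0$, one reads off the signs of three of the four candidates directly: $-\delta_R/\tau_R$ has the same sign as $\delta_R$, $q_L$ has the opposite sign to $\delta_L$, and $0$ is always present. The key identity for the fourth is obtained by direct expansion:
\[
(\delta_L - \delta_R - \tau_L\tau_R)^2 - \theta_1(\xi) = 4\tau_L\tau_R\delta_R,
\]
which, combined with $\tau_L\tau_R < 0$ and the division by $2\tau_R < 0$ in the formula for $\tilde{a}$, shows that $\tilde{a} \le 0$ exactly when $\delta_R \le 0$ and $\delta_L - \delta_R - \tau_L\tau_R \ge 0$.

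For the backward direction I take the contrapositive. If $\delta_R > 0$, the sign analysis gives $\tilde{a} > 0$, while $0$ is a candidate, so $a \le 0 < \tilde{a}$. If instead $\delta_L > 0$ (with $\delta_R \le 0$) then $q_L < 0$; to conclude $q_L < \tilde{a}$ I will evaluate the quadratic $p(m) = \tau_R m^2 - (\delta_L - \delta_R - \tau_L\tau_R) m + \delta_R\tau_L$ at $m = q_L$, substituting the fixed-point relation $q_L^2 + \tau_L q_L + \delta_L = 0$. This collapses $p(q_L)$ to $-\delta_L(\tau_R + q_L) + \delta_R(q_L + \tau_L)$; since $p$ opens downwards ($\tau_R < 0$) and $\tilde{a}$ is the larger of its two roots, positivity of $p(q_L)$ places $q_L$ strictly below $\tilde{a}$. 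Reading off signs here uses $q_L < 0$, $\tau_R + q_L < 0$, and $q_L + \tau_L > 0$.

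For the forward direction, assume $\delta_L \le 0$ and $\delta_R \le 0$. Then $q_L \ge 0$ and $0 \ge -\delta_R/\tau_R$, so it suffices to show $\tilde{a} \le -\delta_R/\tau_R$. A parallel computation gives
\[
\tilde{a} + \frac{\delta_R}{\tau_R} = \frac{\delta_L + \delta_R - \tau_L\tau_R - \sqrt{\theta_1(\xi)}}{2\tau_R},
\]
which is non-positive iff $\delta_L + \delta_R - \tau_L\tau_R \ge \sqrt{\theta_1(\xi)}$. Squaring and using $\theta_1(\xi) = (\delta_L + \delta_R - \tau_L\tau_R)^2 - 4\delta_L\delta_R$, the squared inequality reduces to $4\delta_L\delta_R \ge 0$, which holds since both factors are non-positive. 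The main obstacle is the unsquared inequality $\delta_L + \delta_R - \tau_L\tau_R \ge 0$: this is the step that genuinely uses the membership $\xi \in \Phi_{\rm cone}$ rather than merely $\xi \in \Phi$, and it will be extracted by combining the positivity of $\theta_2(\xi)$ and $\theta_3(\xi)$ (via Lemma \ref{le:rL}-style bounds) with the saddle-condition inequalities in $\Phi$. Closing this inequality cleanly, and likewise its $L \leftrightarrow R$ analogue needed for $b = \tilde{b}$, is the one nontrivial algebraic step in the proof; everything else is sign bookkeeping.
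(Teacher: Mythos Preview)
Your Case 2 in the backward direction contains a genuine error. You compute
\[
p(q_L) = -\delta_L(\tau_R + q_L) + \delta_R(q_L + \tau_L)
\]
and claim positivity by reading off signs. But with $\delta_L > 0$ and $\delta_R \le 0$ the two summands have \emph{opposite} signs: the first is positive (since $\tau_R + q_L < 0$), while the second is non-positive (since $q_L + \tau_L > 0$). Nothing prevents the second from dominating. Concretely, take $\tau_L = 1.5$, $\delta_L = 0.1$, $\tau_R = -1.5$, $\delta_R = -0.4$ (which lies in $\Phi$ with $\theta_1(\xi) > 0$): then $q_L \approx -0.070$ and $\tilde{a} \approx -0.253$, so $q_L > \tilde{a}$ and $p(q_L) < 0$. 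The lemma still holds here, but because $-\delta_R/\tau_R \approx -0.267 < \tilde{a}$, not because of $q_L$. So your choice of witness is wrong in this regime; you need a different candidate (or a different argument) when $\delta_R$ is sufficiently negative.

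Your forward direction also has an unclosed gap: you reduce $\tilde{a} \le -\delta_R/\tau_R$ to the inequality $\delta_L + \delta_R - \tau_L\tau_R \ge 0$, flag it as ``the one nontrivial algebraic step'', and then do not prove it, merely gesturing at $\theta_2$, $\theta_3$ and Lemma~\ref{le:rL}. The paper's proof avoids both problems by working dynamically with the slope maps $G_L, G_R$ rather than with closed-form radicals. It first shows $\tilde{a} > -\tau_L$ (so $G_L(\tilde{a})$ is well-defined), then uses $\delta_L \le 0 \Rightarrow G_L \ge 0$ on $(-\tau_L,\infty)$ to get $\tilde{b} = G_L(\tilde{a}) \ge 0$, and finally uses monotonicity of $G_R$ to get $\tilde{a} = G_R(\tilde{b}) \le G_R(0) = -\delta_R/\tau_R$. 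The converse is handled similarly: $a = \tilde{a}$ forces $\tilde{a} \le 0$ and $\tilde{a} \le G_R(0)$, and the period-two relation then pins down the signs of $\delta_L, \delta_R$. This route needs only $\theta_1(\xi) > 0$ and the sign structure from $\Phi$; no appeal to $\theta_2$ or $\theta_3$ is required.
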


\begin{proof}
First suppose $\delta_L \le 0$ and $\delta_R \le 0$.
Then $\tau_L \tau_R \delta_L \ge 0$ (also $\theta_1(\xi) > 0$ by assumption),
so we can use \eqref{eq:tildeatildebagain} to obtain
\begin{equation}
2 \tau_R (\tilde{a} + \tau_L) = -\sqrt{\theta_1(\xi)} - \sqrt{\theta_1(\xi) + 4 \tau_L \tau_R \delta_L} < 0.
\nonumber
\end{equation}
Thus $\tilde{a} > -\tau_L$ (because $\tau_R < 0$).
Also $\tilde{b} < -\tau_R$ by a similar argument.

Notice $\delta_L \le 0$ implies $G_L(m) \ge 0$ for all $m > -\tau_L$,
so $q_L \ge 0$, $G_L(0) = -\frac{\delta_L}{\tau_L} \ge 0$, and $G_L(\tilde{a}) = \tilde{b} \ge 0$.
Similarly $G_R(m) \le 0$ for all $m < -\tau_R$,
so $q_R \le 0$, $G_R(0) = -\frac{\delta_R}{\tau_R} \le 0$, and $G_R(\tilde{b}) = \tilde{a} \le 0$.
Also $G_L(m)$ is non-increasing, so $\tilde{a} \le 0$ implies $G_L(\tilde{a}) \ge G_L(0)$.
Thus $\tilde{b} \ge -\frac{\delta_L}{\tau_L} \ge 0 \ge q_R$, so $b = \tilde{b}$.
Similarly $\tilde{a} \le -\frac{\delta_R}{\tau_R} \le 0 \le q_L$, so $a = \tilde{a}$.

Conversely suppose $a = \tilde{a}$.
Then $\tilde{a} = G_R(\tilde{b}) \le 0$, so $\delta_R \le 0$.
Thus $G_R(m)$ is non-increasing, so $\tilde{a} = G_R(\tilde{b}) \le G_R(0)$ implies $\tilde{b} \ge 0$.
Thus $\tilde{b} = G_L(\tilde{a}) \ge 0$, so $\delta_L \le 0$, as required
(also $b = \tilde{b}$ implies $\delta_L \le 0$ and $\delta_R \le 0$ in a similar fashion).
\end{proof}

\begin{lemma}
With the assumptions of Proposition \ref{pr:cone}, $-\tau_L < a$ and $b < -\tau_R$.
\label{le:C}
\end{lemma}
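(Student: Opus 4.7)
I would prove $-\tau_L<a$; the inequality $b<-\tau_R$ follows by the symmetric argument interchanging the roles of the two pieces of $f_\xi$ (with $\min\leftrightarrow\max$ and the signs of $\tau$ flipped). Since $a$ is the minimum of four quantities, it suffices to show each of $0$, $-\delta_R/\tau_R$, $q_L$, $\tilde a$ exceeds $-\tau_L$. The first two cases are elementary: $0>-\tau_L$ because $\xi\in\Phi$ forces $\tau_L>0$, and
\[
q_L+\tau_L=\tfrac{\tau_L}{2}\left(1+\sqrt{1-4\delta_L/\tau_L^2}\right)>0,
\]
where the radicand is nonnegative because $\tau_L>|\delta_L+1|$ implies $\tau_L^2>(\delta_L+1)^2\ge 4\delta_L$ (via $(\delta_L-1)^2\ge 0$).

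For the remaining two candidates the cone hypotheses must enter. Since $H_L$ is linear and increasing with $H_L(a)=\theta_2>0$, the value $a$ exceeds the unique root $\hat m=(1-\tau_L^2-\delta_L^2)/(2\tau_L)$ of $H_L$; when $\hat m\ge -\tau_L$ --- equivalently $\delta_L^2\le\tau_L^2+1$ --- we are done. A short case-check against $|\delta_L+1|<\tau_L$ shows the complementary regime $\delta_L^2>\tau_L^2+1$ can only occur with $\delta_L<-1$ and $|\delta_L|\in(\sqrt{\tau_L^2+1},\tau_L+1)$. In this residual regime $q_L>0$, so the assumption $a\le -\tau_L<0$ forces the minimum to be attained by $-\delta_R/\tau_R$ or $\tilde a$; in either case one concludes $\delta_R\le 0$ (directly for $-\delta_R/\tau_R\le 0$ with $\tau_R<0$), and then Lemma~\ref{le:B} places us in the subregime $\delta_L,\delta_R\le 0$ where $a=\tilde a$.

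It remains to rule out $\tilde a\le -\tau_L$ in this subregime. Introducing $u=\delta_L-\delta_R+\tau_L\tau_R$, the explicit formula for $\tilde a$ combined with $\tau_R<0$ reduces $\tilde a\le -\tau_L$ to $u\ge\sqrt{\theta_1}>0$, i.e.\ $|\delta_R|>|\delta_L|+\tau_L|\tau_R|$. Combining with the $\Phi$-bound $|\delta_R|<|\tau_R|+1$ yields $(\tau_L-1)|\tau_R|+|\delta_L|<1$, which is already contradictory when $\tau_L\ge 1$ since $|\delta_L|>\sqrt{\tau_L^2+1}>1$. For the sliver $\tau_L<1$ I would sharpen using the identity $u^2-\theta_1=4\delta_L\tau_L\tau_R$ (which gives $u\ge 2\sqrt{|\delta_L|\tau_L|\tau_R|}$ in the relevant sign regime) and, if a gap remains, invoke $\theta_3>0$ at $b\ge -\delta_L/\tau_L>1/\tau_L$ to close it. The main obstacle is precisely this cascaded case analysis in the residual regime, where no single one of $\theta_1$, $\theta_2$, $\theta_3$ suffices and they must be combined with Lemma~\ref{le:B}.
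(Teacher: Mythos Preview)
Your plan has a genuine gap: the cascaded case analysis for $\tilde a\le-\tau_L$ in the ``residual regime'' never closes, and you say so yourself (``if a gap remains, invoke $\theta_3>0$\ldots to close it''). The $\tau_L<1$ sliver you isolate is nonempty in $\Phi$, the sharpening via $u\ge 2\sqrt{|\delta_L|\tau_L|\tau_R|}$ does not by itself produce a contradiction, and the appeal to $\theta_3$ is left entirely unspecified. So as written the argument does not terminate.

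The paper takes a shorter route that sidesteps this regime entirely. Rather than bounding each of the four candidates for $a$ against $-\tau_L$, it combines $\theta_2(\xi)=H_L(a)>0$ with Lemma~\ref{le:rL} to obtain
\[
a>\frac{1-\delta_L^2-\tau_L^2}{2\tau_L}>r_L,
\]
and then splits only on the sign of $\delta_L$. When $\delta_L\ge 0$ one has $r_L+\tau_L=\tfrac{\tau_L}{2}\bigl(1-\sqrt{1-4\delta_L/\tau_L^2}\bigr)\ge 0$, hence $a>r_L\ge-\tau_L$. When $\delta_L<0$ the paper abandons the $r_L$ bound and appeals to Lemma~\ref{le:B}: if $\delta_R\ge 0$ then $a=0>-\tau_L$; if $\delta_R<0$ then $a=\tilde a$, and the inequality $\tilde a>-\tau_L$ was \emph{already established inside the proof of Lemma~\ref{le:B}} via the identity
\[
2\tau_R(\tilde a+\tau_L)=-\sqrt{\theta_1(\xi)}-\sqrt{\theta_1(\xi)+4\tau_L\tau_R\delta_L}<0.
\]
This single line is precisely what your argument is groping towards: your identity $u^2-\theta_1=4\delta_L\tau_L\tau_R$ is exactly what underlies it, but instead of chasing the $\Phi$-bounds and $\theta_3$ to force $u<0$ indirectly, the paper simply quotes the computation already done in Lemma~\ref{le:B}. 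No residual regime, no $\tau_L\lessgtr 1$ split, no $\theta_3$.
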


\begin{proof}
For brevity we just show $-\tau_L < a$ ($b < -\tau_R$ can be shown similarly). Using $\theta_2(\xi) > 0$ and Lemma \ref{le:rL} we obtain
\begin{equation}
a > \frac{1 - \delta_L^2 - \tau_L^2}{2 \tau_L} > r_L \,.
\label{eq:aLargerThanrL}
\end{equation}
Thus if $\delta_L \ge 0$ then
\begin{equation}
r_L + \tau_L = \frac{\tau_L}{2} \left( 1 - \sqrt{1 - \frac{4 \delta_L}{\tau_L^2}} \right) \ge 0,
\nonumber
\end{equation}
and so $a > -\tau_L$.
Now suppose $\delta_L < 0$.
If $\delta_R \ge 0$ then $a = 0 > -\tau_L$,
while if $\delta_R < 0$ then $a = \tilde{a}$ (by Lemma \ref{le:B})
and $-\tau_L < \tilde{a}$ as shown in the proof of Lemma \ref{le:B}.
\end{proof}

\begin{proof}[Proof of Proposition \ref{pr:cone}]
We first show $G_L(a) \ge a$ and $G_L(b) \ge a$.
If $\delta_L \le 0$ then $G_L(m) \ge 0$ for all $m \in K$
(using Lemma \ref{le:C}), so certainly $G_L(a) \ge a$ and $G_L(b) \ge a$. Now suppose $\delta_L > 0$. In this case $G_L(m) \ge m$ for all $r_L \le m \le q_L$ (i.e.~at and between the fixed points of $G_L$).
Observe $r_L < a \le q_L$ by \eqref{eq:aLargerThanrL}
and the definition of $a$, thus $G_L(a) \ge a$.
Also $G_L(m)$ is increasing thus $G_L(b) > G_L(a) \ge a$.

Next we show $G_L(a) \le b$ and $G_L(b) \le b$.
If $\delta_L \ge 0$ then we have $G_L(m) \le 0$ for all $m \in K$ (using Lemma \ref{le:C}), so certainly $G_L(a) \le b$ and $G_L(b) \le b$.
Now suppose $\delta_L < 0$. If $\delta_R \le 0$ then Lemma~\ref{le:B} implies $a = \tilde{a}$ and $b = \tilde{b} = G_L(\tilde{a})$, so $b=G_L(a)$, if $\delta_R > 0$ then $a =0$ and $b\ge G_L(0)$, so $b \ge G_L(a)$.
Also $G_L(m)$ is decreasing thus $G_L(b) < G_L(a) \le b$.

Now from Lemma~\ref{le:inv} we can conclude that $\Phi_K$ is invariant under $A_L$. Invariance under $A_R$ can be proved in a similar fashion.

Next we prove expansion. By \eqref{eq:theta2simplified}, $\theta_2(\xi) > 0$
implies $H_L(a) > 0$.
Also
\begin{equation}
H_L(b) = \tau_L^2 + \delta_L^2 - 1 + 2 \tau_L b = H_L(a) + 2 \tau_L (b - a)
\end{equation}
is positive because $\tau_L > 0$ and $b \ge a$.
Thus $\Psi_K$ is expanding for $A_L$ by Lemma \ref{le:exp}.
By a similar argument $\Psi_K$ is also expanding for $A_R$.
\end{proof}

\begin{proof}[Proof of Theorem~\ref{thm:attractor}]
Choose any $\xi \in \Phi_{\rm trap} \cap \Phi_{\rm cone}$.
By Proposition \ref{pr:trap} there exists a trapping region $\Omega_\ep$ for $f$.
Then $\bigcap_{n \ge 0} f^n(\Omega_\ep)$ is an attracting set
and contains a topological attractor $\Lambda$.
By Proposition \ref{pr:cone} there exists a non-empty cone $\Psi_K$ that is invariant and expanding
for both $A_L$ and $A_R$ with some expansion factor $c > 1$.

Choose any $z \in \Lambda$ and let $v \in \Psi_K$ be non-zero.
The Lyapunov exponent $\lambda(z,v)$ for $z$ in the direction $v$
is the limiting rate of separation of the forward orbits of $z$ and $z + \Delta v$ for arbitrarily small $\Delta > 0$ \cite{Vi14}.
If the forward orbit of $z$ does not intersect the switching manifold then
the derivative of the $n^{\rm th}$ iterate of $z$ under $f$ is well-defined for all $n \ge 1$ and
\begin{equation}
\lambda(z,v) = \limsup_{n \to \infty} \frac{1}{n} \ln \left( \left\| {\rm D} f^n(x) v \right\| \right).
\label{eq:Lyap}
\end{equation}
Observe
\begin{equation}
{\rm D} f^n(x) = {\rm D} f \left( f^{n-1}(x) \right) \cdots {\rm D} f \left( f(x) \right) {\rm D} f(x),
\nonumber
\end{equation}
where each of the $n$ matrices on the right-hand side is either $A_L$ or $A_R$.
By the invariance and expansion of $\Psi_K$, $\left\| {\rm D} f^n(x) v \right\| \ge c^n \| v \|$ for all $n$, so $\lambda(z,v) \ge \ln(c) > 0$.
If instead the forward orbit of $z$ intersects the switching manifold,
$\lambda(z,v)$ can similarly be evaluated and bounded using
one-sided directional derivatives because $f$ is piecewise-linear, see \cite{Si22e} for details.
\end{proof}

\section{Further remarks on the parameter regions $\Phi_{\rm trap}$ and $\Phi_{\rm cone}$.}
\label{sec:topologies}

In \S\ref{sec:main} we described two-dimensional cross-sections of $\Phi_{\rm trap}$ and $\Phi_{\rm cone}$ defined by fixing the values of $\tau_L > 0$ and $\tau_R < 0$.
For the most part larger values of $\tau_L$ and $|\tau_R|$ yield smaller cross-sections of $\Phi_{\rm trap}$ and larger cross-sections of $\Phi_{\rm cone}$, see  Fig.~\ref{fig:phis} and Fig.~\ref{fig:cone_regions}.
This is because with larger values of $\tau_L$ and $|\tau_R|$ the map is more strongly expanding,
hence less amenable for the existence of a trapping region
but more amenable for the existence of an invariant expanding cone.

\begin{figure}[h]
\vskip 6pt
\centering
\includegraphics[width=0.7\linewidth]{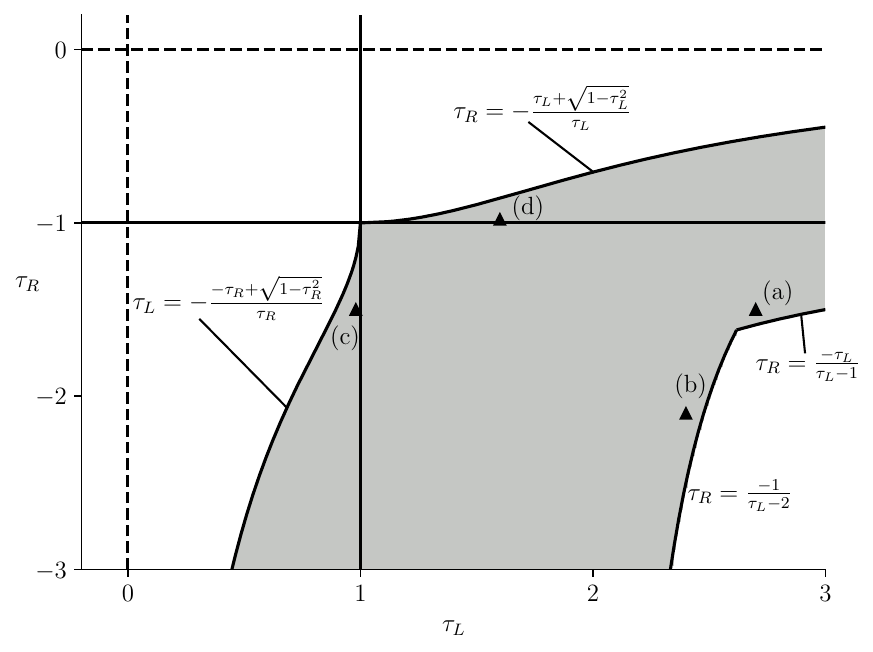}
\caption{
The shaded region shows where cross-sections of $\Phi_{\rm trap}$ and $\Phi_{\rm cone}$, defined by fixing the values of $\tau_L$ and $\tau_R$, are both non-empty. As we exit the shaded region through the lower-right boundaries the cross-section of $\Phi_{\rm trap}$ vanishes;
as we exit through the upper and left boundaries the cross-section of $\Phi_{\rm cone}$ vanishes.
}\label{fig:tltr}
\end{figure}

Fig.~\ref{fig:tltr} shows critical curves in the $(\tau_L,\tau_R)$-plane where the cross-sections vanish entirely.
To explain this figure we treat the critical curves one by one.  First consider $(\tau_L,\tau_R)$ at a point just above the critical curve $\tau_R = \frac{-\tau_L}{\tau_L-1}$.  Here the $\Phi_{\rm trap}$ cross-section has three vertices, $P^{(1)}$, $P^{(2)}$, and $P^{(3)}$, as shown in Fig.~\ref{fig:Phi_IE}-a.  It is a simple exercise to show that as parameters are varied each vertex reaches the origin $(\delta_L,\delta_R) = (0,0)$ on the critical curve.  For instance the upper vertex is where $\phi_3(\xi) = 0$ and $\phi_4(\xi) = 0$ intersect at $P^{(1)} = \left(0, \frac{(\tau_R - (\tau_R+1)\tau_L)\tau_L}{1 - \tau_L} \right)$ and solving $P^{(1)}_2 = 0$ gives $\tau_R = \frac{-\tau_L}{\tau_L - 1}$.  Thus here the $\Phi_{\rm trap}$ cross-section contracts to a point and vanishes.

With instead $(\tau_L,\tau_R)$ at a point just to the left of $\tau_R = \frac{-1}{\tau_L - 2}$, the $\Phi_{\rm trap}$ cross-section has three vertices at different points $P^{(4)}$, $P^{(5)}$, and $P^{(6)}$, as shown in Fig.~\ref{fig:Phi_IE}-b.  Explicit calculations reveal that each vertex reaches the corner $(\delta_L,\delta_R) = (\tau_L+1,\tau_R-1)$ when $\tau_R = \frac{-1}{\tau_R - 2}$.  For instance, $P^{(4)} = \left(\frac{(\tau_L\tau_R - \tau_R+1)(\tau_R-1)}{\tau_R^2}, \tau_R-1 \right)$, and solving $P^{(4)}_1 = \tau_L + 1$ gives $\tau_R = \frac{-1}{\tau_R - 2}$.  Thus here the $\Phi_{\rm trap}$ cross-section again vanishes.

\begin{figure}[h]
\begin{center}
\begin{tabular}{cc}
  \includegraphics[scale=0.41]{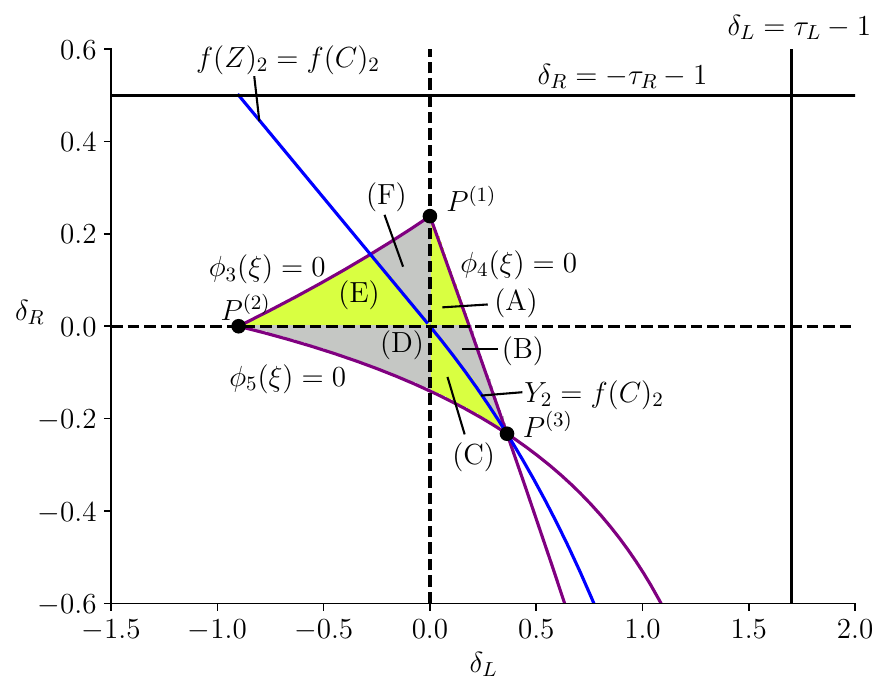} &   \includegraphics[scale=0.41]{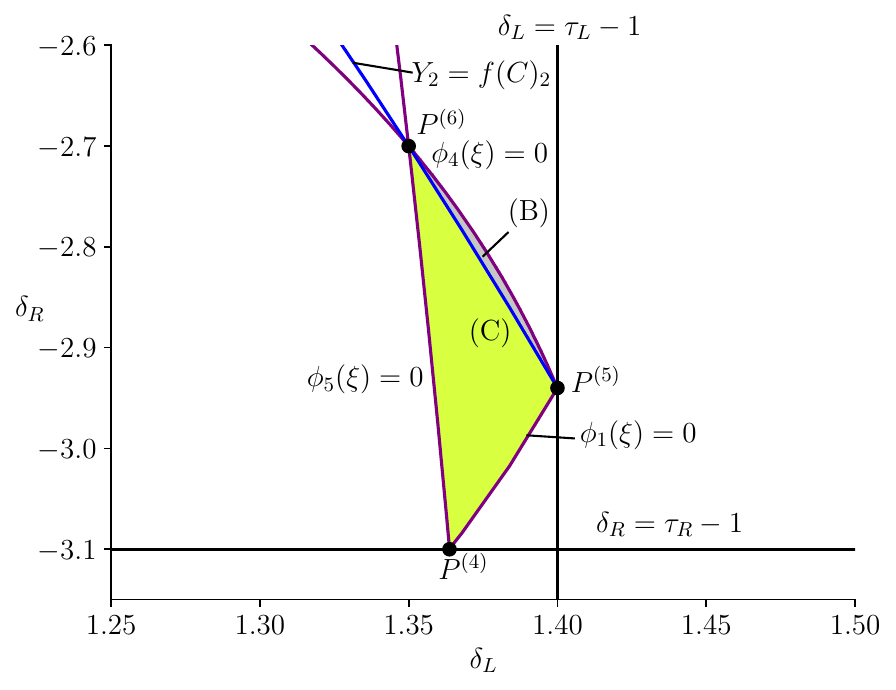} \\
(a) $\tau_L=2.7, \tau_R=-1.5$ & (b) $\tau_L=2.4, \tau_R=-2.1$ \\[6pt]
 \includegraphics[scale=0.41]{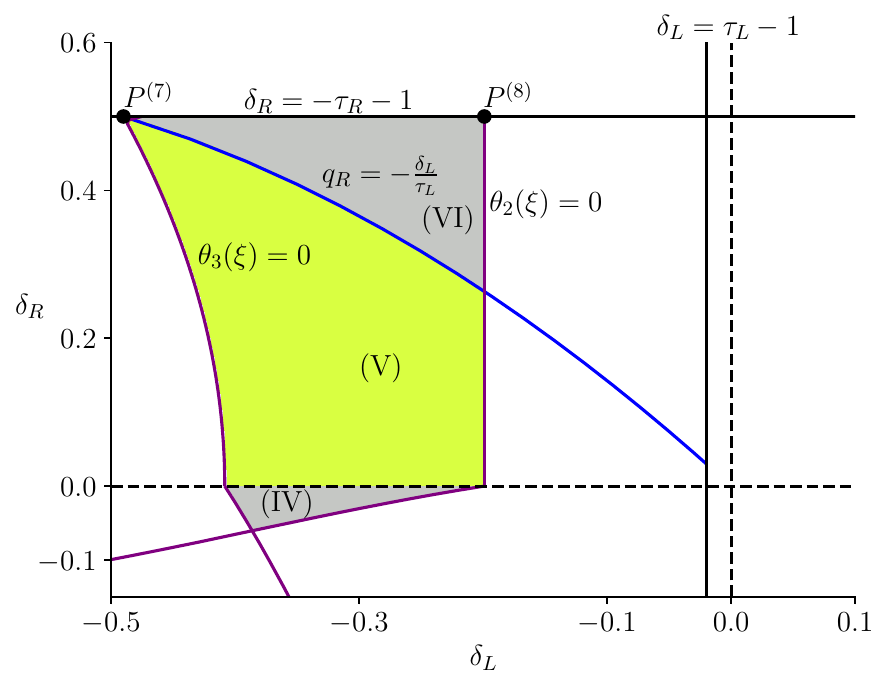} &   \includegraphics[scale=0.41]{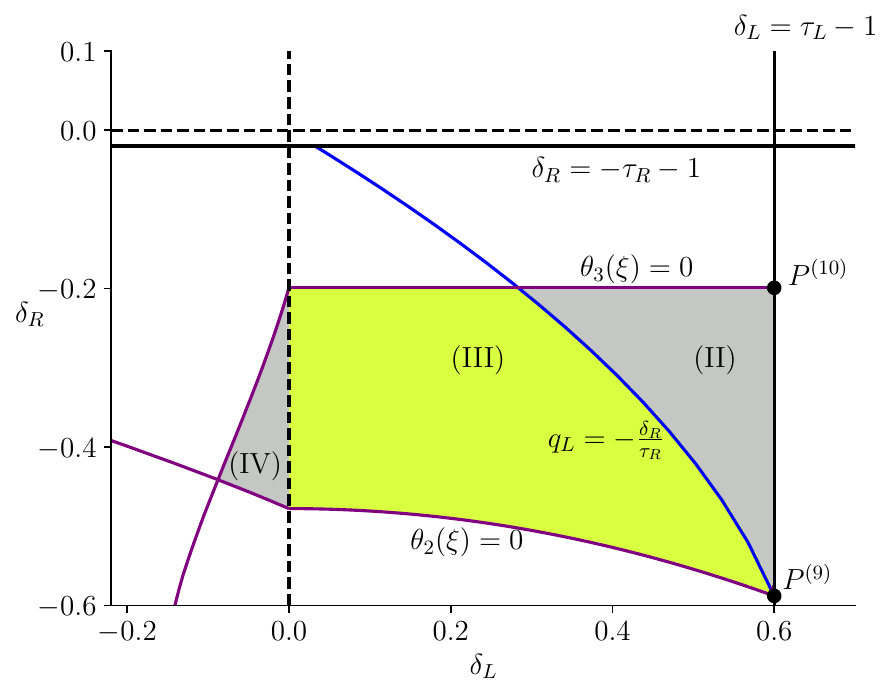} \\
(c) $\tau_L=0.98, \tau_R=-1.5$ & (d) $\tau_L=1.6, \tau_R=-0.98$ \\[6pt]
\end{tabular}
\end{center}
\caption{Further sample cross-sections of $\Phi_{\rm trap}$ (panels (a) and (b)) and $\Phi_{\rm cone}$ (panels (c) and (d)).  These use the parameter points $(\tau_L,\tau_R)$ shown in Fig.~\ref{fig:tltr}.}\label{fig:Phi_IE}
\end{figure}

With $\tau_L$ just less than $1$ and $\tau_R < -1$ the $\Phi_{\rm cone}$ cross-section appears as in Fig.~\ref{fig:Phi_IE}-c.  As parameters are varied the $\Phi_{\rm cone}$ cross-section vanishes when the vertices $P^{(7)} = \big( \tau_L(\tau_R+1), -(\tau_R+1) \big)$ and
$P^{(8)} = \big( -\sqrt{1-\tau_L^2}, -(\tau_R+1) \big)$ coincide. Solving $P^{(7)}_1 = P^{(8)}_1$ yields the critical curve $\tau_R = -\frac{\tau_L+\sqrt{1-\tau_L^2}}{\tau_L}$ of Fig.~\ref{fig:tltr}. Similarly with $\tau_L > 1$ and $\tau_R$ just greater than $-1$ the $\Phi_{\rm cone}$ cross-section appears as in Fig.~\ref{fig:Phi_IE}-d.  The cross-section vanishes when $P^{(9)}$ and $P^{(10)}$ coincide on $\tau_L = -\frac{-\tau_R+\sqrt{1-\tau_R^2}}{\tau_R}$.

\begin{figure}[h]
\vskip 6pt
\centering
\includegraphics[width=0.7\linewidth]{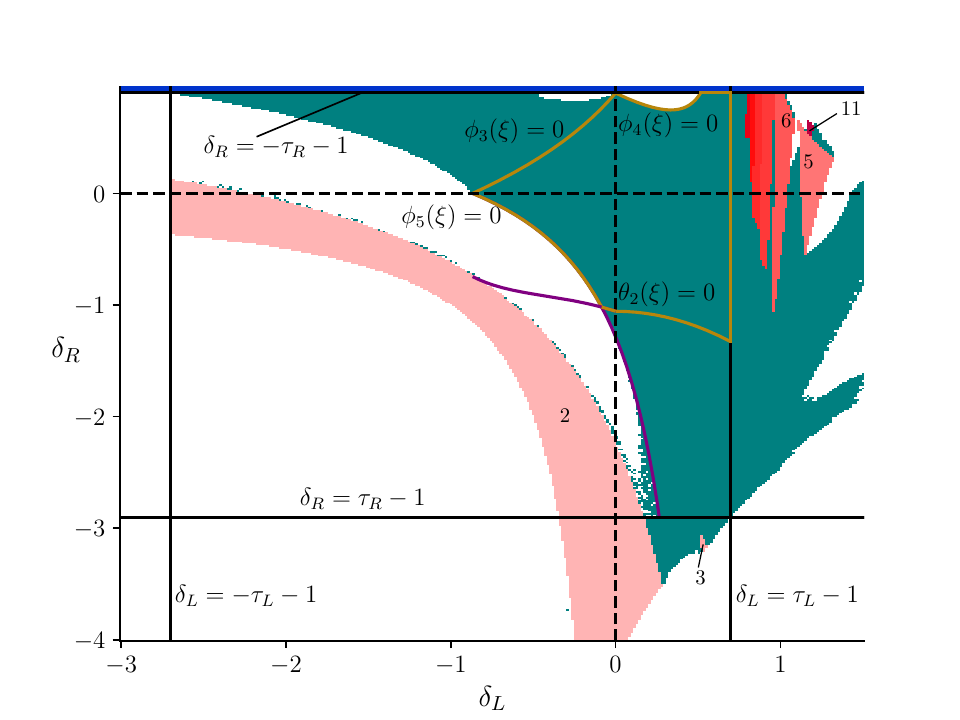}
\caption{A two-parameter bifurcation diagram of (1) with $\tau_L = 1.7$ and $\tau_R = -1.9$.  The boundary of $\Phi_{\rm trap} \cap \Phi_{\rm cone}$ is shown in yellow.  The coloured regions show the result of a numerical simulation (white: no attractor; green: chaotic attractor; other colours: periodic attractor).  Some periods are indicated (e.g.~in the large pink region the map has a stable period-$2$ solution.  In the blue strip at the top the fixed point $X$ is stable.
}\label{fig:numerics}
\end{figure}

The geometry and topology of cross-sections of $\Phi_{\rm trap} \cap \Phi_{\rm cone}$ admit many possibilities
and a complete analysis is beyond the scope of this paper.
Here we examine one example, Fig.~\ref{fig:numerics}.
Here the cross-section of $\Phi_{\rm trap} \cap \Phi_{\rm cone}$
is bounded by the following curves (going anticlockwise):
$\delta_L = \tau_L - 1$,
$\delta_R = -\tau_R - 1$,
$\phi_4(\xi) = 0$,
$\phi_3(\xi) = 0$,
$\phi_5(\xi) = 0$,
and $\theta_2(\xi) = 0$ (which has a kink at $\delta_L = 0$).
The first two of these curves are boundaries of our overall parameter region $\Phi$, the next three curves are boundaries of $\Phi_{\rm trap}$,
and the last curve is a boundary of $\Phi_{\rm cone}$.

Fig.~\ref{fig:numerics} also shows the result of a simple numerical simulation
to investigate the nature of the attractor.
For each point in a $300 \times 300$ equispaced  grid of $(\delta_L,\delta_R)$ values,
we computed $10^7$ iterates of the forward orbit using a random initial condition.
Green points are where an estimate of the maximal Lyapunov exponent was positive, white points are where the orbit appeared to diverge (its norm exceeded $10^4$), and other points are where there exists a stable periodic solution (determined by solving for periodic solutions exactly).

Note the numerical simulation gives an imperfect picture.
For example, we believe no attractor exists immediately to the left of the heteroclinic bifurcation $\phi_5(\xi) = 0$, yet some green points are present here (with small $\delta_L > 0$ and $-3 < \delta_R < -2$) because orbits often experience a long transient before diverging
and $10^7$ iterations are insufficient to detect this.

Nevertheless, the numerics effectively highlights the fact that three of the boundaries of $\Phi_{\rm trap} \cap \Phi_{\rm cone}$
are bifurcations where the chaotic attractor is destroyed, so these boundaries cannot be improved upon.
As we cross $\delta_R = -\tau_R - 1$ the fixed point $X$ becomes stable, $\phi_4(\xi) = 0$ is a homoclinic bifurcation where the attractor is destroyed \cite{BaYo98}, and $\phi_5(\xi) = 0$ is a heteroclinic bifurcation where the attractor is destroyed.
Elsewhere the attractor is destroyed in a variety of other global bifurcations.
Above $\phi_3(\xi) = 0$ and to the right of $\delta_L = \tau_L - 1$ the trapping region construction of \S\ref{sec:for_inv_trap} fails.
An alternate construction that partially deals with this is described in~\cite{Si22e}.
Below $\theta_2(\xi) = 0$ the cone $\Psi_K$ of \S\ref{sec:cone} fails to be expanding.
For some parameter combinations below $\theta_2(\xi) = 0$ it is possible to construct an invariant expanding cone, and hence verify the presence of chaos, by using an induced map \cite{GlSi22b}.

\section{Discussion}
\label{sec:dis}

We have extended the constructive proof of~\cite{GlSi21} for robust chaos in the border-collision normal form to the orientation-reversing and non-invertible settings. Specifically we identified an open parameter region $\Phi_{\rm trap}$, where the map has a trapping region, and an open parameter region $\Phi_{\rm cone}$, where it has an invariant expanding cone, see Figs.~\ref{fig:phis} and \ref{fig:cone_regions}. Throughout $\Phi_{\rm trap} \cap \Phi_{\rm cone}$ the map has an attractor with a positive Lyapunov exponent, Theorem~\ref{thm:attractor}. 

In Fig.~\ref{fig:numerics} we considered a typical two-dimensional cross-section of parameter space.  Numerical results showed robust chaos throughout $\Phi_{\rm trap} \cap \Phi_{\rm cone}$, corroborating Theorem 2.2 as expected.  The robust chaos terminates at three boundaries of $\Phi_{\rm trap} \cap \Phi_{\rm cone}$: $\delta_R = -\tau_R-1$, $\phi_4(\xi) = 0$, and $\phi_5(\xi) = 0$, and appears to persist beyond the other boundaries.  We expect our construction could be adapted to verify robust chaos beyond these boundaries, and already this has been achieved in some cases \cite{GlSi22b,Si22e}.

Notice in Fig.~\ref{fig:numerics} the cross-section of $\Phi_{\rm trap} \cap \Phi_{\rm cone}$ includes a neighbourhood of $(\delta_L,\delta_R)$.  This is the case for many values of $\tau_L$ and $\tau_R$ and is a significant achievement of this paper because it shows robust chaos is not lost as we cross from the orientation-preserving setting to the orientation-reversing and non-invertible settings.  Thus the presence of robust chaos is not dependent on the global topological properties of the map.  Moreover, this provides a path for robust chaos to be demonstrated in higher dimensional maps. The $n$-dimensional border-collision normal form \cite{Si16} can have two saddle fixed points: $X$ with an eigenvalue $\lambda_R^u < -1$, and $Y$ with an eigenvalue $\lambda_L^u > 1$.  If all other eigenvalues associated with $X$ and $Y$ are sufficiently small in absolute value (which in two dimensions means $(\delta_L,\delta_R)$ is sufficiently close to $(0,0)$), and with appropriate constraints on the values of $\lambda_R^u$ and $\lambda_L^u$, we believe the map must have a chaotic attractor, and this is a tantalising avenue for future research.

\section{Acknowledgments}
This work was supported by Marsden Fund contract MAU1809, managed by Royal Society Te Ap\={a}rangi.


\bibliographystyle{plain}
\bibliography{main}

\begin{thebibliography}{10}

\bibitem{AlPu17}
J.~Alves, A.~Pumari{\~n}o, and E.~Vigil.
\newblock Statistical stability for multidimensional piecewise expanding maps.
\newblock {\em Proc. Amer. Math. Soc.}, 145(7):3057--3068, 2017.

\bibitem{BaGr99}
S.~Banerjee and C.~Grebogi.
\newblock Border collision bifurcations in two-dimensional piecewise smooth
  maps.
\newblock {\em Phys. Rev. E}, 59(4):4052, 1999.

\bibitem{BaVe01}
S.~Banerjee and G.C. Verghese, editors.
\newblock {\em Nonlinear {P}henomena in {P}ower {E}lectronics}.
\newblock IEEE Press, New York, 2001.

\bibitem{BaYo98}
S.~Banerjee, J.A. Yorke, and C.~Grebogi.
\newblock Robust chaos.
\newblock {\em Phys. Rev. Lett.}, 80(14):3049--3052, 1998.

\bibitem{De92}
J.H.B. Deane.
\newblock Chaos in a current-mode controlled boost {DC-DC} converter.
\newblock {\em IEEE Trans. Circuits Systems I: Fund. Theory Appl.},
  39(8):680--683, 1992.

\bibitem{DeHa96}
J.H.B. Deane and D.C. Hamill.
\newblock Improvement of power supply {EMC} by chaos.
\newblock {\em Electron. Lett.}, 32(12):1045, 1996.

\bibitem{DiKo02}
M.~di~Bernardo, P.~Kowalczyk, and A.~Nordmark.
\newblock Bifurcations of dynamical systems with sliding: {D}erivation of
  normal-form mappings.
\newblock {\em Phys. D}, 170:175--205, 2002.

\bibitem{DiKo03}
M.~di~Bernardo, P.~Kowalczyk, and A.~Nordmark.
\newblock Sliding bifurcations: {A} novel mechanism for the {S}udden {O}nset of
  {C}haos in {D}ry {F}riction {O}scillators.
\newblock {\em Int. J. Bifurcation Chaos}, 13(10):2935--2948, 2003.

\bibitem{DiTs02}
M.~di~Bernardo and C.K. Tse.
\newblock Chaos in power electronics: An overview.
\newblock {\em Chaos in circuits and systems}, pages 317--340, 2002.

\bibitem{FaSi22}
H.O. Fatoyinbo and D.J.W. Simpson.
\newblock A synopsis of the non-invertible, two-dimensional, border-collision
  normal form with applications to power converters.
\newblock {\em arXiv preprint arXiv:2210.14445}, 2022.

\bibitem{GhSi22b}
I.~Ghosh and D.J.W. Simpson.
\newblock Robust {D}evaney chaos in the two-dimensional border-collision normal
  form.
\newblock {\em Chaos}, 32(4):043120, 2022.

\bibitem{Gl16e}
P.~Glendinning.
\newblock Bifurcation from stable fixed point to 2{D} attractor in the border
  collision normal form.
\newblock {\em IMA J. Appl. Math.}, 81(4):699--710, 2016.

\bibitem{Gl17}
P.~Glendinning.
\newblock Robust chaos revisited.
\newblock {\em Eur. Phys. J. Special Topics}, 226(9):1721--1738, 2017.

\bibitem{GlWo11}
P.~Glendinning and C.H. Wong.
\newblock Two dimensional attractors in the border-collision normal form.
\newblock {\em Nonlinearity}, 24(4):995--1010, 2011.

\bibitem{GlSi21}
P.A. Glendinning and D.J.W. Simpson.
\newblock A constructive approach to robust chaos using invariant manifolds and
  expanding cones.
\newblock {\em Discrete Contin. Dyn. Syst.}, 41(7):3367--3387, 2021.

\bibitem{GlSi22b}
P.A. Glendinning and D.J.W. Simpson.
\newblock Chaos in the border-collision normal form: {A} computer-assisted
  proof using induced maps and invariant expanding cones.
\newblock {\em Appl. Math. Comput.}, 434:127357, 2022.

\bibitem{GoGo18}
A.S. Gonchenko, S.V. Gonchenko, A.O. Kazakov, and A.D. Kozlov.
\newblock Elements of contemporary theory of dynamical chaos: {A} tutorial.
  {P}art {I}. {P}seudohyperbolic attractors.
\newblock {\em Int. J. Bifurcation Chaos}, 28(11):1830036, 2018.

\bibitem{GrOt83}
C.~Grebogi, E.~Ott, and J.A. Yorke.
\newblock Crises, sudden changes in chaotic attractors, and transient chaos.
\newblock {\em Phys. D}, 7:181--200, 1983.

\bibitem{KaSh21}
E.~Karatetskaia, A.~Shykhmamedov, and A.~Kazakov.
\newblock Shilnikov attractors in three-dimensional orientation-reversing maps.
\newblock {\em Chaos}, 31:011102, 2021.

\bibitem{KoLi11}
L.~Kocarev and S.~Lian, editors.
\newblock {\em Chaos-Based Cryptography. Theory, Algorithms and Applications}.
\newblock Springer, New York, 2011.

\bibitem{Ko04}
P.~Kowalczyk.
\newblock Robust chaos and border-collision bifurcations in non-invertible
  piecewise-linear maps.
\newblock {\em Nonlinearity}, 18(2):485, 2005.

\bibitem{Ku22}
P.~Kucharski.
\newblock Strange attractors for the family of orientation preserving {L}ozi
  maps.
\newblock {\em arXiv:2211.10296}, 2022.

\bibitem{KuAl16}
A.~Kumar, S.F. Ali, and A.~Arockiarajan.
\newblock Enhanced energy harvesting from nonlinear oscillators via chaos
  control.
\newblock {\em IFAC-PapersOnLine}, 49(1), 2016.

\bibitem{LiDi13}
C.~Liu, A.~Di~Falco, D.~Molinari, Y.~Khan, B.S. Ooi, T.F. Krauss, and
  A.~Fratalocchi.
\newblock Enhanced energy storage in chaotic optical resonators.
\newblock {\em Nature Photon.}, 7:473--478, 2013.

\bibitem{Lo78}
R.~Lozi.
\newblock Un attracteur {\'e}trange (?) du type attracteur de {H}{\'e}non.
\newblock {\em J. Phys. (Paris)}, 39(C5):9--10, 1978.
\newblock In French.

\bibitem{MiLi19}
P.~Miao, D.~Li, Y.~Yue, J.~Xie, and C.~Grebogi.
\newblock Chaotic attractor of the normal form map for grazing bifurcations of
  impact oscillators.
\newblock {\em Phys. D}, 398:164--170, 2019.

\bibitem{Mi80}
M.~Misiurewicz.
\newblock Strange attractors for the {L}ozi mappings.
\newblock {\em Annals of the New York Academy of Sciences}, 357(1):348--358,
  1980.

\bibitem{NuYo92}
H.E. Nusse and J.A. Yorke.
\newblock Border-collision bifurcations including “period two to period
  three” for piecewise smooth systems.
\newblock {\em Phys. D}, 57(1-2):39--57, 1992.

\bibitem{PaTa93}
J.~Palis and F.~Takens.
\newblock {\em Hyperbolicity and {S}ensitive {C}haotic {D}ynamics at
  {H}omoclinic {B}ifurcations.}
\newblock Cambridge University Press, New York, 1993.

\bibitem{Pe92}
Ya.B. Pesin.
\newblock Dynamical systems with generalized hyperbolic attractors: hyperbolic,
  ergodic and topological properties.
\newblock {\em Ergod. Th. \& Dynam. Sys.}, 12:123--151, 1992.

\bibitem{PuRo18}
A.~Pumari{\~n}o, J.A. Rodr{\'\i}guez, and E.~Vigil.
\newblock Renormalization of two-dimensional piecewise linear maps: Abundance
  of 2-{D} strange attractors.
\newblock {\em Discrete Contin. Dyn. Syst.}, 38(2):941--966, 2018.

\bibitem{PuSu06}
T.~Puu and I.~Sushko, editors.
\newblock {\em Business {C}ycle {D}ynamics: {M}odels and {T}ools}.
\newblock Springer-Verlag, New York, 2006.

\bibitem{RoHi19b}
M.G. Roberts, R.I. Hickson, J.M. McCaw, and L.~Talarmain.
\newblock A simple influenza model with complicated dynamics.
\newblock {\em J. Math. Bio.}, 78(3):607--624, 2019.

\bibitem{Ry04}
M.~Rychlik.
\newblock {\em Invariant Measures and the Variational Principle for {L}ozi
  Mappings.}, pages 190--221.
\newblock Springer, New York, 2004.

\bibitem{Si14}
D.J.W. Simpson.
\newblock Sequences of periodic solutions and infinitely many coexisting
  attractors in the border-collision normal form.
\newblock {\em Int. J. Bifurcation Chaos}, 24(06):1430018, 2014.

\bibitem{Si16}
D.J.W. Simpson.
\newblock Border-collision bifurcations in $\mathbb{R}^n$.
\newblock {\em SIAM Rev.}, 58(2):177--226, 2016.

\bibitem{Si16b}
D.J.W. Simpson.
\newblock Unfolding homoclinic connections formed by corner intersections in
  piecewise-smooth maps.
\newblock {\em Chaos}, 26(7):073105, 2016.

\bibitem{Si17c}
D.J.W. Simpson.
\newblock The structure of mode-locking regions of piecewise-linear continuous
  maps: I. {N}earby mode-locking regions and shrinking points.
\newblock {\em Nonlinearity}, 30(1):382--444, 2017.

\bibitem{Si22c}
D.J.W. Simpson.
\newblock Border-collision bifurcations from stable fixed points to any number
  of coexisting chaotic attractors.
\newblock {\em Submitted to: Discrete Contin. Dyn. Syst}, 2022.
\newblock arXiv:2207.10251.

\bibitem{Si22e}
D.J.W. Simpson.
\newblock Detecting invariant expanding cones for generating word sets to
  identify chaos in piecewise-linear maps.
\newblock {\em J. Difference Eq. Appl.}, pages 1--33, 2022.

\bibitem{SiGl21}
D.J.W. Simpson and P.A. Glendinning.
\newblock Inclusion of higher-order terms in the border-collision normal form:
  Persistence of chaos and applications to power converters.
\newblock {\em arXiv preprint arXiv:2111.12222}, 2021.

\bibitem{SzOs08}
R.~Szalai and H.M. Osinga.
\newblock Invariant polygons in systems with grazing-sliding.
\newblock {\em Chaos}, 18(2):023121, 2008.

\bibitem{Vi14}
M.~Viana.
\newblock {\em Lectures on {L}yapunov {E}xponents.}, volume 145 of {\em
  Cambridge studies in advanced mathematics}.
\newblock Cambridge University Press, Cambridge, 2014.

\bibitem{Wi88}
S.~Wiggins.
\newblock {\em Global {B}ifurcations and {C}haos: {A}nalytical {M}ethods}.
\newblock Springer, New York, 1988.

\bibitem{Yo85}
L.-S. Young.
\newblock Bowen-{R}uelle measures for certain piecewise hyperbolic maps.
\newblock {\em Trans. Amer. Math. Soc.}, 287(1):41--48, 1985.

\bibitem{ZeSp12}
E.~Zeraoulia and J.C. Sprott.
\newblock {\em Robust Chaos and its Applications.}
\newblock World Scientific, Singapore, 2012.

\bibitem{ZhMo03}
Z.T. Zhusubaliyev and E.~Mosekilde.
\newblock {\em Bifurcations and {C}haos in {P}iecewise-{S}mooth {D}ynamical
  {S}ystems}.
\newblock World Scientific, Singapore, 2003.

\end{thebibliography}

\end{document}